\begin{document}

\title{Regular Languages meet Prefix Sorting\thanks{We wish to thank Travis Gagie for introducing us to the problem and for stimulating discussions.}}

\author{Jarno Alanko\inst{1} \and Giovanna D'Agostino\inst{2} \and Alberto Policriti\inst{2} \and Nicola Prezza\inst{3}\thanks{Corresponding author. Supported by the project MIUR-SIR CMACBioSeq (``Combinatorial methods for analysis and compression of biological sequences'') grant n.~RBSI146R5L.}}
\institute{University of Helsinki, Finland, \email{jarno.alanko@helsinki.fi}
\and 
University of Udine, Italy,
\email{\{giovanna.dagostino,alberto.policriti\}@uniud.it}
\and
University of Pisa, Italy, \email{nicola.prezza@di.unipi.it}
}

\maketitle

\begin{abstract}
Indexing strings via prefix (or suffix) sorting is, arguably, one of the most successful algorithmic techniques developed in the last decades. Can indexing be extended to languages? The main contribution of this paper is to initiate the study of the sub-class of regular languages accepted by an automaton whose states can be prefix-sorted. Starting from the recent notion of \emph{Wheeler graph} [Gagie et al., TCS 2017]---which extends naturally the concept of prefix sorting to labeled graphs---we investigate the properties of \emph{Wheeler languages}, that is, regular languages admitting an accepting Wheeler finite automaton. Interestingly, we characterize this family as the natural extension of regular languages endowed with the co-lexicographic ordering: when sorted, the strings belonging to a Wheeler language are partitioned into a \emph{finite} number of co-lexicographic \emph{intervals}, each formed by elements from a single Myhill-Nerode equivalence class. We proceed by proving several results related to Wheeler automata:
\begin{enumerate}
    \item[(i)] 
    We show that every Wheeler NFA (WNFA) with $n$ states admits an equivalent Wheeler DFA (WDFA) with at most $2n-1-|\Sigma|$ states ($\Sigma$ being the alphabet) that can be computed in $O(n^3)$ time. This is in sharp contrast with general NFAs
    (where the blow-up could be exponential).
    \item[(ii)] We describe 
    a quadratic algorithm to prefix-sort a proper superset of the WDFAs, a $O(n\log n)$-time \emph{online} algorithm to sort acyclic WDFAs, and an optimal linear-time offline algorithm to sort general WDFAs. By contribution (i), our algorithms can also be used to index \emph{any} WNFA at the moderate price of doubling the automaton's size.
    \item[(iii)] We provide a minimization theorem that characterizes the smallest WDFA recognizing the same language of any input WDFA. The corresponding constructive algorithm runs in optimal linear time in the acyclic case, and in $O(n\log n)$ time in the general case. 
    \item[(iv)] We show how to compute the smallest WDFA equivalent to \emph{any acyclic DFA} in nearly-optimal time.
\end{enumerate}
Our contributions imply new results of independent interest.
Contributions (i-iii) extend the universe of known regular languages for which membership can be tested efficiently [Backurs and Indyk, FOCS 2016] and provide a new class of NFAs for which the minimization problem can be approximated within constant factor in polynomial time. In general, the NFA minimization problem does not admit a polynomial-time $o(n)$-approximation unless P=PSPACE.
Contribution (iv) is a big step towards a complete solution to the well-studied problem of indexing graphs for linear-time pattern matching queries: our algorithm provides a provably minimum-size solution for the deterministic-acyclic case.
\end{abstract}

\newpage

\clearpage
\setcounter{page}{1}

\section{Introduction}

Prefix-sorting is the process of ordering the positions of a string in the co-lexicographic order of their corresponding prefixes\footnote{Usually, the lexicographic order is used to sort string suffixes. In this paper, we use the symmetric co-lexicographic order of the string's prefixes, and extend the concept to labeled graphs.}. Once this step has been performed, several problems on strings become much easier to solve: for example, substrings can be located efficiently in the string without the need to read all of its characters. Given the versatility of this tool, it is natural trying to generalize it to more complex objects such as edge-labeled trees and graphs. For example, a procedure for lexicographically-sorting the states of a finite-state automaton could be useful to speed up subsequent membership queries in its accepting language or its substring/suffix closure; as shown by Backurs and Indyk~\cite{backurs2016regular}, membership and pattern matching problems on regular languages are hard in the general case. 
The newborn theory of \emph{Wheeler graphs}~\cite{gagie2017wheeler} provides such a generalization.
Intuitively, a labeled graph is Wheeler if and only if its nodes can be co-lexicographically sorted in a total order, i.e. pairwise-distinct nodes are ordered according to (i) their incoming labels or (when the labels are equal), according to (ii) their predecessors.
As a consequence, Wheeler graphs admit indexes for linear-time exact pattern matching queries (also known as \emph{path queries}).
Wheeler graphs generalize several lexicographically-sorted structures studied throughout the past decades: indexes on strings~\cite{manber1993suffix,weiner1973linear,ferragina2005indexing}, sets of strings~\cite{mantaci2007extension}, trees~\cite{ferragina2009compressing}, de Bruijn graphs~\cite{de1946combinatorial}, variable-order de Bruijn graphs~\cite{siren2014indexing}, wavelet trees~\cite{grossi2003high}, wavelet matrices~\cite{claude2015wavelet}. 
These efforts are part of a more general wave of interest (dating as far back as 27 years ago~\cite{manber1992approximate}) towards techniques aimed at solving pattern matching on labeled graphs~\cite{equi2019complexity,siren2014indexing,gagie2017wheeler,ferragina2009compressing,jain2019complexity,vaddadi2017sequence,rautiainen2017aligning,amir2000pattern,equi2019complexity2,manber1992approximate}. 
As discussed above, existing graph-indexing solutions can only deal with simple labeled graphs. The problem of indexing general (or even just acyclic) graphs with a solution of provably-minimum size remains unsolved. 
Unfortunately, not all graphs are Wheeler and, as Gibney and Thankachan~\cite{gibney2019hardness} have recently shown, the problem of recognizing (and sorting) them turns out to be NP-complete even when the graph is acyclic (this includes, in particular, acyclic NFAs). Even worse, not all regular languages admit an accepting Wheeler finite automaton: the set of \emph{Wheeler languages} is a proper superset of the finite languages and a proper subset of the regular languages~\cite{gagie2017wheeler}. 
Even when an index is not used, exact pattern matching on graphs is hard: Equi et al.~\cite{equi2019complexity,equi2019complexity2} have recently shown that any solution to the problem requires at least quadratic time (under the Orthogonal Vectors hypothesis), even on acyclic DFAs. In particular, this implies that converting an acyclic DFA into an equivalent Wheeler DFA cannot be done in less than quadratic time in the worst case. 


The remaining open questions, therefore, are: what are the properties of Wheeler languages? which class of automata admits polynomial-time prefix-sorting procedures? Can we efficiently build the smallest (that is, with the minimum number of states) prefix-sortable finite-state automaton that accepts a given regular language? 
These questions are also of practical relevance: as shown in~\cite{siren2014indexing}, acyclic DFAs recognizing \emph{pan-genomes} (i.e. known variations in the reference genome of a population) can be turned into equivalent WDFAs of the same expected asymptotic size. While the authors do not find the minimum such automaton, their theoretical analysis (as well as experimental evaluation) suggests that the graph-indexing problem is tractable in some real-case scenarios.

\subsection{Our Contributions}

In this paper we provide the following contributions: 

\begin{enumerate}
    \item We show that Wheeler languages are the natural version of regular languages endowed with the co-lexicographic ordering: 
    when sorted, the strings belonging to a Wheeler language are partitioned into a \emph{finite} number of  \emph{intervals}, each formed by elements from a single Myhill-Nerode equivalence class. In regular languages, those \emph{intervals} are replaced with general \emph{sets}.
    \item We show that every Wheeler NFA (WNFA) with $n$ states admits an equivalent Wheeler DFA (WDFA) with at most $2n-1-|\Sigma|$ states ($\Sigma$ being the alphabet) that can be computed in $O(n^3)$ time. This is in sharp contrast with general NFAs
    (where the blow-up could be exponential).
	\item Let $d$-NFA denote the class of NFAs with at most $d$ equally-labeled edges leaving any state. We show that the problem of recognizing and sorting Wheeler $d$-NFAs is in P for $d\leq 2$.
	A recent result from Gibney and Thankachan~\cite{gibney2019hardness} shows that the problem is NP-complete for $d\geq 5$. Our result almost completes the picture, the remaining open cases being $d=3$ and $d=4$. 
	\item We provide an online incremental algorithm that, when fed with an acyclic Wheeler DFA's nodes in any topological order, can dynamically compute the co-lexicographic rank of each new incoming node among those already processed with just logarithmic delay.
	\item We improve the running time of (4) to linear in the offline setting for arbitrary WDFAs.
	\item Given a Wheeler DFA $\mathcal A$ of size $n$,  we show how to compute, in $O(n\log n)$ time, the smallest Wheeler DFA recognizing the same language as $\mathcal A$. If $\mathcal A$ is acyclic, running time drops to $O(n)$.
	\item Given \emph{any acyclic DFA} $\mathcal A$ of size $n$, we show how to compute, in $O(n+m\log m)$ time, the smallest Wheeler DFA $\mathcal A'$, of size $m$, recognizing the same language as $\mathcal A$.
\end{enumerate}

The paper is structured in a top-down fashion to make it accessible also to non-experts. We provide the main theorems and proof sketches within the first ten pages. 
We start in Section \ref{sec:WL} with results {\bf 1} and {\bf 2}: a Myhill-Nerode theorem for Wheeler languages and a linear conversion from WNFAs to WDFAs. Result {\bf 1} shows that Wheeler languages are precisely those admitting a ``finite interplay'' between the co-lexicographic ordering and the Myhill-Nerode equivalence relation (i.e. the relation characterizing general DFAs). Result {\bf 2} implies that the WNFA minimization problem admits a polynomial-time  $2$-approximation. We remark that the NFA minimization problem is notoriously hard (even to approximate within $o(n)$-factor) in the general case~\cite{approx_min_NFA,gruber2007computational,Malcher:2004:MFA:1046130.1046139}. 
In Section \ref{sec:sorting} we describe results {\bf 3-5}: polynomial-time algorithms for recognizing and sorting Wheeler $2$-NFAs. 
%
These results generalize to labeled graphs existing prefix-sorting algorithms on strings~\cite{navarro2014optimal} and labeled trees~\cite{ferragina2009compressing} that have been previously described in the literature.
Combined with contribution {\bf 2}, our algorithms can be used to index any Wheeler NFA at the price of a moderate linear blow-up in the number of states. This result expands the universe of known regular languages for which membership and pattern matching problems can be solved efficiently~\cite{backurs2016regular}.
Contributions {\bf 6} and {\bf 7} (Section \ref{sec:minimization}) combine our sorting algorithms {\bf 3-5} with DFA minimization techniques to solve the following problem: to compute, given a finite language represented either explicitly by a set of strings or implicitly by an acyclic DFA, the smallest accepting WDFA. 
Note that this can be interpreted as a technique to index arbitrary deterministic acyclic graphs using the smallest prefix-sortable equivalent automaton.
While we do not provide a lower bound stating that the \emph{space} of our index is minimum, we note that all known fast indexes on strings, sets of strings, trees, and variable-order de Bruijn graphs are Wheeler graphs~\cite{gagie2017wheeler}. In this sense, any index on acyclic graphs improving our solution would probably require techniques radically different than those developed in the last decades to solve the indexing problem (virtually any known full-text index uses prefix sorting, including those based on LZ77~\cite{KN13}, run-length BWT~\cite{gagie2018optimal}, and grammars~\cite{CNspire12}).

\subsection{Definitions}\label{sec:definitions}

We start by giving a definition of finite-state automata that captures, to some extent, the amount of nondeterminism of the automaton. A $d$-NFA is a nondeterministic finite state automaton that has at most $d$ transitions with the same label leaving each state. Note that $1$-NFAs correspond to DFAs, while $\infty$-NFAs correspond to NFAs. 

\begin{definition}\label{def:d-NFA}
	A $d$-NFA is a quintuple $\mathcal A = (V,E,F,s,\Sigma)$, where $V$ is a set of states (or vertices), $\Sigma$ is the alphabet (or set of labels), $E\subseteq V\times V \times \Sigma$ is a set of directed labeled edges, $F\subseteq V$ is a set of accepting states,  and $s \in V$ is a start state (or source).
	We moreover require that $s\in V$ is the only node with in-degree zero and that for each $u\in V$ and $a\in\Sigma$, $|\{(u,v,a)\in E\}| \leq d$. 
\end{definition}

We denote $\sigma = |\Sigma|$.
The notation $\mathcal L(\mathcal A)$ indicates the language accepted by $\mathcal A$, i.e. the set of all strings labeling paths from $s$ to an accepting state. 
We assume that each state either is $s$ or is reachable from $s$. Otherwise, any state that cannot be reached from $s$ can be removed without changing $\mathcal L(\mathcal A)$. 
Note that we allow states with incomplete transition function, i.e. such that the set of labels of their outgoing edges does not coincide with $\Sigma$. If state $s$ misses outgoing label $a$, then any computation following label $a$ from $s$ is considered as non-accepting. In a standard NFA definition, this would be equivalent to having an outgoing edge labeled $a$ to a universal non-accepting node (a sink).
We call a $d$-NFA \emph{acyclic} when the graph $(V,E)$ does not have cycles. We say that a $d$-NFA is \emph{input-consistent} if, for every $v\in V$, all incoming edges of $v$ have the same label. If the $d$-NFA is input-consistent, we indicate with $\lambda(v)$, $v\in V$, the label of the incoming edges of $v$. For the source, we take $\lambda(s) = \# \notin \Sigma$.
We will assume that characters in $\Sigma$ are totally ordered by $\prec$  and that $\#$ is minimum. We extend $\prec$ to $\Sigma^*$ co-lexicographically,  still denoting it  by $\prec$. 
On DFAs, we denote by $succ_a(u)$, with $u\in V$ and $\ a\in\Sigma$, the unique successor of $v$ by label $a$, when it exists. 
We define the \emph{size} of an automaton to be the number of its edges. 
The notion of \emph{Wheeler graph} generalizes in a natural way the concept of co-lexicographic sorting to labeled graphs:

\begin{definition}[Wheeler Graph]\label{def_WG} 
	A triple $G=(V, E, \Sigma)$, where $V$ is a set of vertices and $E \subseteq V \times V \times \Sigma$ is a set of labeled edges, is called a \emph{Wheeler graph} if there is a total ordering $<$ on $V$ such that vertices with in-degree $0$ precede those with positive in-degree and for any two edges $(u_1,v_1,a_1), (u_2, v_2,a_2)$ we have
	(i) $a_1 \prec a_2 \rightarrow v_1 < v_2$, and (ii) $\left(a_1=a_2 \right) \wedge \left( u_1 <  u_2 \right) \rightarrow v_1 \leq v_2$.
\end{definition}

Note that the above definition generalizes naturally the concept of prefix-sorting from strings to graphs: two nodes (resp. string prefixes) can be ordered either looking at their incoming labels (resp. last characters) or, if the labels are equal, by looking at their predecessors (resp. previous prefixes).
Considering that, differently from strings and trees, a graph's node can have multiple predecessors, it should be clear that there could exist graphs whose nodes cannot be sorted due to conflicting predecessors: not all labeled graphs enjoy the Wheeler properties. 
We call a total order of nodes satisfying Definition \ref{def_WG} a \emph{Wheeler order} of the nodes and we write \emph{WDFA,WNFA} as a shortcut for \emph{Wheeler DFA, Wheeler NFA}.
By property (i), input-consistence is a necessary condition for a graph to be Wheeler. 
An important property of Wheeler graphs is \emph{path coherence}:

\begin{definition}[Path coherence \cite{gagie2017wheeler}]\label{def:path coherence} An edge-labeled directed graph $G$ is path coherent if there is a total order of the nodes such that for any consecutive range $[i,j]$ of nodes and string $\alpha$, the nodes reachable from those in $[i,j]$ in $|\alpha|$ steps by following edges whose labels form $\alpha$ when concatenated, themselves form a consecutive range.
\end{definition}
A Wheeler graph is path coherent with respect to any Wheeler order of the nodes \cite{gagie2017wheeler}. 


\section{Wheeler Languages}\label{sec:WL}

In this section we collect our basic results on regular languages accepted by automata whose transition relation is a Wheeler graph: Wheeler languages.
\begin{definition} Let $\Sigma$ be a finite set. A language $\mathcal L$ is {\em Wheeler} if $\mathcal L= \mathcal L(\mathcal A)$ for a Wheeler NFA $\mathcal A$. 
 \end{definition}
 Let us begin with some basic notation. Given a language $\mathcal L \subseteq \Sigma^*$ we denote by $ \text{\em Pref}(\mathcal L),  \text{\em Suff}(\mathcal L), $ and $ \text{\em Fact}(\mathcal L)$ the set of prefixes, suffixes, and factors  of strings in  $\mathcal L$, respectively. More formally:
 $ \text{\em Pref}(\mathcal L)   = \{\alpha: \exists \beta \in \Sigma^*~ \alpha \beta\in \mathcal L\}$, $\text{\em Suff}(\mathcal L)   = \{\beta: \exists \alpha \in \Sigma^*~ \alpha \beta\in \mathcal L\}$,  $\text{\em  Fact}(\mathcal L)   = \{\alpha: \exists \beta, \gamma \in \Sigma^*~ \gamma \alpha \beta\in \mathcal L\}$. Given two states $u,v$ of an NFA $\mathcal{A}$, we denote by $u \rightsquigarrow v$ a path from $u$ to $v$ in $\mathcal{A}$.

 \begin{definition}
 If $\mathcal A=(V,E,F,s,\Sigma)$ is an NFA, $ u\in V $, and $\alpha\in  \text{\em  Pref}(\mathcal L(\mathcal A))$, we define:
 \begin{enumerate}
\item  $V_\alpha=\{ v ~|~ \alpha \text{ labels } s \rightsquigarrow v \}$,
\item $ \text{\em  Pref}(\mathcal L(\mathcal A))_u:=\{\alpha \in  \text{\em  Pref}(\mathcal L(\mathcal A)):  \alpha \text{ labels } s \rightsquigarrow u\} $ .
\end{enumerate}
 \end{definition}
Clearly, from the above definition it follows that (i) $V_{\alpha} \subseteq V $,  (ii) $  \text{\em  Pref}(\mathcal L(\mathcal A))_{u}\subseteq \text{\em Pref}(\mathcal L(\mathcal A)) $, and (iii) 
$u \in V_{\alpha} \text{ if and only if } \alpha \in  \text{\em  Pref}(\mathcal L(\mathcal A))_{u}$.

The prefix/suffix property introduced below is going to be crucial to determine the Wheeler ordering among states---when such an ordering exists.
 \begin{definition}  
 Consider a linear order $(L, <)$.
 \begin{enumerate}
 \item An {\em interval} in  $(L, <)$  is a  $I\subseteq L$  such that
 $(\forall x,x'\in I)(\forall y \in L) (x< y< x' \rightarrow y\in I)$.
 \item Given $I,J$   intervals in  $(L, <)$ and  $I\subseteq J$, then: 
 \begin{itemize}
\item[-] $I$ is a {\em prefix} of $J$ if $(\forall x \in I )( \forall y \in J\setminus I) ( x< y)$;
\item[-]  $I$   is a    {\em suffix} of $J$ if   $(\forall y \in J \setminus I )( \forall x \in I )( y< x)$.
\end{itemize}
\item A family $\mathcal C$ of non-empty intervals in  $(L, <)$ is said to have the  {\em prefix/suffix property} if, for all $I,J \in \mathcal C$ such that  $I \subseteq J$, $I$ is either a prefix or a suffix  of $J$.
 \end{enumerate}
  \end{definition}
  
The following lemma will allow us to bound (linearly) the blow-up of the number of states taking place when moving from a WNFA to a WDFA. 
 
\begin{lemma}\label{3n}
Let $(L, <)$ be a finite linear order of cardinality $|L|=n$, let  $\mathcal C$ be a {\em prefix/suffix} family of non-empty intervals in $(L, <)$. Then:
\begin{enumerate}
    \item $|\mathcal C| \leq 2n-1$.
    \item The upper bound is tight: for every $n$, there exists a prefix/suffix family of size $2n-1$.
\end{enumerate}
\end{lemma}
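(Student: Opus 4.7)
\emph{Plan.} I would prove the upper bound by induction on $n$ and separately exhibit an attaining family. The base case $n=1$ is trivial. For the step, given a prefix/suffix family $\mathcal C$ on $L=\{1,\dots,n\}$, consider the truncated family $\mathcal C' = \{\,I\setminus\{n\} : I\in\mathcal C,\ I\setminus\{n\}\neq\emptyset\,\}$ of non-empty intervals in $\{1,\dots,n-1\}$. A short case analysis on whether $I,J\in\mathcal C$ contain $n$ shows that $\mathcal C'$ still enjoys the prefix/suffix property; the slightly delicate case is $I\ni n$ and $J\not\ni n$, where an inclusion $I\setminus\{n\}\subsetneq J$ forces $J$ to have the form $[b,n-1]$ and hence $I\setminus\{n\}$ to be a suffix of $J$.

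The heart of the argument, and the main obstacle, is to bound $|\mathcal C|-|\mathcal C'|\le 2$. Only two kinds of elements of $\mathcal C$ fail to contribute distinct intervals to $\mathcal C'$: (a) the singleton $\{n\}$, if present, whose truncation is empty, and (b) intervals in a ``coincidence'' pair $\{[a,n-1],[a,n]\}\subseteq\mathcal C$ whose members both truncate to $[a,n-1]$. The key claim is that at most one coincidence pair can exist: if two existed with parameters $a<a'$, then $[a',n-1]$ would be strictly contained in $[a,n]$ while sharing neither the left endpoint $a$ nor the right endpoint $n$, contradicting the prefix/suffix property. Hence $|\mathcal C|\le |\mathcal C'|+2\le 2(n-1)-1+2 = 2n-1$ by the inductive hypothesis.

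For tightness, I would exhibit the family $\mathcal C = \{[1,k] : 1\le k\le n\}\cup\{[k,n] : 1\le k\le n\}$, which has cardinality $2n-1$ (the interval $[1,n]$ is counted once). Within each of the two subfamilies all intervals share an endpoint and are totally nested, giving prefix or suffix inclusions by construction; the only inclusion across the two subfamilies involves $[1,n]$ as the larger interval, while every other pair across them is either disjoint or overlaps without containment, so the property holds vacuously.
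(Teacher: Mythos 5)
Your proposal is correct, and for the upper bound it takes a genuinely different route from the paper. The paper classifies the intervals of $\mathcal C$ into \emph{maximal} ones and proper \emph{prefixes}/\emph{suffixes} of maximal ones, then counts the categories separately: at most $n-1$ proper prefixes (distinct prefixes of distinct maximal intervals cannot share a right endpoint, and the globally rightmost endpoint is excluded), and at most $n$ maximal-plus-suffix intervals via an induction on the number of maximal intervals with a two-case analysis on how consecutive maximal intervals overlap. Your argument instead inducts on $n$ itself by deleting the top element, and its entire combinatorial content is concentrated in one clean observation: the truncation map $I\mapsto I\setminus\{n\}$ loses at most the singleton $\{n\}$ and at most one ``coincidence pair'' $\{[a,n-1],[a,n]\}$, since a second pair with parameter $a'>a$ would put $[a',n-1]$ strictly inside $[a,n]$ sharing neither endpoint, violating the prefix/suffix property (a non-empty proper prefix must contain the minimum of the ambient interval, a suffix its maximum). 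Your verification that the truncated family retains the prefix/suffix property is sound in all four cases, and you correctly isolate the one case ($n\in I$, $n\notin J$) where the property of $\mathcal C$ cannot be invoked directly and a direct endpoint argument is needed. The net effect is a shorter and more local proof; the paper's decomposition is heavier but exposes structural information (how many intervals of each type can occur per label class) that it reuses in Lemma~\ref{Wdeterminization} to sharpen the bound to $2n-1-|\Sigma|$. Your tightness example in part (2) is identical to the paper's.
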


 \begin{definition}
 Let $\mathcal C$ be a family of non-empty intervals of a linear order  $(L, <)$ having the prefix/suffix property. Let $<^{i}$ (or simply $ < $) the binary relation over $\mathcal C$ defined by
 \begin{align*}
 I<^{i} J & \text{ if and only if } ( \exists x \in I)( \forall y\in J )( x< y )\lor (\exists y \in J)( \forall x \in I )( x< y).
 \end{align*}
 \end{definition}
The following lemma is easily proved.
 
 \begin{lemma} \label{convex+order}  
 Let $\mathcal C$ be a family of non-empty intervals of a linear order  $(L, <)$ having the prefix/suffix property, then
 $ (\mathcal C,<^{i}) $ is a linear order.  
 \end{lemma}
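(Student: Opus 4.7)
The plan is to reformulate the relation $<^i$ in terms of the endpoints of intervals, which reduces the linear-order axioms to a small, mechanical case analysis. Since $I$ and $J$ are non-empty intervals of the linear order $L$, the first disjunct in the definition of $I <^i J$ holds iff some element of $I$ is a strict lower bound of $J$, equivalently $\min I < \min J$; the second disjunct holds iff some element of $J$ is a strict upper bound of $I$, equivalently $\max I < \max J$. Thus the crucial preliminary observation is
\[
I <^i J \ \iff \ \min I < \min J \ \ \text{or} \ \ \max I < \max J,
\]
after which the statement follows from simple arithmetic on the four numbers $\min I, \max I, \min J, \max J$.

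From this characterization, irreflexivity is immediate, and totality is easy as well: if $I \neq J$ then convexity prevents the simultaneous equalities $\min I = \min J$ and $\max I = \max J$ (they would force both intervals to coincide with $\{x \in L : \min I \leq x \leq \max I\}$), so at least one strict inequality occurs and therefore at least one of $I <^i J$, $J <^i I$ holds.

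The two properties that actually use the prefix/suffix hypothesis are antisymmetry and transitivity. For antisymmetry, assume towards a contradiction that both $I <^i J$ and $J <^i I$. Discarding the immediately contradictory sub-cases (such as $\min I < \min J$ together with $\min J < \min I$), the surviving configurations have the form $\min I < \min J$ with $\max J < \max I$, or its mirror. Convexity then places $J$ strictly inside $I$, so $J \subsetneq I$ in $\mathcal C$, and the strict inequalities at both endpoints prevent $J$ from being either a prefix or a suffix of $I$, violating the prefix/suffix property. Transitivity is handled by a four-way case split on which disjunct witnesses $I <^i J$ and which witnesses $J <^i K$: the two ``matching'' cases (both minima or both maxima) follow directly from transitivity of $<$ on $L$, while each ``crossing'' case, assuming the desired conclusion $I <^i K$ fails, forces an element of $\mathcal C$ to sit strictly inside another member of $\mathcal C$, sharing neither endpoint—again ruled out by the prefix/suffix property.

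The only real obstacle is bookkeeping the case analysis, since the prefix/suffix hypothesis is invoked in exactly the same stylised way in each non-trivial sub-case: to forbid a configuration in which one interval of $\mathcal C$ is strictly nested inside another with neither endpoint shared. Once the endpoint characterization of $<^i$ is in hand, no further ideas are needed.
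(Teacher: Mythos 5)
Your reduction of $<^i$ to the four quantities $\min I,\max I,\min J,\max J$ is the crux of your argument, and it is exactly where the proof breaks: the lemma is stated for an \emph{arbitrary} linear order $(L,<)$, and a non-empty interval of an infinite linear order need not have a minimum or a maximum. This is not a hypothetical concern in this paper. The lemma is applied to the family $I_V$ of intervals in $(\text{Pref}(\mathcal L(\mathcal A)),\prec)$, which is infinite whenever $\mathcal L(\mathcal A)$ is, and the co-lexicographic order admits intervals with no extrema --- e.g.\ the set $\{ba^n : n\geq 0\}$ satisfies $\cdots \prec baa \prec ba \prec b$ and has no minimum. The paper itself is careful on this point: immediately after the lemma it remarks that the endpoint description of $<^i$ is available only ``whenever the linear order $(L,<)$ is finite.'' So your preliminary equivalence $I <^i J \iff (\min I < \min J) \lor (\max I < \max J)$ is not available in the generality in which the lemma is stated and used, and everything downstream of it (totality, antisymmetry, and the four-way transitivity split) is only established for the finite case.

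The good news is that your case analysis is structurally the right one and matches the paper's argument in spirit: the paper proves transitivity in the ``crossing'' case by taking the existential witnesses $x_0\in I$ (a strict lower bound of $J$) and $z_0\in K$ (a strict upper bound of $J$) and showing that if $I <^i K$ failed, then $J$ would be strictly nested inside $I$ while being neither a prefix nor a suffix of it. To repair your proof, replace every occurrence of ``$\min I < \min J$'' by ``some element of $I$ is a strict lower bound of $J$'' and ``$\max I < \max J$'' by ``some element of $J$ is a strict upper bound of $I$,'' and rerun each sub-case directly with those witnesses, using convexity of the intervals to derive the forbidden strict nesting. The bookkeeping is the same; only the vocabulary has to change from endpoints to bounds.
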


 Note that whenever the linear order   $(L, <)$ is finite,  any non-empty interval $I$  has minimum $m_I$  and maximum $M_I$. In this special case, the above order  $<^{i}$ can be equivalently described on a family having the prefix/suffix property, by: 
$
I  <^{i} J  \text{ if and only if } (m_I <  m_J) \lor [ (m_I=m_J) \wedge (M_I < M_J)].$

\medskip

We now have the basics to start our study of Wheeler languages.  In this section, we use $V,E,F,s,\Sigma,<$  to denote the set of states, edges, final states, initial state, alphabet, and Wheeler order of  a generic WNFA. The key property of path-coherence will be re-proved below---in Lemma \ref{convex_sets}---, together with what we may call a  sort of its  ``dual'', that is,  the the set of strings read while reaching a given state is an interval. More precisely, if  $\mathcal A$ is a WNFA, $ u\in V $, and $\alpha\in \Sigma^*$, we have that $ V_{\alpha}$ is an interval in $ (V,<) $ ($ I_{\alpha} $, from now on), $ \text{\em  Pref}(\mathcal L(\mathcal A))_{u}$ is an interval in $(\text{\em  Pref}(\mathcal L(\mathcal A)),\prec)$ ($ I_{u} $, from now on), and
\[ \alpha \in I_u  \text{ if and only if } u \in I_\alpha.\]

Preliminary to our result are the following lemmas, exploiting the interval-structure of both Wheeler languages and automata.
 \begin{lemma}\label{prec_versus_minus} 
 If $\mathcal A$  is a   WNFA,   $u,v\in V$   are states, and $\alpha, \beta \in \text{Pref}(\mathcal L(\mathcal A))$, then:
 \begin{enumerate}
\item  if  $\alpha\in I_u, \beta\in I_v$, and  $\{\alpha,\beta\}\not\subseteq I_v\cap I_u$, 
 then $\alpha\prec \beta  \text{ if and only if } u<v$;
\item 
 if  $u\in I_\alpha, v\in I_\beta $, and  $\{u,v\}\not\subseteq I_\beta \cap I_\alpha$,
 then $\alpha\prec \beta \text{ if and only if } u<v$.
 \end{enumerate}
 \end{lemma}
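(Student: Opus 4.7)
The plan is to prove both items at once by observing that they are logically equivalent reformulations of a single statement. Since $\alpha\in I_u \iff u\in I_\alpha$ (recorded just before the lemma), the standing hypotheses in items 1 and 2 coincide; and, using $\alpha\in I_u$ and $\beta\in I_v$, the exclusion $\{\alpha,\beta\}\not\subseteq I_v\cap I_u$ rewrites as ``$\alpha\notin I_v$ or $\beta\notin I_u$'', which reads on the state side as ``$v\notin I_\alpha$ or $u\notin I_\beta$'', i.e., $\{u,v\}\not\subseteq I_\beta\cap I_\alpha$. Moreover, it is enough to prove one direction of the biconditional: the implication $\alpha\prec\beta\Rightarrow u<v$, applied after swapping $(\alpha,u)$ with $(\beta,v)$, yields the converse by trichotomy---noting that $\alpha=\beta$ is compatible with the hypothesis only when $u=v$, in which case both sides of the biconditional collapse trivially.

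I then plan to prove $\alpha\prec\beta\Rightarrow u<v$ by induction on $|\alpha|+|\beta|$. The base case $\alpha=\epsilon$ forces $u=s$ and, since $\alpha\prec\beta$ forces $\beta\neq\epsilon$, the state $v$ has positive in-degree, so $v\neq s$ and the Wheeler definition gives $s<v$. In the inductive step write $\alpha=\alpha'a$, $\beta=\beta'b$. If $a\neq b$, then $\alpha\prec\beta$ forces $a\prec b$; input-consistency (a consequence of Wheeler axiom (i)) makes $\lambda(u)=a$ and $\lambda(v)=b$, so axiom (i) applied to any pair of incoming edges of $u$ and $v$ immediately yields $u<v$.

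The substantive case is $a=b$, which reduces to $\alpha'\prec\beta'$. Here the exclusion hypothesis splits, without loss of generality, as $u\notin I_\beta$; I pick any $a$-predecessor $u'\in V_{\alpha'}$ of $u$ and any $a$-predecessor $v'\in V_{\beta'}$ of $v$, both existing because $u\in V_{\alpha'a}$ and $v\in V_{\beta'a}$. The key observation is $u'\notin V_{\beta'}$: otherwise the edge $(u',u,a)$ would extend a $\beta'$-labeled path to $u$ and place $u$ in $V_\beta$, contradicting the chosen disjunct. This automatically arranges the exclusion $\{\alpha',\beta'\}\not\subseteq I_{u'}\cap I_{v'}$, so the inductive hypothesis yields $u'<v'$; Wheeler axiom (ii) applied to $(u',u,a)$ and $(v',v,a)$ then gives $u\le v$. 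Equality $u=v$ is ruled out because it would place $u=v$ in $V_\alpha\cap V_\beta$, again contradicting the hypothesis; hence $u<v$.

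The main obstacle I anticipate is precisely this last case: making sure the exclusion hypothesis descends to the chosen predecessors. That clause is exactly what prevents path-merging through a shared intermediate node from producing ``sibling'' $a$-successors whose relative order is not constrained by the Wheeler axioms; the maneuver of picking $u'$ on the side where the exclusion already fails, rather than arbitrarily, is what keeps the induction alive.
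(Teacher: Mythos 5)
Your proof is correct and follows essentially the same route as the paper's: reduce item 2 to item 1 via the equivalence $\alpha\in I_u \Leftrightarrow u\in I_\alpha$, obtain the converse implication from the forward one by antisymmetry, and prove $\alpha\prec\beta\Rightarrow u<v$ by induction on string length, using Wheeler property (i) when the last characters differ and Wheeler property (ii) after showing that the exclusion hypothesis descends to the chosen $a$-predecessors. The only (cosmetic) difference is that the paper establishes the descent without a case split---if both $\alpha'\in I_{v'}$ and $\beta'\in I_{u'}$ held, then both $\alpha\in I_v$ and $\beta\in I_u$ would follow, contradicting the original exclusion---whereas you split on which disjunct fails; both are sound.
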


Let $ I_{V} =\{I_u:~ u\in V\}$ and  $I_ {\text{\em Pref}(\mathcal L(\mathcal A))}  =\{I_\alpha :~\alpha\in \text{\em Pref}(\mathcal L(\mathcal A)) \}$.

  \begin{lemma}\label{convex_sets}  
 If $\mathcal A$  is a  WNFA  and $\mathcal L = \mathcal L(\mathcal A)$, then:
 \begin{enumerate}
 \item  for all  $u\in V$,  the set  $I_u$  is  an interval  in $( \text{Pref}(\mathcal L(\mathcal A)), \prec)$; 
 \item $ I_{V}$ is  a prefix/suffix family  of intervals   in  $( \text{Pref}(\mathcal L(\mathcal A)), \prec)$;
 \item  for all  $\alpha \in \text{Pref}(\mathcal L(\mathcal A))$,  the set  $I_\alpha$    is  an interval  in $(V,<)$; 
 \item $I_ {\text{Pref}(\mathcal L(\mathcal A))}$  is a prefix/suffix family  of intervals in $(V,<)$.
 \end{enumerate}
 \end{lemma}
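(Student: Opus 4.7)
The plan is to prove parts (1)--(4) by alternating between an \emph{interval} claim and the corresponding \emph{prefix/suffix} claim, and to handle the two ``sides'' (states vs.\ prefixes) by symmetric arguments, with Lemma \ref{prec_versus_minus} playing the central role.

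First I would prove part (3), the classical path-coherence statement, by induction on $|\alpha|$. The base case $\alpha = \varepsilon$ gives $V_\varepsilon = \{s\}$, which is trivially an interval. For the inductive step write $\alpha = \beta a$, assume $V_\beta$ is an interval in $(V,<)$, and pick $v_1 < v_2 < v_3$ with $v_1,v_3 \in V_{\beta a}$. Wheeler property~(i) rules out any incoming label on $v_2$ different from $a$: any such label would force $v_2 < v_1$ or $v_3 < v_2$. So $v_2$ has some incoming edge $(u_2, v_2, a)$. Picking witnesses $(u_1, v_1, a)$ and $(u_3, v_3, a)$ with $u_1, u_3 \in V_\beta$, Wheeler property~(ii) applied to each pair forces $u_1 \le u_2 \le u_3$; since $V_\beta$ is an interval, $u_2 \in V_\beta$, whence $v_2 \in V_{\beta a}$. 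This is the main obstacle of the whole lemma: it is the only step that needs both Wheeler properties simultaneously.

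Next I would prove part (1), the ``dual'' of path coherence. Fix $u$ and suppose $\alpha, \beta \in I_u$ with $\alpha \prec \gamma \prec \beta$ and $\gamma \in \text{Pref}(\mathcal{L}(\mathcal{A}))$. I must show $u \in V_\gamma$. Since $\gamma \in \text{Pref}(\mathcal{L}(\mathcal{A}))$, there exists some $v \in V_\gamma$; if $v=u$ we are done, so assume $v \neq u$. Apply Lemma \ref{prec_versus_minus}(1) twice: once to $(\alpha,\gamma)$ with states $(u,v)$, and once to $(\gamma,\beta)$ with states $(v,u)$. In each application, either $\{\alpha,\gamma\} \subseteq I_u \cap I_v$ (or $\{\gamma,\beta\} \subseteq I_u \cap I_v$), which immediately yields $u \in V_\gamma$, or we extract the strict inequality $u<v$ from the first and $v<u$ from the second, a contradiction. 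Hence $u \in V_\gamma$, i.e.\ $\gamma \in I_u$.

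For part (2), let $u \neq w$ with $I_u \subseteq I_w$; without loss of generality $u < w$ in the Wheeler order. Take any $\alpha \in I_u$ and any $\beta \in I_w \setminus I_u$. Then $\alpha \in I_u \cap I_w$ but $\beta \notin I_u$, so $\{\alpha,\beta\} \not\subseteq I_u \cap I_w$ and Lemma \ref{prec_versus_minus}(1) yields $\alpha \prec \beta \iff u < w$, hence $\alpha \prec \beta$. So every element of $I_u$ precedes every element of $I_w \setminus I_u$, meaning $I_u$ is a prefix of $I_w$. Part (4) is symmetric using Lemma \ref{prec_versus_minus}(2): given $I_\alpha \subseteq I_\beta$ with $\alpha \neq \beta$, trichotomy on $\prec$ lets me fix the direction and then, for any $u \in I_\alpha$ and $v \in I_\beta \setminus I_\alpha$, the hypothesis $v \notin I_\alpha$ delivers $\{u,v\} \not\subseteq I_\alpha \cap I_\beta$, so the lemma yields $u < v$ (if $\alpha \prec \beta$) or $v < u$ (if $\beta \prec \alpha$), making $I_\alpha$ a prefix or suffix of $I_\beta$ respectively.
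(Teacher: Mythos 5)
Your proof is correct. On parts (1), (2) and (4) you are on the paper's track: the paper proves (1) by picking a witness state $v$ for the middle string and deriving $u<v<u$ from two applications of Lemma~\ref{prec_versus_minus}, exactly as you do; it proves (2) as the contrapositive of your direct argument (assuming $I_u$ is neither a prefix nor a suffix of $I_v$ and extracting $v<u<v$); and it dispatches (4) as ``similar'', which your symmetric use of Lemma~\ref{prec_versus_minus}(2) fills in as intended. The genuine divergence is part (3). The paper again points to Lemma~\ref{prec_versus_minus}, which would mean: given $v_1<v_2<v_3$ with $v_1,v_3\in I_\alpha$, exhibit some $\beta\in\text{Pref}(\mathcal L(\mathcal A))$ with $v_2\in I_\beta$ and derive $\alpha\prec\beta\prec\alpha$. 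You instead re-prove path coherence from first principles by induction on $|\alpha|$, using Wheeler property (i) to force the incoming label of $v_2$ to equal $a$ and property (ii) to sandwich a predecessor $u_2$ of $v_2$ between $u_1$ and $u_3$ inside the interval $V_\beta$. Your route is slightly longer but more self-contained and a bit more robust: the lemma-based route needs the middle state to lie in some $I_\beta$ with $\beta\in\text{Pref}(\mathcal L(\mathcal A))$, i.e.\ to be reachable by a string extendable to an accepted word, whereas your induction works for an arbitrary middle node of $V$ under the paper's standing assumption of reachability from $s$ alone. The only detail worth adding to your write-up of (3) is the observation that $v_2\neq s$: the source has in-degree zero, hence would precede $v_1$ in any Wheeler order, and this is what guarantees that $v_2$ has an incoming edge to which properties (i) and (ii) can be applied. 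That is a one-line remark, not a gap.
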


From  Lemma \ref{convex+order} it  follows that   both $(I_{v}, \prec^{i})$  and  $(I_ {\text{\em Pref}(\mathcal L(\mathcal A))}, <^{i})$  are linear orders.   The link between such orders is made explicit below.

\begin{lemma}\label{compare} Consider $ I_{u},I_{v}\in I_{V} $ and $ I_{\alpha}, I_{\beta} \in I_ {\text{Pref}(\mathcal L(\mathcal A))}$.  
\begin{enumerate}
\item $ I_u  \prec^{i} I_v $ implies that $ u < v $ and  $ u < v $ implies that  $ I_u  \preceq^{i} I_v $
\item $ I_{\alpha} <^{i} I_{\beta} $ implies that $ \alpha \prec  \beta $ and  $ \alpha \prec  \beta $ implies that  $ I_{\alpha} \leq^{i} I_{\beta} $
\end{enumerate}
\end{lemma}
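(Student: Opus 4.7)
The two items are perfectly symmetric (they just swap the roles of states/strings and of $<$/$\prec$), so I would write out item 1 in detail and observe that item 2 follows by the same argument, substituting the second clause of Lemma \ref{prec_versus_minus} for the first. I will therefore concentrate on item 1.

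For the forward direction ``$I_u \prec^i I_v \Rightarrow u<v$'', I would unfold the definition of $\prec^i$ into its two disjuncts. In the first case there is some $\alpha\in I_u$ with $\alpha\prec\beta$ for every $\beta\in I_v$; in the second there is some $\beta\in I_v$ with $\alpha\prec\beta$ for every $\alpha\in I_u$. In either case I pick an arbitrary element of the other interval to produce a witness pair $(\alpha,\beta)\in I_u\times I_v$. The key observation, and the main (small) obstacle, is verifying the side condition $\{\alpha,\beta\}\not\subseteq I_u\cap I_v$ needed to apply Lemma \ref{prec_versus_minus}(1). In the first case $\alpha\notin I_v$, since $\alpha\prec\alpha$ is impossible; in the second, $\beta\notin I_u$ for the same reason. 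Lemma \ref{prec_versus_minus}(1) then yields $u<v$.

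For the backward direction ``$u<v \Rightarrow I_u \preceq^i I_v$'', I would argue by contradiction using that $\prec^i$ is a linear order on $I_V$ (Lemma \ref{convex+order}). If $I_u \not\preceq^i I_v$, then $I_u\neq I_v$ and, by linearity, $I_v \prec^i I_u$. Applying the forward direction just proved (with the roles of $u$ and $v$ swapped) gives $v<u$, contradicting $u<v$.

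Item 2 is then handled by the identical template, with $(V,<)$ in place of $(\text{Pref}(\mathcal L(\mathcal A)),\prec)$, appealing to Lemma \ref{prec_versus_minus}(2) instead of (1), and to the fact that $<^i$ is a linear order on $I_{\text{Pref}(\mathcal L(\mathcal A))}$. No new ideas are needed beyond the side-condition check and the linear-order contradiction argument.
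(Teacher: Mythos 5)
Your proof is correct. The paper itself states Lemma \ref{compare} without proof (the appendix proves the surrounding lemmas but omits this one), and your argument is exactly the natural one the authors evidently had in mind: unfold the definition of $\prec^{i}$, check the side condition $\{\alpha,\beta\}\not\subseteq I_u\cap I_v$ via the irreflexivity of the underlying strict order so that Lemma \ref{prec_versus_minus} applies, and obtain the converse implication from the linearity of $\prec^{i}$ on the prefix/suffix family (Lemmas \ref{convex_sets} and \ref{convex+order}). The symmetric treatment of item 2 via Lemma \ref{prec_versus_minus}(2) is likewise fine.
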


If $\mathcal A$  is a   WNFA we can prove that the following interval construction---which is the analogous of the power-set construction for NFAs---allows determinization.
\begin{definition}
 If $\mathcal A$  is a   WNFA  we define  its (Wheeler) \emph{determinization} as the automaton ${\mathcal A^{d}}=(V^{d}, E^{d},F^{d} s^{d},<^{d}, \Sigma)$, where:
 \begin{itemize}
     \item[-] $V^{d}=I_{\text{ Pref}(\mathcal L(\mathcal A))}$;
     \item[-] $s^{d}= I_{\epsilon}=\{s\}$
     \item[-] $F^{d}=\{I_\alpha ~|~ \alpha \in \mathcal L(\mathcal A)\}$;
     \item[-] $E^{d}$ is the set of triples $(I_\alpha,I_{\alpha e}, e)$, for all  $e\in \Sigma$ and $\alpha e\in \text{Pref}(\mathcal L(\mathcal A))$;
   \item[-] $<^{d}=<^{i}$. 
 \end{itemize} 
\end{definition}

The bound proved in Lemma \ref{3n} can be sligthly improved in the special case of prefix/suffix families corresponding to WNFA intervals.
\begin{lemma}[Determinization of a Wheeler NFA]\label{Wdeterminization}
 If $\mathcal A$  is a   WNFA with $n$ states over an alphabet $\Sigma$,  then $\mathcal A^{d}$ is a WDFA  with at most $2n-1-|\Sigma|$ states, and  $ \mathcal L(\mathcal A^{d}) = \mathcal L(\mathcal A)$. 
\end{lemma}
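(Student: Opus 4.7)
The plan is to verify four things in turn: (a) $\mathcal A^{d}$ is a well-defined DFA, (b) $\mathcal L(\mathcal A^{d})=\mathcal L(\mathcal A)$, (c) $<^{d}$ endows $\mathcal A^{d}$ with the Wheeler properties, and (d) $|V^{d}|\leq 2n-1-|\Sigma|$. For (a), the key observation is that $V_{\alpha}$ alone determines $V_{\alpha e}$ (as the set of $e$-successors of the former in $\mathcal A$), so the transition $(I_{\alpha},I_{\alpha e},e)$ does not depend on which representative $\alpha$ is picked for the state $I_{\alpha}$, and each state emits at most one edge per label. For (b), a routine induction on $|\gamma|$ shows that $\gamma$ labels a path $s^{d}\rightsquigarrow I_{\gamma}$ iff $\gamma\in\text{Pref}(\mathcal L(\mathcal A))$; since $I_{\gamma}\in F^{d}$ is a property of the set $V_{\gamma}$ (it holds iff $V_{\gamma}\cap F\neq\emptyset$), accepting paths of $\mathcal A^{d}$ correspond exactly to accepted strings of $\mathcal A$.

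For (c), I first observe that $I_{\epsilon}=\{s\}$ has in-degree zero (no prefix has the form $\alpha e$ equal to $\epsilon$) and is $<^{i}$-minimal (because $s$ is the minimum of $(V,<)$ and belongs to $V_{\gamma}$ only for $\gamma=\epsilon$). Input-consistency is inherited from $\mathcal A$: every non-source state has the form $I_{\gamma a}$, and since $\mathcal A$ is input-consistent every node of $V_{\gamma a}$ carries incoming label $a$, hence so does $I_{\gamma a}$ in $\mathcal A^{d}$. Wheeler property (i) then follows because two intervals $I_{\alpha a},I_{\beta b}$ with $a\prec b$ are drawn from the disjoint blocks of nodes with labels $a$ and $b$, blocks which Wheeler property (i) of $\mathcal A$ separates in $(V,<)$. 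The crucial step is Wheeler property (ii): given edges $(I_{\alpha},I_{\alpha a},a)$ and $(I_{\beta},I_{\beta a},a)$ in $\mathcal A^{d}$ with $I_{\alpha}<^{i}I_{\beta}$, I plan to deduce $I_{\alpha a}\leq^{i}I_{\beta a}$ by chaining the two halves of Lemma \ref{compare} with the elementary fact that appending the same character preserves the co-lexicographic order: $I_{\alpha}<^{i}I_{\beta}\Rightarrow\alpha\prec\beta\Rightarrow\alpha a\prec\beta a\Rightarrow I_{\alpha a}\leq^{i}I_{\beta a}$. This is the step where a direct combinatorial argument via path-coherence alone would be awkward, and where Lemma \ref{compare} is the essential lever.

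For the refined size bound (d) I would partition $V^{d}\setminus\{I_{\epsilon}\}$ according to the (now well-defined) incoming label of each state. For every $a\in\Sigma$ actually occurring in $\mathcal A$, the states with incoming label $a$ are intervals of $V$ contained in $V_{a}=\{v\in V:\lambda(v)=a\}$, and the induced subfamily still satisfies the prefix/suffix property by Lemma \ref{convex_sets}; Lemma \ref{3n} applied inside the restricted order $(V_{a},<)$ then caps their number by $2|V_{a}|-1$. Summing over $a$ and using $\sum_{a\in\Sigma}|V_{a}|=n-1$ (every non-source node has a unique incoming label) gives $|V^{d}|\leq 1+\sum_{a\in\Sigma}(2|V_{a}|-1)=1+2(n-1)-|\Sigma|=2n-1-|\Sigma|$, under the harmless convention that every $a\in\Sigma$ actually labels some edge of $\mathcal A$.
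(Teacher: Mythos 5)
Your proof is correct and follows essentially the same route as the paper's: the Wheeler properties are verified via Lemma \ref{compare} (with the same chain $I_{\alpha}<^{i}I_{\beta}\Rightarrow\alpha\prec\beta\Rightarrow\alpha a\prec\beta a\Rightarrow I_{\alpha a}\leq^{i}I_{\beta a}$ for property (ii)), and the size bound is obtained exactly as in the paper by partitioning the non-initial states by incoming label and applying Lemma \ref{3n} within each block. Your additional well-definedness check in (a) and the explicit convention that every letter of $\Sigma$ occurs are welcome clarifications that the paper leaves implicit.
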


 \begin{lemma}[Computing the determinization]\label{ComputeWdeterminization}
 If $\mathcal A$  is a   WNFA with $n$ states,  then $\mathcal A^{d}$ can be computed in $O(n^3)$ time.
\end{lemma}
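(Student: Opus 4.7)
The plan is to construct $\mathcal A^{d}$ by a breadth-first exploration of $V^{d} = I_{\text{\em Pref}(\mathcal L(\mathcal A))}$, starting from $I_{\epsilon} = \{s\}$, representing each discovered state compactly as an interval in the Wheeler order $(V,<)$. By Lemma \ref{convex_sets}(3), every $I_\alpha \in V^{d}$ is such an interval, so I would store it as the pair $(\min I_\alpha, \max I_\alpha)$. Two structural facts keep the instance small: first, since input-consistence is necessary for a Wheeler graph, every edge into $v$ must carry the unique label $\lambda(v)$, so at most one edge exists between any ordered pair of nodes, giving $|E| \leq n^{2}$ and $|\Sigma| \leq n$; second, by Lemma \ref{Wdeterminization}, $|V^{d}| = O(n)$.

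I would maintain a queue of intervals still to be processed and a dictionary keyed by the pair $(\min, \max)$. When a state $I_\alpha$ is dequeued, for every letter $e \in \Sigma$ I would compute its $e$-successor as follows: iterate over all $u \in I_\alpha$ and all edges $(u,v,e) \in E$, recording the minimum and maximum target $v_{\min}$ and $v_{\max}$ in the Wheeler order. By the path-coherence property (recalled right after Definition \ref{def:path coherence}), the set $V_{\alpha e}$ of $e$-successors of $I_\alpha$ is itself an interval of $(V,<)$, and therefore coincides with $[v_{\min}, v_{\max}] = I_{\alpha e}$. If this interval is non-empty and not already in the dictionary, I would register it as a new state and enqueue it; in either case, I would insert the triple $(I_\alpha, I_{\alpha e}, e)$ into $E^{d}$. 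Accepting states are then marked by scanning each $I_\alpha$ once to test $I_\alpha \cap F \neq \emptyset$. The Wheeler order $<^{d} = <^{i}$ on the resulting states is induced by the stored $(\min, \max)$ pairs via the explicit description given after Lemma \ref{convex+order}.

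For the running time, processing one state $I_\alpha$ across all letters examines every outgoing edge from every node of $I_\alpha$, which is $O\!\left(\sum_{u \in I_\alpha} \deg_{\text{out}}(u)\right) = O(n^{2})$ in the worst case because $|E| \leq n^{2}$; summing over the $O(n)$ states of $\mathcal A^{d}$ yields $O(n^{3})$. Dictionary lookups on the $(\min,\max)$ keys add at most an $O(\log n)$ factor (or $O(1)$ with hashing over the $O(n)$-range coordinates) and are absorbed in the total. The main conceptual step—not an obstacle so much as the key enabler—is the invocation of path coherence: it is exactly what lets me represent $V_{\alpha e}$ by its two extremes and skip a node-by-node membership verification that would otherwise blow up the complexity.
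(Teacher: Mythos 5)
Your proposal is correct in substance but follows a genuinely different route from the paper. The paper's proof is deliberately low-tech: it runs the \emph{standard} powerset construction, representing each state of $\mathcal A^{d}$ as an explicit set of up to $n$ NFA states, and obtains $O(n^3)$ purely by accounting --- $O(n^2)$ per simulated DFA edge (computing a union of $a$-successors and an $O(n)$ dictionary lookup on a set of size $O(n)$) times the $O(n)$ states guaranteed by Lemma~\ref{Wdeterminization}. Crucially, the paper's algorithm never touches the Wheeler order $<$ of $\mathcal A$; the order is used only \emph{existentially}, to invoke the $2n-1-|\Sigma|$ bound on the number of distinct reachable subsets. Your algorithm instead exploits Lemma~\ref{convex_sets}(3) and path coherence to represent each subset $V_{\alpha}$ by its two endpoints in $(V,<)$, which is elegant and also yields the finer bound $O(n\cdot|E|)$, but it presupposes that the Wheeler order on $V$ is part of the input: you need $<$ both to form the $(\min,\max)$ keys and to enumerate the nodes lying between them. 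Under the conventions of Section~\ref{sec:WL} (where a WNFA carries its order $<$, and indeed $<^{d}=<^{i}$ is part of the output) this is a legitimate assumption, but it is worth being explicit about it, since for a WNFA given without its order the sorting step is NP-hard in general (Gibney--Thankachan), whereas the paper's set-based construction still goes through and the order of the resulting WDFA can then be recovered in linear time via Theorem~\ref{thm:DFA linear}. One last presentational nit: your per-state cost analysis implicitly assumes you bucket the outgoing edges of $I_\alpha$ by label in a single pass rather than looping over all of $\Sigma$ separately for each letter; stated as a literal double loop it would incur an extra $|\Sigma|$ factor.
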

  
Lemma \ref{Wdeterminization} above--- saying that we can restrict the automata recognizing Wheeler Languages to deterministic ones without an exponential blow up---marks a difference between the standard and the Wheeler case for regular languages and can be seen as the first step in the study of Wheeler Languages. Further differences can be observed. For example, the reader can check that the language $ \mathcal L(\mathcal A)=\mathcal L= b^{+}a $ is accepted by \emph{incomplete} WDFAs only. 

The subsequent step to take in a theory of Wheeler Languages is a Myhill-Nerode like theorem for this class.  To this end, we define:

 \begin{definition}  \label{right_inv} Given a language $\mathcal L\subseteq \Sigma^*$, an equivalence relation $\sim$ over $\text{\em Pref}(\mathcal L)$    is:
\begin{itemize}
\item[-]   {\em right invariant}, when for all  $\alpha, \beta \in \text{\em Pref}(\mathcal L)$ and $\gamma\in \Sigma^*$: 
 if $ \alpha \sim \beta$ and  $\alpha\gamma \in \text{\em Pref}(\mathcal L)$, then   $\beta\gamma \in \text{\em Pref}(\mathcal L)$ and $ \alpha\gamma \sim \beta\gamma$;
\item[-] {\em convex} if   $\sim$-classes are intervals of $(\text{\em Pref}(\mathcal L),\prec)$;
\item[-]   {\em input consistent} if all  words belonging to the same $\sim$-class end with the same letter. 
\end{itemize}
\end{definition}

Consider a  language $\mathcal L\subseteq \Sigma^*$. The Myhill-Nerode equivalence  $\equiv_{\mathcal L}$ among words in $  Pref(L)$  is defined as   
\begin{align*}
\alpha \equiv_{\mathcal L} \beta \text{ if and only if } & (\forall \gamma \in \Sigma^{*}) ( \alpha \gamma \in {\mathcal L} \Leftrightarrow \beta \gamma \in {\mathcal L} ). 
\end{align*}
\begin{definition}
The input consistent, convex refinement of $ \equiv_{\mathcal L} $ is defined as follows. 

For all $\alpha, \beta \in  \text{Pref}(\mathcal L)$:
\begin{align*}
\alpha\equiv_{\mathcal L}^{c}\beta  \text{ if and only if } &  \alpha \equiv_{\mathcal L} \beta \wedge ~end(\alpha)=end(\beta)  \wedge (\forall    \gamma \in  \text{\em  Pref}(\mathcal L))  (\alpha\prec \gamma \prec \beta \rightarrow \gamma \equiv_{\mathcal L} \alpha),
\end{align*}
where $end(\alpha)$ is the final character of $\alpha$ when $\alpha\neq\epsilon$, and $ \epsilon $ otherwise. 
\end{definition}
Using  the above results in this section we can prove:

\begin{theorem}[Myhill-Nerode for Wheeler Languages]  \label{Myhill-Nerode} 
Given a language $\mathcal L\subseteq \Sigma^*$, the  following are equivalent:
\begin{enumerate}
\item $\mathcal L$ is a Wheeler language (i.e. $\mathcal L$  is recognized by a  WNFA).
\item $\equiv_{\mathcal L}^{c}$ has  finite index.
\item $\mathcal L$ is a union of   classes of a convex, input consistent, right invariant  equivalence over  $\text{Pref}(\mathcal L)$  of finite index.
\item $\mathcal L$ is recognized by a  WDFA.
\end{enumerate}
\end{theorem}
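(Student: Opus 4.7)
The plan is to prove the four conditions equivalent via the cycle $(1) \Rightarrow (4) \Rightarrow (3) \Rightarrow (2) \Rightarrow (4) \Rightarrow (1)$. The implications touching $(1)$ are essentially free: $(1) \Rightarrow (4)$ is exactly Lemma \ref{Wdeterminization}, and $(4) \Rightarrow (1)$ holds since every WDFA is a WNFA. What remains is to extract $(3)$ and $(2)$ from a given WDFA, and conversely to build a WDFA under the assumption that $\equiv_{\mathcal L}^c$ has finite index.

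For $(4) \Rightarrow (3)$, I would take as witness the equivalence that identifies two prefixes iff they label paths from $s$ to a common state of the given WDFA. Its classes are exactly the sets $\text{Pref}(\mathcal L(\mathcal A))_u$; finite index is automatic, right-invariance follows from determinism, convexity is Lemma \ref{convex_sets}(1), input-consistency is inherent in any Wheeler graph, and $\mathcal L$ is the union of the classes attached to accepting states. For $(3) \Rightarrow (2)$, I would show that $\equiv_{\mathcal L}^c$ is the coarsest convex, input-consistent refinement of $\equiv_{\mathcal L}$: any equivalence $\sim$ as in $(3)$ refines $\equiv_{\mathcal L}$ by a standard Myhill-Nerode argument, and each $\sim$-class is an interval of Myhill-Nerode-equivalent strings with the same last letter and hence sits inside a single $\equiv_{\mathcal L}^c$-class, so $\equiv_{\mathcal L}^c$ has at most as many classes as $\sim$.

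The substantive direction is $(2) \Rightarrow (4)$. I would build a WDFA whose states are the $\equiv_{\mathcal L}^c$-classes, ordered by the interval order $<^i$ of Lemma \ref{convex+order}, with initial state $[\epsilon]$, accepting states the classes meeting $\mathcal L$ (which by convexity lie entirely inside $\mathcal L$), and transitions $[\alpha] \xrightarrow{e} [\alpha e]$ whenever $\alpha e \in \text{Pref}(\mathcal L)$. Input-consistency is baked into the definition of $\equiv_{\mathcal L}^c$. The two Wheeler axioms of Definition \ref{def_WG} are then routine: (i) follows because $<^i$ groups classes by last letter in the order of $\Sigma$, and (ii) uses the fact that distinct equivalence classes are disjoint intervals of $(\text{Pref}(\mathcal L), \prec)$, so one entirely precedes the other, and this order is preserved by appending any fixed character.

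The main obstacle is showing the transition function is well defined, i.e.\ that $\equiv_{\mathcal L}^c$ is itself right-invariant (this is not postulated by $(2)$). The key observation I plan to exploit is that the co-lex order automatically enforces the convexity clause after a letter-extension: if $\alpha \equiv_{\mathcal L}^c \beta$ with $\alpha \prec \beta$ and $\delta \in \text{Pref}(\mathcal L)$ lies strictly between $\alpha e$ and $\beta e$ in $\prec$, then $\delta$ must itself end with $e$ (its last letter cannot be simultaneously larger and smaller than $e$), so $\delta = \gamma e$ with $\alpha \prec \gamma \prec \beta$; the convexity clause for $\alpha \equiv_{\mathcal L}^c \beta$ then yields $\gamma \equiv_{\mathcal L} \alpha$, and right-invariance of $\equiv_{\mathcal L}$ promotes this to $\gamma e \equiv_{\mathcal L} \alpha e$, closing the convexity clause for the pair $(\alpha e, \beta e)$ and hence establishing $\alpha e \equiv_{\mathcal L}^c \beta e$.
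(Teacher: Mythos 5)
Your proof is correct, and it reorganizes the equivalences differently from the paper. The paper runs the cycle $(1)\Rightarrow(2)\Rightarrow(3)\Rightarrow(4)\Rightarrow(1)$: its $(1)\Rightarrow(2)$ works directly on the WNFA, defining $\alpha\sim_{\mathcal A}\beta$ iff $I_\alpha=I_\beta$ and arguing (somewhat tersely) that this finite-index relation refines $\equiv_{\mathcal L}^{c}$; its $(2)\Rightarrow(3)$ proves that $\equiv_{\mathcal L}^{c}$ is itself right invariant; and its $(3)\Rightarrow(4)$ builds the quotient WDFA from an \emph{arbitrary} witness equivalence $\sim$. You instead route $(1)$ through $(4)$ by invoking Lemma \ref{Wdeterminization} outright, read off $(3)$ from the state partition of the resulting WDFA (where convexity is exactly Lemma \ref{convex_sets}(1) and right invariance is immediate from determinism), obtain $(2)$ by the generic refinement argument, and then instantiate the quotient construction only for $\equiv_{\mathcal L}^{c}$ in $(2)\Rightarrow(4)$. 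The two proofs share their genuinely hard ingredients --- the quotient automaton ordered by $\prec^{i}$ with the verification of the Wheeler axioms, and the right-invariance of $\equiv_{\mathcal L}^{c}$, whose proof (a string strictly co-lexicographically between $\alpha e$ and $\beta e$ must end in $e$, so the convexity clause lifts under letter extension) is essentially identical to the paper's $(2)\Rightarrow(3)$, just relocated. What your route buys is a cleaner and more self-contained treatment of the ``automaton implies finite index'' leg: the paper's $\sim_{\mathcal A}$ argument on a nondeterministic automaton is replaced by the transparent state partition of a DFA, at the price of leaning on the determinization lemma as a black box and of proving $(3)\Rightarrow(4)$ only indirectly via $(3)\Rightarrow(2)\Rightarrow(4)$ rather than as a standalone construction for arbitrary witnesses. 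Two cosmetic remarks: in $(2)\Rightarrow(4)$ the fact that a class meeting $\mathcal L$ lies inside $\mathcal L$ follows from $\equiv_{\mathcal L}^{c}$ refining $\equiv_{\mathcal L}$ (take $\gamma=\epsilon$), not from convexity; and single-letter right invariance should be promoted to full right invariance by an induction on $|\gamma|$ if you want statement $(3)$ verbatim for $\equiv_{\mathcal L}^{c}$, though for well-definedness of the transitions the one-letter case suffices.
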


This theorem and other results on Wheeler Languages are going to be part  of a companion paper of this one.

\section{Sorting Wheeler Finite Automata}\label{sec:sorting}

In this section we provide efficient algorithms to sort a relevant sub-class of the Wheeler automata. 
Combined with the results of the previous section, our algorithms can be used to index \emph{any} WNFA.
We start with a reduction from the problem of recognizing Wheeler $2$-NFAs to 2-SAT. The reduction introduces only a polynomial number of boolean variables and can be computed in polynomial time; since 2-SAT is in P, this implies that Wheeler $2$-NFA recognition is in P. 

\begin{theorem}\label{thm:2NFA in P}
	Let  $\mathcal A=(V,E,F,s,\Sigma)$ be a $2$-NFA. In $O(|E|^2)$ time we can:
	\begin{enumerate}
		\item Decide whether $\mathcal A$ is a Wheeler graph, and
		\item If $\mathcal A$ is a Wheeler graph, return a node ordering satisfying the Wheeler graph definition. 
	\end{enumerate}
\end{theorem}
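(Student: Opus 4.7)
The plan is to reduce the problem to 2-SAT, which is solvable in linear time in the formula size. First I check the necessary condition of input-consistency in $O(|E|)$ time (Wheeler condition (i) forces all incoming edges of any node to share a label); if it fails we reject. Otherwise, I partition $V\setminus\{s\}$ into label classes $V_a=\{v : \lambda(v)=a\}$. In any Wheeler order, $s$ comes first and the $V_a$ appear in the order given by $\prec$ on $\Sigma$, so the only freedom left is the relative order inside each $V_a$.

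For each unordered pair $\{v,v'\}$ of distinct nodes lying in the same label class I introduce a Boolean variable $x_{v,v'}$ with intended meaning ``$v<v'$'', identifying $x_{v',v}$ with $\neg x_{v,v'}$; the number of variables is at most $\binom{|V|}{2}$. Next, for every pair of edges $(u_1,v_1,a),(u_2,v_2,a)\in E$ with $v_1\ne v_2$, Wheeler condition (ii) demands $u_1<u_2 \Rightarrow v_1<v_2$. If $u_1$ and $u_2$ lie in different label classes their order is already fixed and the implication collapses to a tautology or a unit clause; otherwise it becomes the 2-SAT clause $\neg x_{u_1,u_2} \vee x_{v_1,v_2}$. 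The number of such clauses is bounded by the number of ordered pairs of same-label edges, hence $O(|E|^2)$, and the whole formula is built and solved within that budget.

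The main obstacle, and where the restriction $d\le 2$ is crucial, is showing that a satisfying assignment actually encodes a \emph{total} order on each $V_a$: antisymmetry is built into the variable naming and condition (ii) is enforced by the clauses, but transitivity is a ternary constraint that 2-CNF cannot directly express. The key observation is that in a $2$-NFA each state has at most two outgoing $a$-edges, so for any pair $\{v,v'\}\subseteq V_a$ at most one genuinely ``free'' binary choice remains once all predecessor label classes are dealt with; everything else is forced by implication chains. I would leverage this to argue that a putative transitivity violation $v_1<v_2<v_3\le v_1$ on a minimal triple propagates backwards along the implication graph of the 2-SAT instance to a violation on a triple of predecessors of strictly smaller ``depth'' (measured, say, along a topological-like ordering consistent with $\prec$ on incoming labels), bottoming out at $s$ or at the already-fixed cross-class order, a contradiction. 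Equivalently, I would show that the canonical assignment produced by the Aspvall--Plass--Tarjan SCC-based procedure is automatically transitive in this setting.

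If either the 2-SAT instance is unsatisfiable or the derived relation fails a direct $O(|E|^2)$ sanity check of Wheeler conditions (i)--(ii) on every pair of edges, the algorithm reports that $\mathcal A$ is not a Wheeler graph; otherwise it outputs the total order obtained by sorting each $V_a$ according to $x$, concatenated in $\prec$-order on $\Sigma$ with $s$ placed first. All phases---input-consistency check, formula construction, 2-SAT solution, sorting, and verification---fit within the claimed $O(|E|^2)$ budget.
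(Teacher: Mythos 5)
Your overall strategy coincides with the paper's: reduce to 2-SAT with one Boolean variable per ordered pair of same-label states, encode Wheeler condition (ii) as binary implications between predecessor-order and successor-order variables, fix the cross-class order by the unary constraints coming from condition (i), and argue that on a $2$-NFA transitivity comes for free. The difficulty is that the one step you explicitly defer --- ``I would leverage this to argue\ldots'', ``I would show that the canonical assignment \ldots is automatically transitive'' --- is precisely the heart of the proof, and your sketch of it has two concrete problems. First, your proposed well-founded measure (``a topological-like ordering consistent with $\prec$ on incoming labels'') does not exist for general $2$-NFAs, which may contain cycles; the paper instead takes a shortest-path tree rooted at $s$ and inducts on $\max\{d_u,d_v,d_w\}$, the maximum shortest-path distance of the three nodes in a putative violating triple, which strictly decreases when one passes to the tree-parents. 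Second, the backward-propagation step is not a single uniform implication chain: given a violating triple $u,v,w$ with equal incoming labels and tree-parents $u',v',w'$, one must split into cases according to whether the parents are pairwise distinct or two of them coincide, and the hypothesis $d\le 2$ enters exactly to rule out $u'=v'=w'$ (a single state cannot have three equally-labeled outgoing edges). Your remark that ``at most one genuinely free binary choice remains'' gestures at this but does not carry out the case analysis, and the case $|\{u',v',w'\}|=2$ requires its own argument (one sub-case yields a contradiction, the other transfers the order directly) rather than an appeal to the inductive hypothesis.

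The final ``sanity check'' you add does not repair this gap. If a satisfying assignment failed to be transitive, then ``sorting each $V_a$ according to $x$'' would not be well defined, and rejecting whenever the resulting relation fails verification would compromise completeness: the automaton could still be Wheeler under a different satisfying assignment, so the algorithm could wrongly answer no. Hence the transitivity lemma must actually be proved, not checked a posteriori; once it is (along the lines above), the rest of your argument --- variable count $O(|V|^2)$, clause count $O(|E|^2)$, linear-time 2-SAT via Aspvall--Plass--Tarjan, and extraction of the order --- matches the paper and gives the claimed $O(|E|^2)$ bound.
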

\begin{proof}[Sketch] 
It is easy to express the Wheeler properties (i)-(ii) and antisymmetry/connex of the Wheeler order with 2-SAT clauses. Transitivity, however, requires 3-SAT clauses on general graphs. The core of the full proof in Appendix \ref{proof thm 2NFA in P} is to show that, on $2$-NFAs, transitivity automatically ``propagates'' from the source to all nodes and does not require additional clauses.\qed
\end{proof}

Gibney and Thankachan~\cite{gibney2019hardness} have recently shown that the problem of recognizing Wheeler $d$-NFAs is NP-complete for $d\geq 5$. Theorem \ref{thm:2NFA in P} almost completes the picture, the remaining open cases being $d=3$ and $d=4$.  We note that Theorem \ref{thm:2NFA in P} combined with our determinization result of Section \ref{sec:WL} does not break the problem's NP-completeness: in principle, our determinization algorithm could turn a non-Wheeler NFA into a WDFA. 


We now describe more efficient algorithms for the deterministic case. 
The first, Theorem \ref{thm:n log n}, is an online algorithm that solves the problem considered in Theorem \ref{thm:2NFA in P} in $O(|E|\log|V|)$ time when the graph is an acyclic DFA.  The algorithm is online in the following sense. We assume that the nodes, together with their incoming labeled edges, are provided to the algorithm in any valid topological ordering.
At any step, we maintain a prefix-sorted list of the current nodes, which is updated when a new node is added.
When a new node $v$ arrives together with its incoming labeled edges $(u_1,v,a), \dots, (u_k,v,a)$, then $u_1,\dots, u_k$ have already been seen in the past node sequence and can be used to decide the co-lexicographic rank of $v$. If $v$ falsifies the Wheeler properties, we detect this event, report it, and stop the computation. 
Our algorithm is an extension of an existing one that builds online the Burrows-Wheeler transform of a string~\cite{navarro2014optimal}.
In Section \ref{sec:minimization} we will modify this algorithm so that, instead of failing on non-Wheeler graphs, it computes the smallest Wheeler DFA equivalent to the input acyclic DFA.

First, note that 
Lemma \ref{prec_versus_minus} implies that Wheeler DFAs admit a unique admissible ordering (this follows from the fact that, on WDFAs, $\{\alpha,\beta\}\not\subseteq I_v\cap I_u$ always holds):

\begin{corollary}\label{lem:clusters}
	Let $\mathcal A$ be a Wheeler DFA, $<$ be the node ordering satisfying the Wheeler properties, and $\prec$ be the co-lexicographic order among strings. For any two nodes $u \neq v$, the following holds: $\alpha_u \prec \alpha_v$ for all string pairs $\alpha_u, \alpha_v$ labeling paths $s \rightsquigarrow u$ and $s\rightsquigarrow v$ if and only if $u<v$. 
\end{corollary}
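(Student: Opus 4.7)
The plan is to reduce the corollary to a direct application of Lemma \ref{prec_versus_minus}(1). That lemma already gives the biconditional $\alpha \prec \beta \Leftrightarrow u < v$ whenever $\alpha \in I_u$, $\beta \in I_v$, and the side-condition $\{\alpha,\beta\}\not\subseteq I_u \cap I_v$ holds. All the work therefore consists in showing that in the deterministic setting this side-condition is vacuous, after which the two directions of the corollary drop out immediately.

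First I would observe that in any DFA, every string $\alpha \in \Sigma^*$ labels at most one path from the source $s$, because the transition function out of each state is injective on labels. Consequently each $\alpha \in \text{\em Pref}(\mathcal L(\mathcal A))$ belongs to exactly one $I_w$, and so $I_u \cap I_v = \emptyset$ whenever $u \neq v$. This is the only point in the argument where determinism is used: in a general WNFA a single prefix can reach several distinct states simultaneously, which is precisely why Lemma \ref{prec_versus_minus} carries a nontrivial side-condition.

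With this disjointness in hand, the $(\Leftarrow)$ direction is immediate: if $u < v$ and $\alpha_u \in I_u$, $\alpha_v \in I_v$ are arbitrary, then $\{\alpha_u,\alpha_v\} \not\subseteq I_u \cap I_v = \emptyset$ holds trivially, so Lemma \ref{prec_versus_minus}(1) yields $\alpha_u \prec \alpha_v$. For $(\Rightarrow)$, I would fix \emph{some} $\alpha_u \in I_u$ and $\alpha_v \in I_v$, which exist by the standing assumption from Section \ref{sec:definitions} that every state is reachable from $s$; the side-condition is again vacuous, the hypothesis gives $\alpha_u \prec \alpha_v$, and Lemma \ref{prec_versus_minus}(1) delivers $u<v$. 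There is no genuine obstacle in this proof—the only step worth flagging is the reliance on reachability to guarantee the existence of the witnessing prefixes $\alpha_u,\alpha_v$ in the converse direction, which is why I would make that dependence on the global assumption explicit.
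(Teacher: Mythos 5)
Your proposal is correct and matches the paper's own justification: the paper derives Corollary~\ref{lem:clusters} from Lemma~\ref{prec_versus_minus} by noting precisely that on a WDFA the side-condition $\{\alpha,\beta\}\not\subseteq I_v\cap I_u$ always holds, which is your observation that determinism forces $I_u\cap I_v=\emptyset$ for $u\neq v$. Your added remark about reachability guaranteeing the existence of witnessing prefixes is a reasonable (if minor) point the paper leaves implicit.
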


Corollary \ref{lem:clusters} has two important consequences: on DFAs, (i) we can use \emph{any} paths connecting $s$ with two nodes $u\neq v$ to decide their co-lexicographic order, and (ii) if it exists, the total ordering of the nodes is unique. The corollary is crucial in proving the following (as well as others) result:

\begin{theorem}\label{thm:n log n}
	Let  $\mathcal A=(V,E,F,s,\Sigma)$ be an acyclic DFA. There exists an algorithm that either prefix-sorts the nodes of $\mathcal A$ or returns $\mathtt{FAIL}$ if such an ordering does not exist online with $O(\log|V|)$ delay per input edge.
\end{theorem}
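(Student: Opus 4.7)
The plan is to maintain the already-processed nodes in their (by Corollary \ref{lem:clusters} necessarily unique) Wheeler order using an order-maintenance data structure $\mathcal O$ supporting $O(1)$ insertions and $O(1)$ order comparisons, such as the Dietz--Sleator structure, and, for each label $a\in\Sigma$, a balanced binary search tree $T_a$ holding the $a$-labeled processed nodes in their current Wheeler order. Each node $w\in T_a$ stores pointers to one predecessor achieving the minimum rank $\min P(w)$ and one achieving the maximum rank $\max P(w)$ in $\mathcal O$.

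When the next node $v$ arrives with incoming label $a$ and predecessors $u_1,\dots,u_k$, I first scan the $u_i$ and use $k-1$ $O(1)$ comparisons in $\mathcal O$ to identify $u_{\min}$ and $u_{\max}$; topological arrival order guarantees each $u_i$ is already in $\mathcal O$. I then locate $v$'s slot in $T_a$ by a standard balanced-BST descent. The crucial observation, which follows from the contrapositive of Wheeler property (ii) together with $\mathcal A$ being a DFA, is that any two distinct $a$-labeled nodes have disjoint predecessor sets, and that for $w<w'$ the whole interval $[\min P(w),\max P(w)]$ precedes $[\min P(w'),\max P(w')]$ in $\mathcal O$. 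Hence a single $O(1)$ rank comparison at each BST node (say between $u_{\min}$ and the stored $\min P$ pointer of the current node) correctly steers the search, which therefore finishes in $O(\log|V|)$ time.

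Let $w_{\mathrm{prev}}$ and $w_{\mathrm{next}}$ be the left and right neighbours in $T_a$ of the located slot (possibly absent at the ends). I verify, in $O(k)$ rank comparisons through $\mathcal O$, that every predecessor $u_i$ of $v$ lies strictly between $\max P(w_{\mathrm{prev}})$ and $\min P(w_{\mathrm{next}})$; if any comparison fails I output $\mathtt{FAIL}$. Otherwise, I insert $v$ into $\mathcal O$ (between $w_{\mathrm{prev}}$ and $w_{\mathrm{next}}$ within the $a$-block, or at the appropriate boundary of the $a$-block when $T_a$ was empty or $v$ is a new extremum), insert $v$ into $T_a$, and record its two predecessor pointers. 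Correctness is guaranteed by Corollary \ref{lem:clusters}: if $\mathcal A$ is Wheeler then the order of its reachable subgraph is uniquely determined, so the procedure locates and verifies it; conversely, if every test passes, the partial order built so far satisfies the Wheeler properties for all edges seen. Topological order is essential here: it ensures that $v$ never belongs to the predecessor set of an earlier node, so the stored $\min P, \max P$ pointers of existing nodes are never invalidated by future insertions.

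The cost of processing $v$ is $O(k+\log|V|)$, which sums to $O(|E|+|V|\log|V|)=O(|E|\log|V|)$, giving the claimed $O(\log|V|)$ amortized delay per input edge. The main technical obstacle I foresee is justifying that storing only two predecessor pointers per $a$-node is sufficient both to steer the binary search and to detect every possible violation of the Wheeler conditions; this rests on the predecessor-interval disjointness within an $a$-block (consequence of determinism) and on the fact, implicit in Corollary \ref{lem:clusters}, that the relative Wheeler order of two DFA states can be witnessed by \emph{any} of their respective predecessors.
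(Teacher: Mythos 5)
Your proposal is correct, and it follows the same algorithmic skeleton as the paper's proof (Appendices \ref{app: Sorting WDAGs Online} and \ref{sec:data structures}): process nodes in topological order, position each new node via its extremal predecessors as justified by Corollary \ref{lem:clusters}, verify consistency, insert. Where you genuinely diverge is in the machinery and in how violations are detected. The paper maintains a dynamic BWT-like representation (the $\mathtt{IN}$/$\mathtt{OUT}$ sequences over a Navarro--Nekrich dynamic string, a Fenwick tree for the label blocks, and a balanced-parentheses sequence marking ``reserved'' ranges), computes the insertion rank by an LF-mapping-style formula, and checks two separate inconsistency types: an existing $a$-edge source inside the new predecessor range (a $\mathtt{rank}$ query on $\mathtt{OUT}$) and a new predecessor strictly inside a previously reserved range (a parenthesis query). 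You replace all of this with an order-maintenance structure plus one balanced BST per label storing two extremal-predecessor pointers, and you fold both checks into the single test that every predecessor of $v$ falls strictly in the gap between $\max P(w_{\mathrm{prev}})$ and $\min P(w_{\mathrm{next}})$. Given your invariant that the predecessor intervals of the $a$-nodes are pairwise disjoint and ordered consistently with $<$ (which is exactly Wheeler property (ii) combined with determinism, and is what Corollary \ref{lem:clusters} licenses), this single test is indeed equivalent to the paper's two, and your justification that two pointers per node suffice is sound. Your route buys a conceptually simpler structure that sidesteps the succinct dynamic-sequence toolbox; the paper's route yields the FM-index representation of the sorted automaton as a by-product, which it then reuses for indexing (Appendix \ref{sec:indexing}) and for the conflict-resolution algorithm of Section \ref{sec:ADFA->WADFA}. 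Two small points remain for you to pin down: when $T_a$ is empty or $v$ becomes a new extremum of its label block, you need an efficient way to locate the block boundary in $\mathcal O$ (the paper's $\mathtt{IN.start}$ via a Fenwick tree over $\Sigma$; a search tree over the nonempty labels also works), and you should state the preliminary input-consistency check, i.e.\ that a node arriving with two distinct incoming labels immediately triggers $\mathtt{FAIL}$.
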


All details of our algorithm (description, pseudocode and data structures) and the proof of its correctness are reported in in Appendices \ref{app: Sorting WDAGs Online} and \ref{sec:data structures}.


To conclude the section, we show that in the offline setting we can improve upon the previous result.
We first need the following lemma (see Appendix \ref{proof lemma check range consistency} for the full proof):

\begin{lemma}\label{lem: check range consistency}
	Given an input-consistent edge-labeled graph $G=(V,E,\Sigma)$ and a permutation of $V$ sorted by a total order $<$ on $V$, we can check whether $<$ satisfies the Wheeler properties in optimal $O(|V|+|E|)$ time. 
\end{lemma}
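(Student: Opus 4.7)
The plan is to verify the Wheeler properties in a single linear scan through the nodes in the given sorted order, using an auxiliary array indexed by $\Sigma$. First I will precompute the inverse permutation $r : V \to \{1, \ldots, |V|\}$ so that any comparison $u < v$ reduces to comparing $r(u)$ against $r(v)$ in $O(1)$; this preprocessing is $O(|V|)$. The in-degree-zero condition is trivial: Definition \ref{def:d-NFA} forces the source $s$ to be the unique in-degree-zero node, so it suffices to check $r(s) = 1$.

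For property (i), input-consistency reduces the constraint to the requirement that the sequence $\lambda(v_1), \lambda(v_2), \ldots$ of incoming labels, read in the given order, be non-decreasing under $\prec$. A single sweep of the permutation checks this in $O(|V|)$ time, returning $\mathtt{FAIL}$ at the first violation.

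For property (ii), I will maintain an array $\textrm{last}[a]$ (lazily initialized, so we pay only for labels that actually appear) that stores the largest rank $r(v)$ over all edges labeled $a$ leaving nodes strictly smaller than the one currently being processed. Iterating over the nodes $u$ in sorted order, I will group $u$'s outgoing edges by label (via a one-shot bucket-sort preprocessing costing $O(|V|+|E|)$ overall), and for each label-group compute the minimum rank $m_a$ and maximum rank $M_a$ among its destinations. I check $m_a \geq \textrm{last}[a]$, returning $\mathtt{FAIL}$ on violation, and otherwise update $\textrm{last}[a] \leftarrow M_a$. Correctness follows because, for any two edges $(u_1,v_1,a)$ and $(u_2,v_2,a)$ with $u_1 < u_2$, the value of $\textrm{last}[a]$ at the moment $u_2$ is processed is at least $r(v_1)$, so the test forces $r(v_2) \geq r(v_1)$, precisely matching property (ii). Conversely, if both properties hold, every check is passed by construction.

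The main obstacle I anticipate is the within-source case: two edges $(u,v_1,a)$ and $(u,v_2,a)$ leaving the same $u$ with the same label $a$ are \emph{not} constrained among themselves by property (ii), so I must compute $m_a$ and $M_a$ over the entire label-group from $u$ \emph{before} updating $\textrm{last}[a]$; interleaving checks and updates within a group would spuriously reject valid orderings. Once that subtlety is handled, each node and each edge contributes $O(1)$ work, yielding total cost $O(|V|+|E|)$, which matches the trivial lower bound of reading the input.
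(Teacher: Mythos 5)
Your proof is correct and takes essentially the same approach as the paper's: both reduce the verification to a linear-time monotonicity check on edge destinations after bucketing edges by label (the paper radix-sorts the whole edge list as triples $(a,u,v)$ and scans consecutive pairs, whereas you sweep the nodes maintaining a per-label $\mathrm{last}[a]$ array and check property (i) directly on the non-decreasing sequence of incoming labels rather than deriving it from the sorted edge list). The only caveat is that your in-degree-zero check assumes a unique source from Definition~\ref{def:d-NFA}, while the lemma is stated for a general graph triple where one should instead verify that \emph{all} in-degree-zero vertices precede the rest; this is an equally trivial linear-time sweep, so nothing essential is lost.
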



\begin{theorem}\label{thm:DFA linear}
	Let  $\mathcal A=(V,E,F,s,\Sigma)$ be a DFA. In $O(|V|+|E|)$ time we can:
	\begin{enumerate}
		\item Decide whether $\mathcal A$ is a Wheeler graph, and
		\item If $\mathcal A$ is a Wheeler graph, return a node ordering satisfying the Wheeler graph definition. 
	\end{enumerate}
\end{theorem}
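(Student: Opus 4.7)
The plan is to exploit Lemma \ref{lem: check range consistency} and adopt a guess-and-verify strategy: produce in $O(|V|+|E|)$ time a candidate total order $<$ on $V$ and then check, in the same time bound via Lemma \ref{lem: check range consistency}, whether $<$ satisfies the Wheeler properties. A single candidate suffices because Corollary \ref{lem:clusters} guarantees that the Wheeler order of a DFA, when it exists, is unique; hence, if the verification fails, $\mathcal{A}$ is necessarily non-Wheeler and we return \texttt{FAIL}.

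First I would verify input-consistency in one scan over the edges, rejecting immediately if some node has incoming edges carrying two distinct labels (property (i) of Definition \ref{def_WG} makes this necessary). Next I would compute the coarse initial ordering obtained by placing $\{s\}$ first and then the buckets $B_a = \{v \in V\setminus\{s\} : \lambda(v) = a\}$ in the $\prec$-order of their labels. By Wheeler property (i), any admissible order must respect this bucket sequence, so the remaining task is to sort nodes within each bucket.

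Within each bucket $B_a$ the admissible order is determined, by Corollary \ref{lem:clusters}, entirely by the current ranks of the predecessors: if $u,v\in B_a$ with $u<v$ in some Wheeler order, then \emph{every} predecessor of $u$ must precede \emph{every} predecessor of $v$. This suggests a partition-refinement procedure driven by predecessor ranks, in the spirit of Hopcroft's algorithm: maintain the current partition as an ordered doubly-linked list of buckets with node-to-bucket back-pointers, and repeatedly split each bucket according to the contiguous range of predecessor-buckets supplying its incoming edges, until either the partition is a total order or a bucket is found that cannot be split into a prefix/suffix of itself (in which case we abort). An $O(|E|\log|V|)$ implementation is essentially Theorem \ref{thm:n log n}; the technical work of the present theorem is to shave the logarithmic factor. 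The leverage is again Corollary \ref{lem:clusters}: unlike classical DFA minimization, where many admissible partitions exist and each split can go either way, here the target order is uniquely pre-determined, so splits can be scheduled (for instance by sweeping the partition once from smallest to largest bucket and using counting-sort on predecessors' current bucket indices, with an outer pass handling any remaining instability caused by cycles) so that each edge is examined only a constant number of times.

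The main obstacle is exactly this linear-time bound for refinement: arguing that the amortized cost is $O(|V|+|E|)$ rather than $O(|E|\log|V|)$ by combining the uniqueness of the target order with a scheduling that charges each edge only to a constant number of split events. Once refinement stabilizes, a single application of Lemma \ref{lem: check range consistency} certifies the outcome in linear time, and we either return the Wheeler order or conclude that $\mathcal{A}$ is not a Wheeler graph. Handling cycles (which preclude a clean topological processing as used in Theorem \ref{thm:n log n}) is the subtle point in the scheduling argument and is where the uniqueness assertion of Corollary \ref{lem:clusters} plays its most delicate role.
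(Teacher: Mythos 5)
Your overall skeleton---produce a single candidate order, then certify it with Lemma \ref{lem: check range consistency}, with Corollary \ref{lem:clusters} guaranteeing that one candidate suffices---matches the paper's strategy exactly. The gap is in how you generate the candidate. You propose a Hopcroft-style partition refinement on predecessor ranks and then openly concede that the crux---bringing the refinement down from $O(|E|\log|V|)$ to $O(|V|+|E|)$, and in particular scheduling the splits correctly in the presence of cycles---is left as ``the main obstacle.'' That obstacle is precisely the content of the theorem: without a concrete argument that each edge is charged only $O(1)$ times, and without a termination and correctness argument for refinement on cyclic automata (where no topological processing order exists and a bucket's predecessors may themselves still be unsorted when the bucket is examined), the proof is not complete. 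As written, your construction only demonstrably yields the $O(|E|\log|V|)$ bound already available from Theorem \ref{thm:n log n}, and that theorem is moreover stated only for acyclic inputs.

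The paper sidesteps the refinement machinery entirely by pushing Corollary \ref{lem:clusters} one step further: since on a DFA the relative order of $u$ and $v$ is decided by \emph{any} pair of paths $s\rightsquigarrow u$ and $s\rightsquigarrow v$, it suffices to fix one canonical path per node, i.e., to take a directed spanning tree of $\mathcal A$ rooted at $s$. Prefix-sorting a labeled tree is a solved problem---the XBW transform of Ferragina et al.\ computes it in linear time---and the resulting permutation of $V$ is the unique possible Wheeler order of $\mathcal A$, which is then certified (or refuted) by Lemma \ref{lem: check range consistency}. This reduction from graph sorting to tree sorting is the missing idea in your proposal; if you insist on the partition-refinement route you would need to supply the full amortization and cycle-handling argument, which is substantially harder than the spanning-tree reduction and is not needed here.
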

\begin{proof}[Sketch]
	By Corollary \ref{lem:clusters}, if $\mathcal A$ is a Wheeler graph then we can use the strings labeling \emph{any} paths $s\rightsquigarrow u$ and $s\rightsquigarrow v$ to decide the order of any two nodes $u$ and $v$. 
	We build a spanning tree of $\mathcal A$ rooted in $s$ and prefix-sort it using~\cite[Thm 2]{ferragina2009compressing}. Finally, we verify correctness using Lemma \ref{lem: check range consistency}.\qed
\end{proof}

We note that the above strategy cannot be used to sort Wheeler NFAs, since the spanning tree could connect $s$ with several distinct nodes using the same labeled path: this would prevent us to find the order of those nodes using the spanning tree as support.

\section{Wheeler DFA Minimization}\label{sec:minimization}

We are now ready to use the algorithms of the previous sections to prove our main algorithmic results: (i) a minimization algorithm for WDFAs (Theorem \ref{thm: min DFA}) and (ii) a near-optimal algorithm generating the minimum acyclic WDFA equivalent to any input acyclic DFA (Theorem \ref{thm: ADFA -> WADFA}).


Let $\equiv$ be an equivalence relation over the states $V$ of an automaton $\mathcal A = (V,E,F,s,\Sigma)$. The \emph{quotient automaton} is defined as $\mathcal A/_\equiv = (V/_\equiv, E/_\equiv, F/_\equiv,[s]_{\equiv},\Sigma)$, where $E/_\equiv = \{([u]_\equiv, [v]_\equiv, c)\ :\ (u,v,c)\in E\}$. The symbol $\approx$ denotes the Myhill-Nerode equivalence among states~\cite{nerode1958linear}: $u \approx v$, with $u,v\in V$, if and only if, for any string $\alpha$, we reach a final state by following the path labeled $\alpha$ from $u$ if and only if the same holds for $v$.
Note that this is the ``state'' version of the relation $\equiv_{\mathcal L}$ given in Section \ref{sec:WL} (which instead is defined among strings). 
The goal of any DFA-minimization algorithm is to find $\approx$, which is the, provably existing and unique, coarsest (i.e. largest classes) equivalence relation stable with respect to the initial partition in final/non-final states.
To abbreviate, we will simply say ``coarsest equivalence relation'' instead of ``coarsest equivalence relation stable with respect to an initial partition''.

In our case,  assuming that $\mathcal A$ is Wheeler, we want to find the (unique as proved below) coarsest equivalence relation $\equiv_w$ finer than $\approx$, such that $\mathcal A/_{\equiv_w}$ is Wheeler.  
Our Algorithm \ref{alg:minimize} achieves precisely this goal: we start with $\approx$ and then refine it preserving stability with respect to characters, while also ensuring that the resulting equivalence classes can be ordered consistently with the Wheeler constraints. 
Again, it can be proved that $\equiv_w$ is the ``state'' version of the relation $\equiv_{\mathcal L}^c$ given in Section \ref{sec:WL}.
For the purposes of the following results, we do not need to prove the connection between the two relations and we keep a distinct notation to avoid confusion.
We show (formal proof in Appendix \ref{proof thm: min DFA}):

\begin{theorem}\label{thm: min DFA}
	Let $\mathcal A$ be a WDFA.	The automaton $\mathcal{A}/_{\equiv_w}$ returned by Algorithm \ref{alg:minimize} is the minimum WDFA recognizing $\mathcal L(\mathcal A)$.
\end{theorem}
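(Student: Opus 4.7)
The plan is to establish four properties of the output $\mathcal{A}/_{\equiv_w}$: (a) it is a DFA recognizing $\mathcal{L}(\mathcal{A})$; (b) it is Wheeler; (c) it is minimum among WDFAs recognizing $\mathcal{L}(\mathcal{A})$; and underlying all of these, (d) the equivalence $\equiv_w$ computed by Algorithm \ref{alg:minimize} is well-defined as the unique coarsest equivalence refining $\approx$ whose quotient is a WDFA. Point (a) is easy: by design the algorithm starts from $\approx$ and only refines it while preserving stability with respect to each label of $\Sigma$, so $\equiv_w$ is a stable refinement of $\approx$, making $\mathcal{A}/_{\equiv_w}$ a well-defined DFA with $\mathcal{L}(\mathcal{A}/_{\equiv_w})=\mathcal{L}(\mathcal{A})$.

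For (b) I would exploit the Wheeler order $<$ of $\mathcal{A}$ (unique on a WDFA by Corollary \ref{lem:clusters}). The refinement performed by the algorithm enforces two additional invariants beyond stability: each class is input-consistent, and each class is an interval of $(V,<)$. The first invariant is necessary by property (i) of Definition \ref{def_WG}. The second lets me define a total order on $V/_{\equiv_w}$ by $[u]_{\equiv_w} <_q [v]_{\equiv_w}$ iff $u<v$ for any (hence all) pair of representatives from distinct classes; I would then verify that the Wheeler edge properties (i)--(ii) descend from $(V,<)$ to this quotient order, using that edges in $E/_{\equiv_w}$ are inherited from edges in $E$.

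For (c) and (d) together I would pass through the Myhill--Nerode framework of Section \ref{sec:WL}. By Theorem \ref{Myhill-Nerode}, $\equiv_{\mathcal{L}(\mathcal{A})}^{c}$ has finite index, and any WDFA $\mathcal{B}$ accepting $\mathcal{L}(\mathcal{A})$ induces a convex, input-consistent, right-invariant equivalence on $\text{Pref}(\mathcal{L}(\mathcal{A}))$---namely the one whose classes are the intervals $I_u$ for $u$ a state of $\mathcal{B}$, which by Lemma \ref{convex_sets} form a prefix/suffix family. Any such equivalence must refine $\equiv_{\mathcal{L}(\mathcal{A})}^{c}$, so $\mathcal{B}$ has at least as many states as the index of $\equiv_{\mathcal{L}(\mathcal{A})}^{c}$. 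Now define the ``state-level'' equivalence $u \equiv' v$ iff the prefixes of $\mathcal{L}(\mathcal{A})$ leading from $s$ to $u$ and to $v$ lie in the same $\equiv_{\mathcal{L}(\mathcal{A})}^{c}$-class; by Lemma \ref{compare} and Lemma \ref{prec_versus_minus} this $\equiv'$ is a well-defined refinement of $\approx$ whose quotient is a WDFA of size exactly equal to the index of $\equiv_{\mathcal{L}(\mathcal{A})}^{c}$, hence minimum. By Lemma \ref{convex_sets} this is the unique coarsest refinement of $\approx$ with a Wheeler quotient---any coarsening would collapse two $\equiv_{\mathcal{L}(\mathcal{A})}^{c}$-classes and would either destroy convexity of the induced interval family or violate right invariance, contradicting Theorem \ref{Myhill-Nerode}. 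It remains to prove that Algorithm \ref{alg:minimize} converges exactly to this $\equiv'$: I would carry a loop invariant stating that the current partition is always a coarsening of (or equal to) $\equiv'$, and that every class violation detected by the algorithm corresponds to a necessary split forced by $\equiv'$.

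The main obstacle I anticipate is the last step: showing that the algorithm's splitting criterion is both conservative (it never separates states that are $\equiv'$-equivalent) and aggressive enough (it keeps splitting until the partition reaches $\equiv'$, rather than halting at some strictly finer intermediate partition). Conservativeness follows from right invariance of $\equiv_{\mathcal{L}(\mathcal{A})}^{c}$ plus the fact that the algorithm only refines based on Wheeler-necessary conditions; completeness requires showing that whenever a class of the current partition fails to be an interval of $(V,<)$ or fails to have a well-defined successor structure with respect to each label, the algorithm produces a split---i.e., that the fixed point of the refinement is precisely the state-level analogue of $\equiv_{\mathcal{L}(\mathcal{A})}^{c}$ rather than some strict refinement. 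This reduces to a careful matching between the three clauses defining $\equiv_{\mathcal{L}}^{c}$ (Myhill--Nerode equivalence, common last character, convexity) and the invariants maintained by Algorithm \ref{alg:minimize}, where convexity on strings translates---via Lemma \ref{compare}---to the interval property on states.
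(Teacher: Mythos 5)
Your plan is sound but follows a genuinely different route from the paper. The paper's proof never touches the Myhill--Nerode theorem for Wheeler languages: it unravels $\mathcal A$ into the (possibly infinite) tree $\mathcal T$, observes that every WDFA for $\mathcal L(\mathcal A)$ is a quotient of $\mathcal T$, and then shows directly that any equivalence on $\mathcal T$'s states yielding a Wheeler quotient must (i) refine $\approx$, (ii) respect incoming labels, (iii) merge all of each set $L^u$ of tree-copies of a state $u$, and (iv) merge only co-lexicographically \emph{adjacent} sets $L^u$ (since merging $L^u\equiv L^w$ across an inequivalent $L^v$ would, via Corollary~\ref{lem:clusters}, make the quotient unorderable); since $\equiv_w$ is the coarsest relation with properties (i)--(iv), it yields the minimum WDFA. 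You instead route everything through Theorem~\ref{Myhill-Nerode}: any WDFA induces a convex, input-consistent, right-invariant partition of $\text{Pref}(\mathcal L)$ refining $\equiv_{\mathcal L}^{c}$, giving the lower bound $|\mathcal B|\geq \mathrm{index}(\equiv_{\mathcal L}^{c})$, and the algorithm's output is claimed to realize exactly the state-level version $\equiv'$ of $\equiv_{\mathcal L}^{c}$. Your approach buys an explicit language-theoretic lower bound and makes precise the connection between $\equiv_w$ and $\equiv_{\mathcal L}^{c}$; note, however, that the paper deliberately \emph{avoids} proving that connection (``we do not need to prove the connection between the two relations''), so the clause-by-clause matching you defer to the end---$\approx$ versus $\equiv_{\mathcal L}$, $\lambda$ versus $end(\cdot)$, maximal runs versus convexity via Lemma~\ref{compare}---is exactly where the real work of your version lives, and it is not free (e.g., you must use Wheeler property (i) to argue that states with equal incoming labels form contiguous blocks, so that breaking a run at a label boundary cannot separate two states that $\equiv_{\mathcal L}^{c}$ would merge across an interleaved block of differently-labeled states). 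Two small corrections: Algorithm~\ref{alg:minimize} is not an iterative refinement loop, so your ``loop invariant'' should be restated as a direct proof that the one-shot run-merging equals $\equiv'$; and your step (a) needs an explicit stability argument (Wheeler property (ii) forces the $c$-successors of a run to land inside a single run), since stability of $\equiv_w$ is not immediate from its definition.
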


\begin{algorithm}
	\caption{WheelerMinimization($\mathcal A$)}\label{alg:minimize}
	\SetKwInOut{Input}{input}
	\SetKwInOut{Output}{output}
	
	\BlankLine
	\Input{Wheeler DFA $\mathcal A$}
	\Output{Minimum Wheeler DFA $\mathcal A'$ such that $\mathcal L(\mathcal A) = \mathcal L(\mathcal A')$}
	\BlankLine
	
	\begin{enumerate}
		\item Compute the Myhill-Nerode equivalence $\approx$ among states of $\mathcal A$.
		\item Prefix-sort $\mathcal A$'s states, obtaining the ordering $v_1 < \dots < v_n$.
		\item Compute a new relation $\equiv_{w}$ defined as follows. Insert in the same equivalence class all maximal runs $v_i < v_{i+1} < \dots < v_{i+t}$ such that:
		\begin{enumerate}
			\item $v_i \approx v_{i+1} \approx \dots \approx v_{i+t}$
			\item $\lambda(v_i) = \lambda(v_{i+1}) = \dots = \lambda(v_{i+t})$. 
		\end{enumerate}
		\item Return $\mathcal{A}/_{\equiv_w}$.
	\end{enumerate}
\end{algorithm}


Note that uniqueness of the minimum WDFA follows from Corollary \ref{lem:clusters} (uniqueness of the Wheeler order) and Algorithm \ref{alg:minimize}.
Note also that, in the automaton output by Algorithm \ref{alg:minimize}, adjacent states in co-lexicographic order are distinct by the relation $\approx$ unless their incoming labels are different (in which case they might be equivalent). It follows that if a sorted Wheeler DFA does not have this property, then it is not minimum (otherwise Algorithm \ref{alg:minimize} would collapse some of its states). Conversely, If a Wheeler DFA has this property, then Algorithm \ref{alg:minimize} does not collapse any state, i.e. the automaton is already of minimum size. We therefore obtain the following characterization:

\begin{theorem}[Minimum WDFA]\label{thm:characterization minimum} Let $\mathcal A$ be a Wheeler DFA, let $v_1 < v_2 < \dots < v_t$ be its co-lexicographically ordered states, and let $\approx$ be the Myhill-Nerode equivalence among them. $\mathcal A$ is the minimum Wheeler DFA recognizing $\mathcal L(\mathcal A)$ if and only if the following holds: for every $1 \leq i < t$, if $v_i \approx v_{i+1}$ then $\lambda(v_i) \neq \lambda(v_{i+1})$.
\end{theorem}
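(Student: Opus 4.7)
The plan is to derive both directions of the biconditional directly from Algorithm~\ref{alg:minimize} and Theorem~\ref{thm: min DFA}, exploiting the fact that the algorithm merges only adjacent states in the (unique) Wheeler order that agree on both $\approx$ and $\lambda$.

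For the ($\Rightarrow$) direction I would argue by contrapositive. Suppose there exists some index $i$ with $v_i \approx v_{i+1}$ and $\lambda(v_i) = \lambda(v_{i+1})$. Then the pair $\{v_i, v_{i+1}\}$ satisfies conditions 3(a) and 3(b) of Algorithm~\ref{alg:minimize}, so these two states lie in a common equivalence class of $\equiv_w$ of size at least $2$. Consequently $\mathcal A/_{\equiv_w}$ has strictly fewer states than $\mathcal A$. By Theorem~\ref{thm: min DFA}, $\mathcal A/_{\equiv_w}$ is a WDFA recognizing $\mathcal L(\mathcal A)$, hence $\mathcal A$ cannot itself be the (unique) minimum WDFA for $\mathcal L(\mathcal A)$.

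For the ($\Leftarrow$) direction, assume that for every $1 \leq i < t$, either $v_i \not\approx v_{i+1}$ or $\lambda(v_i) \neq \lambda(v_{i+1})$. Then no pair of consecutive states in the Wheeler order satisfies both clauses of step 3 of Algorithm~\ref{alg:minimize} simultaneously, so every maximal run has length $1$ and $\equiv_w$ is the identity relation on $V$. Hence $\mathcal A/_{\equiv_w}$ is (isomorphic to) $\mathcal A$ itself, and Theorem~\ref{thm: min DFA} tells us this quotient is the minimum WDFA recognizing $\mathcal L(\mathcal A)$; so $\mathcal A$ is minimum.

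There is no serious obstacle: the characterization is essentially just a rereading of what Algorithm~\ref{alg:minimize} does, combined with the uniqueness of the Wheeler order established in Corollary~\ref{lem:clusters}. The only point that warrants a line of care is verifying that non-adjacency cannot hide a mergeable pair: if $v_i$ and $v_j$ with $i < j$ were both $\approx$-equivalent and $\lambda$-equal but the intermediate states broke one of the two conditions, the algorithm would correctly keep $v_i$ and $v_j$ separate, and this is consistent with the characterization because Theorem~\ref{thm: min DFA} guarantees that $\mathcal A/_{\equiv_w}$ is the minimum WDFA despite such non-mergings. Thus the adjacent-pair condition is both necessary and sufficient.
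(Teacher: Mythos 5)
Your proposal is correct and follows essentially the same route as the paper: both directions are read off from Algorithm~\ref{alg:minimize} together with Theorem~\ref{thm: min DFA} (the violated adjacent pair forces a nontrivial $\equiv_w$-class, hence a strictly smaller equivalent WDFA; conversely the condition forces $\equiv_w$ to be the identity, so $\mathcal A/_{\equiv_w}=\mathcal A$ is the minimum). This matches the paper's own two-sentence justification preceding the theorem, with your version merely spelling out the details.
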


Theorem \ref{thm: min DFA} implies the following corollaries.

\begin{corollary}\label{thm: min DFA1}
	Given a WDFA $\mathcal A$ of size $n$, in $O(n\log n)$ time we can build the minimum WDFA recognizing $\mathcal L(\mathcal A)$.
\end{corollary}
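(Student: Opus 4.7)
The plan is to account for the running time of each of the four steps in Algorithm~\ref{alg:minimize}, noting that correctness is already guaranteed by Theorem~\ref{thm: min DFA}, so the only task is to show that the whole procedure runs in $O(n\log n)$ time when executed on an input of size $n$ (i.e.\ with $n$ edges, and hence at most $n+1$ states).

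First I would bound Step 1. Computing the Myhill--Nerode equivalence $\approx$ on a DFA of size $n$ is the classical DFA-minimization problem and can be done in $O(n\log n)$ via Hopcroft's partition-refinement algorithm. This is the bottleneck of the whole computation. Next, Step 2 asks us to prefix-sort $\mathcal{A}$, and since $\mathcal{A}$ is a Wheeler DFA by hypothesis, Theorem~\ref{thm:DFA linear} provides a linear-time algorithm that both verifies sortability and returns the unique Wheeler order $v_1<\cdots<v_t$ in $O(|V|+|E|)=O(n)$ time (by Corollary~\ref{lem:clusters} the order is unique, so the output is canonical).

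For Step 3 I would maintain, alongside the sorted sequence, an array mapping each $v_i$ to an integer identifier of its $\approx$-class (produced by Hopcroft in Step 1) and to its incoming label $\lambda(v_i)$ (directly readable from the input). A single left-to-right scan through $v_1,\ldots,v_t$ then merges two consecutive states $v_i,v_{i+1}$ into the same $\equiv_w$-class iff their $\approx$-class identifiers coincide \emph{and} $\lambda(v_i)=\lambda(v_{i+1})$; this produces exactly the maximal runs described in Algorithm~\ref{alg:minimize} in $O(n)$ total time. Finally, Step 4 builds the quotient $\mathcal{A}/_{\equiv_w}$: iterate over each edge $(u,v,c)\in E$, replace its endpoints by their $\equiv_w$-class representatives (constant time using the array produced in Step 3), deduplicate edges with a radix sort on the triples of class identifiers and labels, and mark accepting classes by a single pass over $F$. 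All of this is $O(n)$.

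Summing the four steps gives $O(n\log n)+O(n)+O(n)+O(n)=O(n\log n)$, as claimed. The only genuine obstacle to overcome is making sure that Hopcroft's algorithm is applied in its tight form (time $O(|E|\log|V|)$ rather than $O(|V|^2)$), which is standard; every other step is essentially linear because the Wheeler structure provides us both a canonical total order on states (Corollary~\ref{lem:clusters}) and a linear-time sorter (Theorem~\ref{thm:DFA linear}), removing any need for heavier machinery in Steps 2--4.
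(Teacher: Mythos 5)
Your proposal is correct and follows essentially the same route as the paper's proof: Hopcroft's algorithm for $\approx$ in $O(n\log n)$ time, Theorem~\ref{thm:DFA linear} for linear-time prefix-sorting, constant-time equivalence tests via an array of class representatives, and a linear scan to identify the runs. The only (harmless) difference is that you spell out the construction of the quotient automaton in Step~4, which the paper leaves implicit.
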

\begin{proof}
	We run Algorithm \ref{alg:minimize} computing $\approx$ with Hopcroft's algorithm~\cite{hopcroft1971n} ($O(n\log n)$ time), and prefix-sorting $\mathcal A$ with Theorem \ref{thm:DFA linear} ($O(n)$ time). Note that we can check $u\approx v$ in constant time by representing the equivalence relation as a vector $EQ[v] = [v]_{\approx}$, where we choose $V=\{1, \dots, |V|\}$ and where $[v]_{\approx}$ is any representative of the equivalence class of $v$ (e.g., the smallest one, which we can identify in linear time by radix-sorting equivalent states). Then, $u\approx v$ if and only if $EQ[u]=EQ[v]$. Using this structure, the runs of Algorithm \ref{alg:minimize} can easily be identified in linear time.  \qed
\end{proof}

\begin{corollary}\label{thm: min ADFA}
	Given an acyclic WDFA $\mathcal A$ of size $n$, in $O(n)$ time we can build the minimum acyclic WDFA recognizing $\mathcal L(\mathcal A)$.
\end{corollary}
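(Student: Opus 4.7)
The plan is to follow exactly the strategy of Corollary \ref{thm: min DFA1}, but replace the two $O(n\log n)$ components with linear-time substitutes that exploit acyclicity. Concretely, I would run Algorithm \ref{alg:minimize} step by step and argue that each of its four steps runs in $O(n)$ time when $\mathcal A$ is an acyclic WDFA.

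First, in Step 1, the Myhill–Nerode equivalence $\approx$ on an acyclic DFA can be computed in optimal $O(n)$ time by Revuz's linear-time acyclic DFA minimization algorithm (which processes states in reverse topological order and, at each state, uses a bucket/radix sort to refine classes according to the labels and class-ids of its successors). A topological order of the underlying DAG is available in $O(n)$ time by a standard traversal, so this step is truly linear. I would store the output as the vector $EQ[v]=[v]_\approx$ described in Corollary \ref{thm: min DFA1}, so that $u\approx v$ can be tested in constant time.

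Second, in Step 2, Theorem \ref{thm:DFA linear} already prefix-sorts any WDFA in $O(|V|+|E|)=O(n)$ time; applied to the acyclic $\mathcal A$ this yields the Wheeler ordering $v_1<\dots<v_n$ within the same budget. Third, Step 3 only scans this sorted sequence once, merging two adjacent states $v_i,v_{i+1}$ into the same class of $\equiv_w$ iff $EQ[v_i]=EQ[v_{i+1}]$ and $\lambda(v_i)=\lambda(v_{i+1})$; with the $EQ$ array and the stored incoming-label function $\lambda$, each comparison costs $O(1)$, so the whole scan is $O(n)$. Finally, Step 4 builds the quotient $\mathcal A/_{\equiv_w}$ by a single pass over $E$, relabeling each endpoint with its $\equiv_w$-class representative, which is again $O(n)$. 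Correctness and minimality of the output follow directly from Theorem \ref{thm: min DFA}, and the quotient of an acyclic graph remains acyclic, so the produced WDFA is an acyclic minimum WDFA for $\mathcal L(\mathcal A)$.

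The only nontrivial ingredient is the invocation of Revuz's linear-time algorithm for $\approx$ on acyclic DFAs; everything else is a straightforward linear-time adaptation of the proof of Corollary \ref{thm: min DFA1}. I do not expect any genuine obstacle here, since the two $O(n\log n)$ bottlenecks of the general case (Hopcroft's algorithm and the general prefix-sorting routine) both collapse to $O(n)$ when acyclicity is available.
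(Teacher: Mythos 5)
Your proposal matches the paper's proof exactly: both substitute Revuz's linear-time acyclic minimization for Hopcroft's algorithm, reuse Theorem \ref{thm:DFA linear} for the $O(n)$ prefix-sorting step, and inherit constant-time $\approx$-tests from the $EQ$-array construction of Corollary \ref{thm: min DFA1}. No gaps; the additional remarks on the linear-time scan and the acyclicity of the quotient are correct but already implicit in the paper's argument.
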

\begin{proof}
	We run Algorithm \ref{alg:minimize} computing $\approx$ with Revuz's algorithm~\cite{revuz1992minimisation} ($O(n)$ time), prefix-sorting $\mathcal A$ with Theorem \ref{thm:DFA linear} ($O(n)$ time), and testing  $u\approx v$ in constant time as done in Corollary \ref{thm: min DFA1}. \qed
\end{proof}

Note that Corollary \ref{thm: min ADFA} implies that we can, in \emph{optimal linear} time, build the minimum WDFA $\mathcal A/_{\equiv_w}$ recognizing \emph{any} input finite language $\mathcal L$ represented as a set of strings: we build the tree DFA accepting $\mathcal L$ and apply Corollary \ref{thm: min ADFA}. The corollary can be applied since trees are always Wheeler~\cite{ferragina2009compressing,gagie2017wheeler}. 
In the next subsection we treat the (more interesting) case where $\mathcal L$ is represented by a DFA.
Note that this result could already be achieved by unraveling the DFA into a tree and minimizing it using Corollary \ref{thm: min ADFA}. However, the intermediate tree could be exponentially larger than the output. 

\subsection{Acyclic DFAs to Smallest Equivalent WDFAs}\label{sec:ADFA->WADFA}

We show how to build the smallest acyclic Wheeler DFA equivalent to any acyclic DFA in output-sensitive time.
Let $\mathcal A = (V,E,F,s,\Sigma)$ be an acyclic DFA. We first minimize $\mathcal A$ using Revuz's algorithm~\cite{revuz1992minimisation} and obtain the equivalent minimum acyclic DFA $\mathcal A_1 = \mathcal A/_\approx = (V_1,E_1,F_1,s_1,\Sigma)$.
Let us denote $|V_1|=t$.
The idea is to run a modified version of the online Algorithm \ref{alg:step} on $\mathcal A_1$. The difference is that now we will \emph{solve} (not just detect) violations to the Wheeler properties without changing the accepting language.
The next step is to topologically-sort $\mathcal A_1$'s states (e.g. using Kahn's algorithm~\cite{kahn1962topological}).
At this point, we modify $\mathcal A_1$ in $t$ steps by processing its states in topological order. This defines a sequence of automata $\mathcal A_1, \mathcal A_1, \dots, \mathcal A_t$. At each step, the states of $\mathcal A_i$ are partitioned in two sets:
\begin{itemize}
	\item those not yet processed: $N_i = \{ v_{i+1}, v_{i+2}, \dots, v_{t} \}$, and 
	\item the remaining states $V_i - N_i$, sorted by a total ordering $<$ in a sequence $\mathtt{LEX}_i$.
\end{itemize}
At the beginning, $N_1 = \{v_2, \dots, v_t\}$ and $\mathtt{LEX}_1 = s$. Note that $N_t = \emptyset$ (i.e. at the end we will have processed all states). At each step $i$, we maintain the following invariants:
\begin{enumerate}
	\item $\mathcal L(\mathcal A_i) = \mathcal L(\mathcal A_1)$.
	\item States in $\mathtt{LEX}_i$ are sorted by a  total order $<$ that does not violate the Wheeler properties among states in $\mathtt{LEX}_i$ itself: in Definition \ref{def_WG}, we require $u_1,u_2,v_1,v_2 \in \mathtt{LEX}_i$.
	\item for each $j=1, \dots, |\mathtt{LEX}_i|-1$, if $\mathtt{LEX}_i[j] \approx \mathtt{LEX}_i[j+1]$ then $\lambda(\mathtt{LEX}_i[j]) \neq \lambda(\mathtt{LEX}_i[j+1])$.
\end{enumerate}
Invariant {\bf 1} implies $\mathcal L(\mathcal A_t) = \mathcal L(\mathcal A)$. Since $N_t=\emptyset$ and $\mathtt{LEX}_t$ contains all $\mathcal A_t$'s states, invariant {\bf 2} implies that $\mathcal A_t$ is Wheeler (note that intermediate automata $\mathcal A_i$, with $1 < i < t$ might be non-Wheeler). Finally, invariant {\bf 3} and Theorem \ref{thm:characterization minimum} imply that $\mathcal A_t$ is the minimum WDFA accepting $\mathcal L(\mathcal A_t)$. As a result, $\mathcal A_t = \mathcal A/_{\equiv_w}$. We describe all the  details of our algorithm in Appendix \ref{app:DFA->WDFA} for space constraints; here we give an overview of the procedure. The idea is to process states in topological order as done in Theorem \ref{thm:n log n}. This time, however, we also solve inconsistencies of type 1 and 2 among nodes in $\mathtt{LEX}_{i} \cup \{v_{i+1}\}$ by splitting nodes in $\approx$-equivalent copies. 
Here, \emph{splitting} means creating two or more copies of a state $v$ in such a way that (i) each copy duplicates all $v$'s outgoing edges, (ii) $v$'s incoming edges are distributed (not duplicated) among the copies, and (iii) each copy is a final state if and only if $v$ is a final state. 
Our splitting process creates $\approx$-equivalent nodes, therefore the accepted language never changes (invariant {\bf 1} stays true).
Moreover, since the states of $\mathcal A$ have already been collapsed by the equivalence $\approx$, after inserting nodes (or their copies) in $\mathtt{LEX}_{i}$ we never create runs of length greater than one of $\approx$-equivalent states with equal incoming labels (invariant {\bf 3} stays true). As a result, we incrementally build the minimum WDFA $\mathcal A/_{\equiv_w}$ recognizing $\mathcal L(\mathcal A)$. Since our algorithm never deletes edges, the running time is bounded by the output's size (which could nevertheless be much larger --- or smaller --- than $\mathcal A$). 
In Appendix \ref{app:DFA->WDFA} we show:

\begin{theorem}\label{thm: ADFA -> WADFA}
	Given an acyclic DFA $\mathcal A$ of size $n$, we can build and prefix-sort the minimum acyclic WDFA, of size $m$, recognizing $\mathcal L(\mathcal A)$ in $O(n + m\log m)$ time.
\end{theorem}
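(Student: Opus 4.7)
The plan is to combine a preprocessing step with an incremental online construction, in the spirit of Theorem \ref{thm:n log n} but with node splitting to repair Wheeler violations in place. First, I would apply Revuz's algorithm to compress $\mathcal A$ into its minimum equivalent acyclic DFA $\mathcal A_1 = \mathcal A/_\approx$ in $O(n)$ time, and then topologically sort the states of $\mathcal A_1$ using Kahn's algorithm, also in $O(n)$. This establishes, once and for all, the Myhill-Nerode classes on the input side, so that whenever we later duplicate a node into $\approx$-equivalent copies we know the copies preserve $\mathcal L$.

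The main loop processes the states $v_1,\dots,v_t$ of $\mathcal A_1$ in topological order, maintaining $\mathtt{LEX}_i$ in a dynamic order-maintenance data structure — exactly the one supporting Theorem \ref{thm:n log n} — which allows inserting a new node into its correct co-lexicographic position in $O(\log m)$ time. When inserting $v_{i+1}$ together with all its incoming edges $(u_1,v_{i+1},a),\dots,(u_k,v_{i+1},a)$, two kinds of events may arise: (a) the Wheeler properties of Definition \ref{def_WG} are violated by the simultaneous presence of several predecessors with incompatible relative ranks, or (b) $v_{i+1}$ would fall adjacent to some $\approx$-equivalent $w\in\mathtt{LEX}_i$ with the same incoming label, breaking invariant 3. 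In case (a), I resolve the violation by splitting $v_{i+1}$ into several $\approx$-equivalent copies, partitioning the incoming edges $\{(u_j,v_{i+1},a)\}$ among the copies so that each copy receives a consecutive (in $<$) group of predecessors; each copy is then inserted separately, and Corollary \ref{lem:clusters} guarantees that each copy has a unique legal insertion point. In case (b), I merge $v_{i+1}$ with $w$ (or rather, avoid creating a second copy at that position), which preserves $\mathcal L$ precisely because $w \approx v_{i+1}$. Since each copy duplicates $v_{i+1}$'s outgoing edges and accepting status and the original's incoming edges are partitioned (not duplicated) among them, invariant 1 is preserved; invariant 2 is enforced by construction at each insertion; invariant 3 is preserved by the merge rule in case (b).

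The correctness conclusion is immediate from the loop invariants together with Theorem \ref{thm:characterization minimum}: the final automaton $\mathcal A_t$ is Wheeler, recognizes $\mathcal L(\mathcal A)$, and has no two co-lex adjacent $\approx$-equivalent states with equal incoming label, hence equals $\mathcal A/_{\equiv_w}$. For the running time, the Revuz/Kahn preprocessing costs $O(n)$. Each split produces a state that survives all the way to $\mathcal A_t$, so the total number of splits (and hence total number of $\mathtt{LEX}$-insertions) is at most $m$; each insertion costs $O(\log m)$ via the order-maintenance structure, and distributing the $O(n)$ edges of $\mathcal A_1$ among the copies takes $O(n+m)$ overall. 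This gives the claimed $O(n + m\log m)$ bound.

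The main obstacle will be to bound the number of splits by $O(m)$ rather than by some larger cascading quantity. The concern is that splitting a node $v_{i+1}$ to fix a violation might force re-splitting of some already-inserted $u_j \in \mathtt{LEX}_i$, triggering a chain reaction. The resolution should come from two facts: first, the topological order ensures that once a node has been finalized in $\mathtt{LEX}_i$ its predecessors are fixed and its Wheeler position is determined by them alone, so it is never revisited; second, any new copy introduced during processing of $v_{i+1}$ corresponds to a distinct equivalence class of predecessor prefixes that must appear as a separate state in \emph{any} Wheeler DFA for $\mathcal L(\mathcal A)$ (again by Corollary \ref{lem:clusters}), hence every split is ``charged'' to a genuine state of the minimum output $\mathcal A/_{\equiv_w}$. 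Formalizing this charging argument — i.e.\ exhibiting an injection from the sequence of splits into the states of the final output — is the delicate combinatorial step, and is what ultimately makes the algorithm output-sensitive.
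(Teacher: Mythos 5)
Your overall architecture matches the paper's: Revuz minimization, Kahn's topological sort, incremental insertion into $\mathtt{LEX}_i$ with the three invariants, splitting the incoming node into $\approx$-equivalent copies per maximal run of consecutive predecessors, and the output-sensitive accounting based on the fact that every created state survives to $\mathcal A_t$. However, there is a genuine gap in the claim that ``once a node has been finalized in $\mathtt{LEX}_i$ \dots it is never revisited.'' This is false, and the paper's construction (Appendix \ref{app:DFA->WDFA}) explicitly has to split \emph{already-placed} nodes to handle what it calls inconsistencies of type~2. Concretely: suppose $w_1 < u < w_2$ are all already in $\mathtt{LEX}_i$, that $w_1$ and $w_2$ both have $a$-edges into a node $z$ that was placed at an earlier step (legal at that time, since $u$'s $a$-successor did not yet exist in the partial automaton), and that the node $v$ you are now inserting has the single incoming edge $(u,v,a)$. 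Since $v$ and $z$ are (copies of) distinct states of the minimal DFA $\mathcal A_1$, they are not $\approx$-equivalent and cannot be identified; but Wheeler property (ii) applied to the three $a$-edges forces $z \leq v \leq z$, a contradiction. No splitting of $v$ alone can repair this: the only fix is to split the already-placed $z$ into two copies $z' < v < z''$, routing $(w_1,z,a)$ to $z'$ and $(w_2,z,a)$ to $z''$ (this is exactly the $z_3 \mapsto z_3', z_3''$ step in Figure \ref{fig:make WDAG}). Your algorithm, as stated, has no move available here and would either halt or emit a non-Wheeler automaton.

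Two smaller remarks. First, your case (b) (merging $v_{i+1}$ with an adjacent $\approx$-equivalent, equally-labeled $w$) is vacuous: after Revuz minimization all states of $\mathcal A_1$ are pairwise $\approx$-inequivalent, and every node in $\mathtt{LEX}_i$ is a copy of some $v_j$ with $j\le i$, hence never $\approx$-equivalent to $v_{i+1}$; invariant~3 is maintained automatically, with no merging, precisely because the input was minimized first. Second, your worry about cascading splits is resolved the same way once type-2 splits are added: the algorithm only ever inserts states and edges, never deletes them, and every inserted object is present in the final minimum WDFA $\mathcal A/_{\equiv_w}$ (by Theorem \ref{thm:characterization minimum} no created state can be collapsed back), so the total number of operations is $O(m)$ regardless of how splits of new and old nodes interleave. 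The charging argument you flag as delicate is thus the easy part; the missing idea is the repair of already-inserted siblings.
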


Theorem \ref{thm: ADFA -> WADFA} solves the problem of indexing deterministic DAGs for linear-time pattern matching queries in nearly-optimal time with a solution of minimum size. 
Note that the hardness result of Equi et al.~\cite{equi2019complexity2} implies that, under the Orthogonal Vectors hypothesis, in the worst case the minimum WDFA has size $\Omega(n^{2-\epsilon})$ for any constant $\epsilon>0$. We can do better: in Appendix \ref{sec:worst case blow up} we show that, in the worst case, the minimum WDFA can be of size $\Omega(2^{n/4})$.

\appendix

\section{Indexing Wheeler Automata}\label{sec:indexing}

We show that any Wheeler NFA can be efficiently indexed in order to support fast membership queries in its accepting language or in its substring/suffix closure. 
Let $\mathcal A$ be any Wheeler NFA. We first remove all states that do not lead to a final state. This preserves the accepted language, the total ordering, and the Wheeler properties. We then use our algorithms to convert the automaton to a WDFA, prefix-sort it in polynomial time, and build a (generalized) FM-index on the graph as described in~\cite{gagie2017wheeler}.
We mark in a bitvector $B[1..|V|]$ supporting constant-time \emph{rank} queries~\cite{jacobson1988succinct} all accepting states of the Wheeler NFA in our array $\mathtt{LEX}$ containing the states in co-lexicographic order. To check membership of a word $w$, we search the word $\#w$ and get a range $\mathtt{LEX}[L,R]$ of all states reachable from the root by a path labeled $w$. At this point, $w$ is accepted if and only if $B[L,R]$ contains at least one bit set (constant time using \emph{rank} on $B$). Note that this procedure works in $O(w\log\sigma)$ time also if the original automaton is nondeterministic (this, in general, is not possible for general NFAs). If we search for $w$ instead of $\#w$, then we get the range of states reachable by a path labeled $uw$, for any $u\in\Sigma^*$. This range is non-empty if and only if $w$ belongs to the substring closure of $\mathcal L(\mathcal A)$. Finally, if we search a word $w$ and get a range $\mathtt{LEX}[L,R]$, then $w$ is in the suffix closure of $\mathcal L(\mathcal A)$ if and only if $B[L,R]$ contains at least one bit set. 

\section{Conclusions and Future Extensions}

In this paper, we have initiated the study of Wheeler languages, that is, regular languages that can be indexed via prefix-sorting techniques. On our way, we provided new results of independent interest: (i) we provided a new class of NFAs for which the minimization problem can be approximated up to multiplicative factor 2 in polynomial time and that admit fast membership and pattern matching algorithms, and (ii) we solved the problem of indexing finite languages with prefix-sortable DFAs of minimum size. Our work leaves several intriguing lines of research (some of which will be explored in future extensions of this paper). First of all, is the problem of recognizing Wheeler languages (encoded, e.g. as regular expressions) decidable? We believe that the answer to this question is positive: the \emph{Wheelerness} of a regular language seems to translate into particular constraints (that can be verified in bounded time) on the topology of its minimum accepting DFA. Once a regular language has been classified as Wheeler, can we build the minimum accepting Wheeler DFA? Also in this case, we believe that the task can be solved by iterating conflict-resolution (Section \ref{sec:ADFA->WADFA}) from the minimum DFA until the process converges to the minimum Wheeler DFA.

\section{Proofs of Section \ref{sec:WL}}

\begin{proof} (of Lemma \ref{3n})

Let us order the elements of $L$ by the relation $<$, and let us denote by $L[i]$ the $i$-th element in the ordering. The notation $L[i,j]$, with $j\geq i$, denotes the interval $\{L[k]:i\leq k \leq j\}$. In particular, $L[1,n] = L$.

We say that an interval $I\in \mathcal C$ is \emph{maximal} if $I$ is not the prefix nor the suffix of any other interval in $\mathcal C$. 
We say that an interval $I\in \mathcal C$ is \emph{prefix} (resp. \emph{suffix}) if $I$ is the \emph{proper} prefix (resp. suffix) of a maximal interval of $\mathcal C$. 
Note that, by this definition, intervals of $\mathcal C$ are either maximal or prefix/suffix. Note also that there could be elements of $\mathcal C$ being both prefix and suffix intervals.

(1) We first prove that (1.i) $\mathcal C$ contains at most $n$ prefix intervals, then (1.ii) slightly improve this bound to $n-1$, and finally (1.iii) show that the sum between the number of maximal and suffix intervals is at most $n$.
To prove (1.i), we show that every $L[j]$, $1\leq j \leq n$, can be the largest element of at most one prefix interval. In turn, this is shown by considering the prefixes of any two pairwise distinct maximal intervals of $\mathcal C$. 
Consider two distinct maximal intervals $I=L[i,j]$ and $J=L[i',j']$. If $I$ and $J$ do not overlap (i.e. $j < i'$ or $j' < i$), then the property is trivially true: if $I'$ and $J'$ are prefixes of $I$ and $J$, respectively, then $\max(I') \neq \max(J')$. Consider now the case where $I$ and $J$ overlap. Without loss of generality, we can assume $i < i' \leq j < j'$ (the strict inequalities follow from the fact that, by maximality, it cannot be $i=i'$ or $j=j'$). Assume, for contradiction, that $\mathcal C$ contains two intervals $L[i,j'']$ and $L[i',j'']$ such that $i' \leq j'' < j$, i.e. $L[i,j'']$ and $L[i',j'']$ are (proper) prefixes of $I$ and $J$, respectively, that share their largest element $j''$. Then, we have $i < i' \leq j'' < j$: interval $L[i',j'']$ is strictly contained inside $L[i,j]\in\mathcal C$ (i.e. $L[i',j''] \subset L[i,j]$) and it is not a prefix nor a suffix of it. This is forbidden by the definition of prefix/suffix family. From this contradiction we deduce that any two distinct prefix intervals $I', J' \in \mathcal C$
satisfy $\max(I') \neq \max(J')$, which implies that $\mathcal C$ contains at most $n$ prefix intervals. 

To improve the above bound to $n-1$ and prove (1.ii), consider the rightmost maximal interval $I=L[i,j]$, i.e. the one having largest $j$. We show that $j$ cannot be the maximum element of any prefix interval. Assume, for contradiction, that such a prefix interval $K=L[i',j]$ exists. Then, the corresponding maximal interval $J = L[i',j']$ of which $K$ is a proper prefix satisfies $j'>j$. This contradicts the fact that $I$ is the rightmost maximal interval. 

The next step is to prove (1.iii), i.e. that the sum between the number of maximal and suffix intervals is at most $n$. We proceed by induction on the number $M$ of maximal intervals. If $M=1$, then the unique maximal interval $I=L[i,j]$ contains at most $j-i$ suffix intervals. In total, $\mathcal C$ contains at most $1 + (j-i) \leq n$ maximal and suffix intervals. For $M>1$, consider the maximal interval $I = L[i,j]$ with minimum $j$ (call it the ``leftmost''). Now, consider the immediate maximal ``successor'' $J = L[i',j']$ of $I$, i.e. the maximal interval with the smallest endpoint $j'\geq j$. 
Clearly, such $j'$ satisfies $j'>j$, otherwise $J$ would be a suffix of $I$ (contradicting maximality of $J$).
Note that it must also be the case that $i'>i$: if $i=i'$, then $I$ would be a prefix of $J$ (contradicting maximality of $I$); on the other hand, if $i'<i$ then $I$ would be strictly contained in $J$, contradicting the definition of prefix/suffix family. We are left with two cases: 

(a) $i \leq j < i' \leq j'$. In this case, $I$ and $J$ are disjoint. As seen above, $I$ contributes to at most one maximal interval ($I$ itself) and $j-i$ suffix intervals. In total, $I$ contributes to at most $j-i+1$ maximal and suffix intervals. We are left to count the number of maximal and suffix intervals in the remaining portion of the linear order $L[i',...,n]$. Note that there are no other intervals to be considered: if $L[i'', j'']$ is a maximal interval in  $\mathcal C$, different from $I,J$, 
then $j''>j'$ and hence $i''>i'$ or $L[i'',j'']$ would contain $J$.   The portion $L[i',...,n]$ contains $M-1$ maximal intervals, so we can apply the inductive hypothesis and obtain that this segment contains at most $n-i'+1$ maximal and suffix intervals. In total, we have that $L[1,...,n]$ contains at most $(j-i+1) + (n-i'+1)$ maximal and suffix intervals. Since $i'>j$ and $i\geq 1$, this quantity is at most $n$. 

(b) $i <  i' \leq j < j'$. Denote by $k = i'-i$ the number of $L$'s elements belonging to $I \setminus J$. Then, $\mathcal C$ can contain at most $k$ proper suffixes of $I$: $L[i+1,j]$, $L[i+2,j]$, ..., $L[i', j]$. All other suffixes of $I$ are strictly contained inside $J$, and cannot belong to $\mathcal C$ due to the prefix/suffix property. Actually, one of those suffixes, $L[i',j]$, is a prefix of $J$ so it has already been counted above in points (1.i) and (1.ii). We are left with $k-1$ suffixes to take into account, plus the maximal interval $I$ itself: in total, $k = i'-i$ maximal and suffix intervals. As noted above, all remaining  maximal and suffix intervals of $\mathcal C$ to take into account are those contained in $L[i',n]$.  Since $L[i',n]$ contains $M-1$ maximal intervals, we can apply the inductive hypothesis and deduce that it contains at most $n-i'+1$ maximal and suffix intervals. In total, $L[1,n]$ contains therefore at most $(i'-i) + (n-i'+1) \leq n$ maximal and suffix intervals. This concludes the proof of the upper bound $|\mathcal C| \leq 2n-1$.

(2) Consider the prefix/suffix family containing just one maximal interval and all its proper prefixes and suffixes:
$\mathcal C = \{  L[1,n], L[1,1], \dots, L[1,n-1], L[2,n], \dots, L[n,n]\}$. This family satisfies $|\mathcal C| = 2n-1$. \qed
\end{proof}

 \begin{proof} (of Lemma \ref{convex+order})
 We just prove transitivity when  $I<^{i} J$ and  $J<^{i} K$  are  witnessed by $x_0\in I$ satisfying   $ (\forall y\in J)  ( x_0< y )$,  and $  z_0 \in K$ satisfying    $ (\forall y \in J) ( y< z_0)$, respectively (the other cases are similar).  We claim that $z_0>x$, for all $x\in I$. Suppose, for contradiction, that   there exists $x_1\in I$ with $z_0\leq x_1$;  then, from $x_0<y<z_0\leq x_1$ for all $y\in J$   and the fact that $I$ is an interval, it follows that $z_0\in I$,  $J\subseteq I$,  so that, by   prefix/suffix property of $\mathcal C$,    $J$ is either a prefix or a suffix of $I$.   Since $ x_0< y$ for all $y\in J$, we see that $J$ must be a suffix of $I$  and this, knowing that $z_0\in I$,  implies   $z_0\in  J$. A contradiction.  
 \end{proof}

In the following proofs, we always refer to a WNFA $\mathcal A=(V,E,F,s, \Sigma,<)$.
 \begin{proof}  (of Lemma \ref{prec_versus_minus})
\begin{enumerate}
\item[(1)]  Suppose   $\alpha\in I_u, \beta\in I_v$ and  $\{\alpha,\beta\}\not\subseteq I_v\cap I_u$. From this we have that  $\alpha\in I_u\setminus I_v$  or   $\beta\in I_v\setminus I_u$,  hence  $u\neq v$ and $\alpha\neq \beta$ follows.

If $u=s$ or $v=s$, either $ \alpha $ or $ \beta $ is the empty string $ \epsilon $ and   the result follows easily. Hence, we suppose $u\neq s \neq v$ and (hence)  $\alpha\neq \epsilon \neq \beta$. 
 
To see the left-to-right implication, assume $\alpha \prec \beta$: we prove that $u<v$ 
 by induction on the maximum betwewn $|\alpha|$ and $|\beta|$. If $|\alpha| =|\beta|=1$, then the property follows from the Wheeler-(i). If $\max(|\alpha|, |\beta|)>1$ and $ \alpha $ and $ \beta $  end with different letters, then again the property follows from  Wheeler-(i). Hence, we are just left with the case in which $\alpha=\alpha' e$ and $\beta=\beta' e$, with $e\in \Sigma$.
If $\alpha\prec \beta$,  then  $\alpha'\prec\beta'$. Consider  states
$u',v'$ such that $\alpha' \in I_{u'}, \beta'\in I_{v'}$, and $(u',u,e), (v',v,e)\in E$. 
Then $\alpha'\in I_{u'}\setminus I_{v'}$ or  $\beta'\in I_{v'}\setminus  I_{u'}$ because otherwise we would have  $\alpha'\in I_{v'}$   and $\beta'\in I_{u'}$   which imply respectively $\alpha\in I_v$ and $\beta\in I_u$.  By induction we have $u'<v'$ and therefore, by Wheeler-(ii), $u\leq v$. From $u\neq v$ it follows  $u<v$.   

 Conversely, for the right-to-left implication, suppose   $u<v$.  Since  $\alpha \neq \beta$,  if it were $\beta\prec \alpha$ then, by the above, we would have $v<u$:  a contradiction. Hence, $\alpha\prec \beta $ holds. 
\item[(2)]  Recall that, by definition,  $ \alpha \in I_u  \text{ if and only if } u \in I_\alpha$ and $ \beta \in I_v  \text{ if and only if } v \in I_\beta$. Hence, the hypothesis that $u\in I_\alpha, v\in I_\beta$ and $\{u,v\} \not\subseteq I_\beta \cap I_\alpha$, is equivalent to say that $ \alpha\in I_u, \beta\in I_v$ and $\{\alpha,   \beta\} \not\subseteq I_v \cap I_u$.
  Therefore, (2) follows from (1).
\end{enumerate} 
 \end{proof}

 \begin{proof} (of Lemma \ref{convex_sets})
 \begin{enumerate}
\item  Suppose $\alpha\prec \beta\prec\gamma$ with $\alpha, \gamma\in I_u$ and $\beta \in  \text{\em Pref}(\mathcal L(\mathcal A))$;  we want to prove that $\beta \in I_u$. From $\beta\in  \text{\em Pref}(\mathcal L(\mathcal A))$ it follows that  there exists  a state $v$ such that
 $\beta \in I_v$.  Suppose, for contradiction, that $\beta\not \in I_u$. Then $\beta \in I_v\setminus I_u$ and from $\alpha\prec \beta$ and  Lemma \ref{prec_versus_minus}, it follows $u<v$. Similarly, applying again Lemma \ref{prec_versus_minus}, from $\beta\prec \gamma$ we have  $v<u$, which is a contradiction. 
 \item   Suppose, for contradiction, that  $I_u, I_v\in  I_{V}$ are such that $I_u\subsetneq I_v$ and  $I_u$ is neither a prefix nor a suffix of $I_v$.  In these hypotheses  there
must exist $\alpha, \alpha' \in I_v\setminus I_u$ and $\beta \in I_u$ such that  $\alpha \prec \beta \prec \alpha'$. Lemma \ref{prec_versus_minus}  implies
$v<u<v$, which is a contradiction. 
 \end{enumerate}
 Points $(3), (4)$ follow  similarly from Lemma  \ref{prec_versus_minus}. 
 \end{proof}
 
 \begin{proof} (of Lemma \ref{Wdeterminization})
 
The verification that $ \mathcal L(\mathcal A^{d}) = \mathcal L(\mathcal A)$ follows the same lines of the proof in the classical regular case.   
We  prove that $<^{d}$ is a Wheeler order on the states of the automaton $\mathcal A^{d}$.  By Lemma  \ref{convex_sets}, the set  $V^{d}=I_{\text{\em Pref}(\mathcal L(\mathcal A))}$ of states of  $\mathcal A^{d}$ is a prefix/suffix family of intervals, so that,  by   Lemma \ref{convex+order},    $<^{d}$ is a linear order on $V^{d}$.
Next, we check the Wheeler properties.
The only vertex with  in-degree $0$  is $I_\epsilon$, and it clearly precedes those with positive in-degree.
For any two edges  $(I_\alpha,I_{\alpha a_1}, a_1)$,  $(I_\beta,I_{\beta a_2}, a_2)$ we have:
	\begin{itemize}
		\item[(i)]  if $a_1 \prec a_2$  then  $\alpha a_1\prec \beta a_{2}$,  and  from Lemma \ref{compare} it follows   $I_{\alpha a_1}\leq^d I_{\beta a_2}$.  Moreover, by the input consistency of  $\mathcal A$, states in $I_{\alpha a_1}$ are $a_1$-states, while states in $I_{\beta a_2}$ are $a_2$-states; hence   $I_{\alpha a_1}\neq I_{\beta a_2}$, so that $I_{\alpha a_1}<^d I_{\beta a_2}$ follows.
		\item[(ii)]   If $a=a_{1}=a_{2}$ and  $I_\alpha < I_\beta$,   from Lemma \ref{compare} it follows $\alpha\prec \beta$, so that $\alpha a \prec \beta a$ and, using again   Lemma \ref{compare}, we obtain   $I=I_{\alpha a}\leq^i I=I_{\beta a}$.
\end{itemize}
Finally, we prove that $|V^d|\leq 2n-1-|\Sigma|$. By the Wheeler properties, we know that the only interval in $I_{\text{\em Pref}(\mathcal L(\mathcal A))}$  containing the initial state $s$ of the automaton $\mathcal A$ is $\{s\}$ and that the remaining intervals can be partitioned into $|\Sigma|$-classes, by looking at the letter labelling incoming edges. 
If $\Sigma=\{a_1, \ldots, a_k\}$, and, for every $i=1,\ldots,k$, we denote by $m_i$    the number of states of   the automaton $\mathcal A$  whose incoming edges are labeled $a_i$, we have  
$\sum_{i=1}^k m_i=n-1$.  Using Lemma \ref{3n} we see that the intervals in $V^d$ composed by  $a_i$ states are at most $2m_i-1$, so that the total number of intervals in $V^d$ is at most 
\[1+ \sum_{i=1}^k (2m_i-1)=1+ 2( \sum_{i=1}m_i )-k= 1+2(n-1)-k=2n-1-k= 2n-1-|\Sigma|.\]
  	
 \end{proof}

 \begin{proof}[of Lemma \ref{ComputeWdeterminization}]
 
 We apply the standard powerset construction algorithm starting from the original WNFA $\mathcal A$.
 By Lemma \ref{Wdeterminization}, 
 the powerset algorithm does not generate more than $2n-1-|\Sigma|$ distinct sets of states. Remember that such algorithm starts from the set $\{s\}$ containing the NFA's source and simulates a visit of the final DFA, whose states are represented as sets of states of the original NFA. At each step, the successor with label $a\in\Sigma$ of a set $K$ is computed by calculating all the $a$-successors of states in $K$, and taking their union. In the worst case, $|K| = O(n)$ and each state in $K$ has $O(n)$ $a$-successors. After having obtained the $a$-successor $K'$ of $K$, we need to check if $K'$ had already been visited. Since $K'$'s cardinality is at most $n$, this operation takes $O(n)$ time using a standard dictionary (e.g. a hash table). Overall, we spend $O(n^2)$ time to simulate an edge traversal of the final DFA. By Lemma \ref{Wdeterminization}, we visit at most $O(n)$ distinct sets of states. Overall, the powerset algorithm's complexity is $O(n^3)$. \qed
 \end{proof}

\begin{proof} (of Theorem \ref{Myhill-Nerode})

\begin{itemize}
  \item[] (1) $\Rightarrow$ (2)  If  $\mathcal A$ is   a  Wheeler NFA  such that $\mathcal L=  \mathcal L(\mathcal A)$,  consider the following    equivalence relation $\sim_{\mathcal A}$ over $ \text{\em Pref}(\mathcal L)$:
  \[\alpha \sim_{\mathcal A} \beta ~~\Leftrightarrow I_\alpha=I_\beta.\]
  Using the fact that the  $I_\alpha$ are intervals (see Lemma \ref{convex_sets}), and other properties of Wheeler automata,  one can easily prove that the equivalence $\sim_{\mathcal A}$ is a refinement of  $\equiv_{\mathcal L}^{c}$, so that each $\equiv_{\mathcal L}^{c}$-class is a union of $\sim_{\mathcal A}$-classes. Moreover, the equivalence $\sim_{\mathcal A}$ has finite index, bounded by the number of intervals $I_\alpha$, hence  $\equiv_{\mathcal L}^{c}$ has finite index as well. 
  \item[] (2) $\Rightarrow$ (3)    We prove that the relation  $\equiv_{\mathcal L}^{c}$ is a convex, input consistent, right invariant equivalence, and that  $\mathcal L$ is a union of $\equiv_{\mathcal L}^{c}$-classes; this last property is true simply  because   $\mathcal L$ is a union of $\equiv_{\mathcal L}$-classes and 
  $\equiv_{\mathcal L}^{c}$ is a refinement of $ \equiv_{\mathcal L}$. The fact that $\equiv_{\mathcal L}^{c}$ is   convex and   input consistent     follows directly from its definition.
  We prove that $\equiv_{\mathcal L}^{c}$ is right invariant.  Suppose  $\alpha , \alpha' , \gamma\in   \text{\em Pref}(\mathcal L)$ and $\alpha\equiv_{\mathcal L}^{c} \alpha'$.   Note that
 if $\alpha \gamma \in   \text{\em Pref}(\mathcal L)$ then  there exists $\nu \in \Sigma^*$ such that  $\alpha \gamma \nu \in  \mathcal L$,  so that  $\alpha' \gamma \in  \text{\em Pref}(\mathcal L)$ follows from $\alpha\equiv_{\mathcal L} \alpha'$. Hence, we are left to prove that $\alpha\gamma \equiv_{\mathcal L} ^{c} \alpha'\gamma$. 
 We easily prove the following:
 \begin{itemize}
\item[-]  $\alpha \gamma \equiv_{\mathcal L}  \alpha'\gamma$ (it follows from $\alpha\equiv_{\mathcal L}\alpha'$). 
\item[-] If $\alpha \gamma \prec \beta' \prec \alpha' \gamma$,  for $\beta'\in   \text{\em Pref}(\mathcal L)$, then   $\beta'  \equiv_{\mathcal L} \alpha \gamma$: from  $\alpha \gamma \prec \beta' \prec \alpha' \gamma$ it follows that    $\beta'=\beta \gamma$,  for some $\beta\in  \text{\em Pref}(\mathcal L)$,    and $\alpha \prec \beta \prec \alpha'$. Since $\alpha, \alpha'$ belong to the same $\equiv_{\mathcal L}^{c} $ class, then     $\beta\equiv_{\mathcal L} \alpha$, and $\beta \gamma \equiv_{\mathcal L} \alpha \gamma$ follows from the right invariance of $\equiv_{\mathcal L} $.
\end{itemize}
Since $\alpha \gamma, \beta \gamma $ end with the same letter, the previous points imply  that  $\alpha\gamma \equiv_{\mathcal L} ^{c} \alpha'\gamma$ and  $\equiv_{\mathcal L}^{c} $ is right invariant.  
  
\item[] (3) $\Rightarrow$ (4)   Suppose   $\mathcal L$ is a union of classes of  a convex, input consistent,  right invariant  equivalence  relation $ \sim $  of finite index. We build a WDFA    ${\mathcal A}_\sim=(V_{\sim}, E_{\sim}, F_{\sim}, s_{\sim}, \Sigma,  <_{\sim} )$ such that $\mathcal L=\mathcal L(\mathcal A)$ as follows: 
\begin{itemize}
\item[-] $ V_{\sim}=\{[\alpha]_{\sim}~|~\alpha \in  \text{\em Pref}(\mathcal L)\}$;
\item[-] $ s_{\sim}=[\epsilon]_{\sim} $ (note that, by input consistency,  $[\epsilon]_{\sim}=\{\epsilon\}$);  
\item[-] $ (I,J,e) \in E_{\sim}$ if and only if $ Ie \cap \text{\em Pref}(\mathcal L) \neq \emptyset $ and $ Ie \subseteq J $, where $Ie=\{\alpha e ~|~ \alpha \in I\}$  (note that $ J $, if existing, is unique by right invariancy);
\item[-] $ F_{\sim}= \{I ~|~I \subseteq \mathcal L\}$;
\item[-] $<_{\sim} = \prec^{i}$ (being pairwise disjoint and convex,  the classes   in $V_{\sim}$ form a prefix/suffix family of intervals  of  $( \text{\em Pref}(\mathcal L), \prec)$).    
\end{itemize}

\medskip

Note that   all words in $\text{\em Pref}(\mathcal L)$  label a computation in ${\mathcal A}_\sim$. 
We claim that, for all $\sim$-class $I$ and $\alpha\in\text{\em Pref}(\mathcal L)$:
\[
\alpha\in I  ~~\Leftrightarrow ~~ s_\sim \rightsquigarrow I \text{ in } \mathcal{A}_{\sim} \text{ reading } \alpha.\]

We  prove the implication from right to left  by induction on the length of  $\alpha\in  \text{\em Pref}(\mathcal L)$.
 
 If  $\alpha=\epsilon$ then the claim follows from the definition of $s_\sim$. 

If $\alpha=\alpha' e\in  \text{\em Pref}(\mathcal L)$ with  $e\in \Sigma$,  then     $\alpha'\in  \text{\em Pref}(\mathcal L)$.  Then, if $K\in V_{\sim}$ is such that  $s_\sim \rightsquigarrow K$ reading $\alpha'$ in $\mathcal{A}_{\sim}$,  by induction we know that $\alpha'\in K$. 
Since  $\alpha =\alpha' e\in  Ke $,  we have   $ Ke\cap \text{\em Pref}(\mathcal L) \neq \emptyset$;  by right invariance of $\sim$ there exists a unique $J$  such that  $Ke\subseteq  J$.  From
$\alpha=\alpha'e\in Ke\subseteq J$ it follows $\alpha\in J$, and also  $J=I$,    because ${\mathcal A}_\sim$ is a deterministic automaton and $s_\sim \rightsquigarrow I$, $s_\sim \rightsquigarrow J$,  both by reading $\alpha$. 

In order to prove the  implication from left to right of the claim,  suppose $\alpha \in I$, and  $J\in  V_\sim$  is such that
 $s_\sim \rightsquigarrow J \text{ in } \mathcal{A}_{\sim}$ reading $\alpha$. Then, by the first part of the proof of the claim we obtain $\alpha \in J$; since $J$ and $I$ are equivalence classes and $\alpha \in I \cap J$, it follows that $I=J$ and $s_\sim \rightsquigarrow I \text{ in } \mathcal{A}_{\sim}$ reading $\alpha$. 
 
 \medskip

From the above claim and the definition of $ F_{\sim} $, it easily follows that $\mathcal L$ is the language recognised by  ${\mathcal A}_{\sim}$.

\medskip 

We conclude by checking that  ${\mathcal A}_{\sim}$ is Wheeler, proving the two Wheeler properties (i) and (ii) with respect to the linear order $(V_\sim, <_{\sim})$. 

\medskip

 To see Wheeler-(i) assume  $e\prec e' $ with  $e,e'\in\Sigma$. Consider  $I,J \in V_{\sim}$ such that $(I,H,e)\in E_{\sim}$ and $(J,K,e')\in E_{\sim}$. We want to prove that  
 $H<_{\sim} K$ (i.e. $H\prec^{i} K$).   By definition of $E_{\sim}$, in our hypotheses there are  $\alpha\in I$, $ \alpha'\in J$ with  $\alpha e \in H$ and $ \alpha' e' \in K$. From  $e\prec e'$ it follows
 $\alpha e \prec \alpha' e'$  and hence  $H \preceq^{i}  K$. To conclude observe that  $H \prec^{i} K$ since all words in $H$ end with $e$, while all words in $K$ end with $e'$.

To see Wheeler-(ii) assume  $I<_{\sim} J$ (i.e. $I \prec^{i} J$), $e\in \Sigma$,   $(I,H,e)\in E_{\sim}$, and $(J,K,e)\in E_{\sim}$. In these hypotheses there are  $\alpha\in I$,  $\alpha'\in J $, with  $\alpha e \in H$ and $ \alpha' e \in K$.  From $I \prec^{i} J$ and the fact that different classes   are disjoint it follows $\alpha \prec \alpha'$;  therefore, $\alpha e \prec \alpha' e$ and  hence  $ H \preceq^{i} K $. 
  
 This ends the proof of the implication $(3) \Rightarrow (4)$. 

\medskip

\item[](4) $\Rightarrow$ (1) Trivial. 
\end{itemize}  
\end{proof}

\section{Proofs of Section \ref{sec:sorting}}

\begin{proof} (of Theorem \ref{thm:2NFA in P}) \label{proof thm 2NFA in P}

	We can assume, without loss of generality, that $\mathcal A$ is input-consistent, since checking this property takes linear time. If $\mathcal A$ is not input-consistent, then it is not Wheeler. 
	We show a reduction of problem {\bf 1} to 2-SAT, which can be solved in linear time using Aspvall, Plass, and Tarjan's (APT) algorithm based on strongly connected components computation. The reduction introduces $O(|V|^2)$ variables and $O(|E|^2)$ clauses, hence the final running time will be $O(|E|^2)$. Moreover, since a satisfying assignment to our boolean variables will be sufficient to define a total order of the nodes, APT will essentially solve also problem {\bf 2}. 
	
	For every pair $u\neq v$ of nodes we introduce a variable $x_{u<v}$ which, if true, indicates that $u$ must precede $v$ in the ordering. We now describe a 2-SAT CNF formula whose clauses are divided in two types: clauses of the former type ensure that the Wheeler graph property is satisfied, while clauses of the second type ensure that the order of nodes induced by the variables is total. 
	
	The following formulas ensure that the Wheeler properties are satisfied:
	
	\begin{itemize}
		\item[(a)] For each $u,v$, if $\lambda(u)\prec \lambda(v)$ then we add the unary clause $x_{u<v}$.
		\item[(b)]  For each $u\neq v$, if $\lambda(u)=\lambda(v)=a$, then for every pair $u'\neq v'$ such that $(u',u,a)\in E$ and $(v',v,a)\in E$ we add the clause $x_{u'<v'}\rightarrow x_{u<v}$.
	\end{itemize}
	
	There are at most $|V|^2\leq |E|^2$ clauses of type (a) and at most $|E|^2$ clauses of type (b).
	
	The following formulas guarantee that the order is total. Note that we omit transitivity which, on a general graph, would require a 3-literals clause $(x_{u<v} \wedge x_{v<w}) \rightarrow x_{u<w}$ for each triple $u,v,w$. We will show that, if the graph is an input-consistent $2$-NFA, then transitivity is satisfied ``for free''.
	
	\begin{itemize}
		\item[(1)] \emph{Antisymmetry}. For every pair $u\neq v$, add the clause $x_{u<v} \rightarrow \neg x_{v<u}$.
		\item[(2)] \emph{Completeness.} For every pair $u\neq v$, add the clause $x_{u<v}\ \vee\ x_{v<u}$. 
	\end{itemize}
	
	There are at most $O(|V|^2) = O(|E|^2)$ clauses of types (1) and (2).
	
	We now show that on input-consistent $2$-NFAs transitivity propagates from the source to all nodes. Consider a variable assignment that satisfies clauses (a),(b),(1), and (2) (if $\mathcal A$ is a Wheeler $2$-NFA, then such an assignment exists by definition). Assume, moreover, that $x_{u<v}$ and $x_{v<w}$ are set true by the assignment, for three pairwise distinct nodes $u,v,w$. We want to show that also $x_{u<w}$ must be true. 
	
	Consider a directed shortest-path tree $\mathcal T$ with root $s$ of $\mathcal A$. 
	Since we assume that each state is reachable from $s$, $\mathcal T$ must exist and must contain all nodes of $\mathcal A$. 
	Let $d_v$ be the length of a shortest directed path connecting $s$ to $v$. By definition of $\mathcal T$, the path connecting $s$ to $v$ in $\mathcal T$ has length $d_v$, with $d_s=0$.
	We proceed by induction on $k=\max\{ d_u,d_v,d_w \}$. The case $k=0$ is trivial, since there are no triples of pairwise distinct nodes in $\{u\ :\ d_u\leq 0\}$ (this set contains just $s$).
	Take now a general $k>0$. We consider two main cases: 
	
	(i) $|\{\lambda(u),\lambda(v),\lambda(w)\}|>1$. Then, since $x_{u<v}$ and $x_{v<w}$, for some $a<b<c\in\Sigma$ either: (i.1) $\lambda(u)=a,\ \lambda(v)=b,\  \lambda(w)=c$, or (i.2) $\lambda(u)=a,\ \lambda(v)=a,\ \lambda(w)=b$, or (i.3) $\lambda(u)=a,\ \lambda(v)=b,\ \lambda(w)=b$. Any other choice would force one of the variables $x_{v<u}, x_{w<v}$ to be true (by an (a)-clause), forcing a contradiction by a (1)-clause. In all cases (i.1)-(i.3) we have that $\lambda(u)<\lambda(w)$, therefore $x_{u<w}$ must be true by  (a).
	
	(ii) $\lambda(u) = \lambda(v) = \lambda(w) = a$ for some $a\in\Sigma$ (note that $a\neq \#$ since the NFA has only one source and $u,v,w$ are distinct by assumption). 
	Let $u',v',w'$ be the parents of $u,v,w$, respectively, in $\mathcal T$.
	Note that $u',v',w'$ cannot be the same vertex, since $u,v,w$ are distinct and every node has at most two outgoing edges with the same label. 
	We therefore consider two sub-cases. 
	
	(ii.1) $|\{u',v',w'\}| = 2$. 
	We first show that $u'=w'\neq v'$ generates a contradiction. Since $x_{u<v}$ and $x_{v<w}$ are true and $u'\neq v'$ and $v'\neq w'$ hold, $x_{u'<v'}$ and $x_{v'<w'}$ must be true: otherwise, by (b), would imply that $x_{v<u}$ and $x_{w<v}$ are true, which generates a contradiction. 
	Now, $u'=w'$ means that $x_{v'<w'}$ and  $x_{v'<u'}$ have the same truth value; since $x_{u'<v'}$ and $x_{v'<u'}$ cannot be both true, we have a contradiction.
	We are therefore left with the case $u'=v'\neq w'$ ($u'\neq v' = w'$ is symmetric). Remember that we assumed $x_{u<v}$ and $x_{v<w}$ are true. 
	Hence, $x_{v'<w'}$ must be true: otherwise, by (b), the truth of $x_{w'<v'}$ would imply that $x_{w<v}$ is true, which generates a contradiction. Since $x_{v'<w'} = x_{u'<w'}$ is true, by  (b) we conclude that also $x_{u<w}$ must be true.
	
	(ii.2) $u',v',w'$ are pairwise distinct.  We show that $x_{u'<v'}$ and $x_{v'<w'}$ must be true. 
	Suppose, for contradiction, that $x_{u'<v'}$ is false (the proof is symmetric for $x_{v'<w'}$). Then, by  (2), $x_{v'<u'}$ is true. But then, by (b) it must be the case that $x_{v<u}$ is true. Since we are assuming that $x_{u<v}$ is true, this introduces a contradiction by (1). Therefore, we conclude that $x_{u'<v'}$ and $x_{v'<w'}$ are true for the (pairwise distinct) parents $u',v',w'$ of $u,v,w$ in $\mathcal T$. Now, by definition of the shortest-path tree $\mathcal T$ it must be the case that $d_{u'} = d_{u}-1$, $d_{v'} = d_{v}-1$, and $d_{w'} = d_{w}-1$ as $u',v',w'$ are the parents of $u,v,w$ in $\mathcal T$. 
	As a consequence, $\max\{d_{u'},d_{v'},d_{w'}\} = k-1$. We can therefore apply the inductive hypothesis and conclude that $x_{u'<w'}$ is true. But then, by  (b) we conclude that $x_{u<w}$ must also be true.

	From the above proof correctness follows: if $\mathcal A$ is an input-consistent $2$-NFA and there exists a truth assignment satisfying the formula, then the assignment induces a total ordering of the nodes satisfying the Wheeler properties. Conversely, the algorithm is clearly complete: if $\mathcal A$ is a Wheeler $2$-NFA, then there exists a total ordering of the nodes satisfying the Wheeler properties. This defines a truth assignment of the variables that satisfies our 2-SAT formula. \qed
\end{proof}

We note that it is tempting to try to generalize the above solution to general NFAs by simulating arbitrary degree-$d$ nondeterminism  using binary trees: a node with $d$ equally-labeled outgoing edges could be expanded to a binary tree with $d$ leaves (bringing down the degree of nondeterminism to $2$). Unfortunately, while this solution works for transitivity (which is successfully propagated from the source), it could make the graph non-Wheeler: the  topology of those trees cannot be arbitrary and must satisfy the co-lexicographic ordering of the nodes, i.e. the solution we are trying to compute.

\subsection{Sorting WDFAs Online}\label{app: Sorting WDAGs Online}

Algorithm \ref{alg:sort} initializes all variables used by our procedure and implements Kahn's topological-sorting algorithm~\cite{kahn1962topological}. Every time a new node is appended to the topological ordering, we call Algorithm \ref{alg:step}---our actual online algorithm---to update also the co-lexicographic ordering. 
This step also checks if the new node and its incoming edges falsify the Wheeler properties. 
We use the following structures (indices start from 1):

\begin{itemize}
	\item $\mathtt{LEX}$ is a dynamic sequence of distinct nodes $v_1, \dots, v_k\in V$ supporting the following operations: 
	\begin{enumerate}
		\item $\mathtt{LEX[i]}$ returns $v_i$.
		\item $\mathtt{LEX^{-1}[v]}$, with $v\in \mathtt{LEX}$, returns the index $i$ such that $\mathtt{LEX}[i]=v$.
		\item $\mathtt{LEX.insert(v,i)}$ inserts node $v$ between $\mathtt{LEX[i-1]}$ and $\mathtt{LEX[i]}$. If $i=1$, $v$ is inserted at the beginning of the sequence. This operation increases the sequence's length by one. \\
	\end{enumerate}
	\item $\mathtt{IN}$ and $\mathtt{OUT}$ are dynamic sequences of strings, i.e. sequences $\alpha_1, \dots, \alpha_k$, where $\alpha_i\in\Sigma^*$ (note that $\alpha_i$ could be the empty string $\epsilon$). 
	To make our pseudocode more readable, we index $\mathtt{IN}$ and $\mathtt{OUT}$ by nodes of $\mathtt{LEX}$ (these three arrays will be synchronized). Let $\mathtt{T} = \alpha_1, \alpha_2,\dots, \alpha_k$, with $\mathtt{T\in \{IN,OUT\}}$. Both arrays support the following operation:
	\begin{enumerate}
		\setcounter{enumi}{3}
		\item $\mathtt{T.insert(\alpha,v)}$, where $\alpha\in\Sigma^*$ and $v\in \mathtt{LEX}$: insert $\alpha$ between $\alpha_{\mathtt{LEX}^{-1}[v]-1}$ and $\alpha_{\mathtt{LEX}^{-1}[v]}$. If $\mathtt{LEX^{-1}[v]}=1$, then $\alpha$ is inserted at the beginning of $\mathtt{T}$. This operation increases the sequence's length by one. 
	\end{enumerate}
	\ \\
	Sequence $\mathtt{OUT}$ supports these additional operations:
	
	\begin{enumerate}	
		\setcounter{enumi}{4}
		\item $\mathtt{OUT}[v]$, with $v\in \mathtt{LEX}$, returns $\alpha_{\mathtt{LEX}^{-1}[v]}$.
		\item $\mathtt{OUT.append(\alpha,v)}$, where $\alpha\in\Sigma^*$ and $v\in \mathtt{LEX}$: append the string $\alpha$ at the end of the string $\mathtt{OUT[v]}$, i.e. replace $\mathtt{OUT[v]} \leftarrow \mathtt{OUT[v]}\cdot \alpha$. Note that this operation does not increase $\mathtt{OUT}$'s length. 
		\item $\mathtt{OUT.rank(c,u)}$, with $u\in \mathtt{LEX}$ and $c\in \Sigma$: return the number of characters equal to $c$ in all strings $\mathtt{OUT[v]}$, with $v = \mathtt{LEX[1]}, \mathtt{LEX[2]}, \dots, \mathtt{LEX[LEX^{-1}[u]]}$.
		\item $\mathtt{OUT.reserve(u,v,c)}$, with $u,v\in \mathtt{LEX}$ and $c\in \Sigma$: from the moment this operation is called, the sequence $\alpha_{\mathtt{LEX^{-1}[u]}}, \dots, \alpha_{\mathtt{LEX^{-1}[v]}}$ is marked with label $c$. Note that inserting new elements inside $\alpha_{\mathtt{LEX^{-1}[u]}}, \dots, \alpha_{\mathtt{LEX^{-1}[v]}}$ will increase the length of the reserved sequence.
		\item  $\mathtt{OUT.is\_reserved(v,c)}$,  with $v\in \mathtt{LEX}$ and $c\in \Sigma$: return $\mathtt{TRUE}$ iff $\alpha_{\mathtt{LEX}^{-1}(v)}$ falls inside a sequence that has been marked (reserved) with character $c$.
	\end{enumerate}
	
	In our algorithm, sequence $\mathtt{IN}$ will always be partitioned in at most $t\leq \sigma+1$ sub-sequences $\mathtt{IN} = \alpha^{c_1}_1, \dots, \alpha^{c_1}_{k_{c_1}}, \alpha^{c_2}_1, \dots, \alpha^{c_2}_{k_{c_2}}, \dots, \alpha^{c_t}_1, \dots, \alpha^{c_t}_{k_{c_t}}$, where each $\alpha^{c}_i$ contains only character $c$ and $c_1 \prec c_2 \prec \dots \prec c_t$. We define an additional operation on $\mathtt{IN}$:\\
	\begin{enumerate}	
		\setcounter{enumi}{9}
		\item  $\mathtt{IN.start(c)}$, with $c\in \Sigma$, returns the largest integer $j\geq 1$ such that all characters in $\mathtt{IN}[v]$ are strictly smaller than $c$, for all $v = \mathtt{LEX}[1], \dots, \mathtt{LEX}[j-1]$. 
	\end{enumerate}
	
\end{itemize}

Figure \ref{fig:inconsistencies} shows how our dynamic structures evolve while processing states in topological order.
In Appendix \ref{sec:data structures} we discuss data structures implementing the above operations in $O(\log k)$ time, $k$ being the sequence's length. 
Intuitively, these three dynamic sequences have the following meaning: $\mathtt{LEX}$ will contain the co-lexicographically-ordered sequence of nodes. $\mathtt{IN}[v]$ and $\mathtt{OUT}[v]$, with $v\in\mathtt{LEX}$, will contain the labels of the incoming and outgoing edges of $v$, respectively. To keep the three sequences synchronized, when inserting $v$ in $\mathtt{LEX}$ we will also need to update the other two sequences so that $\mathtt{IN}[v] = c^t$, where $t$ is the number of incoming edges, labeled $c$, of $v$, and $\mathtt{OUT}[v] = \epsilon$, since $v$ does not have yet outgoing edges. $\mathtt{OUT}[v]$ will (possibly) be updated later, when new nodes adjacent to $v$ will arrive in the topological order. 
Our representation is equivalent to that used in~\cite{siren2014indexing} to represent the GCSA data structure. Intuitively, $\mathtt{OUT}$ is a generalized version of the well-known Burrows-Wheeler transform (except that we sort prefixes in co-lexicographic order instead of suffixes in lexicographic order). If the graph is a path (i.e. a string) then $\mathtt{OUT}$ is precisely the BWT of the reversed path. 

We proceed with a discussion of the pseudocode. In Lines \ref{line:init in}-\ref{line:init OUT} of Algorithm \ref{alg:sort} we initialize all variables and data structures. Let $u\in V$. The variable $\mathtt{u.in}$ memorizes the number of incoming edges in $u$; we will use this counter to implement Kahn's topological sorting procedure. $\mathtt{u.label}$ is the label of all incoming edges of $u$, or $\#$ if $u=s$. $\mathtt{IN}, \mathtt{LEX}$, and $\mathtt{OUT}$ are initialized as empty dynamic sequences. 
Lines \ref{line: init S}-\ref{line:cycle} implement Kahn's topological sorting algorithm~\cite{kahn1962topological}. Each time a new node $u$ is appended to the order, we call our online procedure $\mathtt{update(u)}$, implemented in Algorithm \ref{alg:step}. Algorithm \ref{alg:step} works as follows. Assume that we have already sorted $v_1, \dots, v_k$, that $\mathtt{LEX}$ contains the nodes' permutation reflecting their co-lexicographic order, and that $\mathtt{IN[v_i]}$ and $\mathtt{OUT[v_i]}$ contain the incoming and outgoing labels for each $i=1, \dots, k$ in the sub-graph induced by $v_1, \dots, v_k$. When a new node $u$ arrives in topological order, all its $t$ predecessors are in $\mathtt{LEX}$. 
Let $b = \mathtt{u.label}$ be the incoming label of $u$.
We find the co-lexicographically smallest $v_{min}$ and largest $v_{max}$ predecessors of $u$ (using function $\mathtt{LEX^{-1}}$ on all $u$'s predecessors). In our pseudocode, if $u=s$ then $v_{min} = v_{max} = \mathtt{NULL}$.
To keep the Wheeler properties true, note that
there cannot be $b$'s in the range $\mathtt{OUT[v_{min}..v_{max}]}$: if there are, since we will append $b$ to $\mathtt{OUT[v_{min}]}$ and $\mathtt{OUT[v_{max}]}$, there will be three nodes $v_{min} < v' < v_{max}$ such that $(v_{min},u,b), (v',u',b), (v_{max},u, b) \in E$ for some $u'$. Then, by Wheeler property (ii), this would imply that $u < u' < u$, a contradiction. We therefore check this event using function $\mathtt{contains}$ (note: this function can be easily implemented using two calls to $\mathtt{rank}$). If $b$'s are present, then the graph is no longer Wheeler: such an event is shown in Figure \ref{fig:inconsistencies}, left-hand side (where $u = v_5$). Otherwise, the number $j$ of $b$'s before $v_{min}$ (which is equal to the number of $b$'s before $v_{max}$) tells us the co-lexicographic rank $i$ of $u$ (similarly to the standard string-BWT, we obtain this number by adding $j$ to the starting position of $b$'s in $\mathtt{IN}$), and we can mark (reserve) range $\mathtt{OUT[v_{min}..v_{max}]}$ with letter $b$ using function $\mathtt{reserve}$. 
Such an event is shown in Figure \ref{fig:inconsistencies}, left-hand side, when inserting, e.g., node $v_3$. 
At this point, we may have an additional inconsistency falsifying the Wheeler properties in the case that one of the predecessors $v_i$ of $u$ falls inside a reserved range for $b$ (reserved by a node other than $u$): this happens, for example, when inserting $v_6$ in Figure \ref{fig:inconsistencies}, right-hand side. This check requires calling function $\mathtt{is\_reserved}$.
If all tests succeed, we 
insert $u$ in position $i$ of $\mathtt{LEX}$ and we update $\mathtt{IN}$ and $\mathtt{OUT}$ by inserting $b^t$ at the $i$-th position in $\mathtt{IN}$ (i.e. the position corresponding to $u$) and by appending $b$ at the end of each $\mathtt{OUT[v_i]}$ for each predecessor $v_i$ of $u$.

\begin{algorithm}[th!]
	\caption{\texttt{sort(G)}}
	\label{alg:sort}
	
	\SetKwInOut{Input}{input}
	\SetKwInOut{Output}{output}
	\SetSideCommentLeft
	\LinesNumbered
	
	\Input{Labeled DAG $G = (V,E,s,\Sigma)$}
	\Output{A permutation of $V$ reflecting the co-lexicographic ordering of the nodes, or $\mathtt{FAIL}$ if such an ordering does not exist.}
	\BlankLine
	\BlankLine
	
	\BlankLine
	
	\For{$u \in V$}{
		
		\BlankLine
		
		$\mathtt{u.in \leftarrow 0}$\;\label{line:init in}
		$\mathtt{u.label \leftarrow \mathtt{NULL}}$\; 
		
		\BlankLine
		
	}
	
	\BlankLine	
	$\mathtt{s.label \leftarrow \#}$;
	\BlankLine
	
	\For{$(u,v,a) \in E$}{
		
		\BlankLine
		
		$\mathtt{v.in \leftarrow v.in +1}$\;
		
		\BlankLine
		
		\If{$\mathtt{v.label} \neq \mathtt{NULL}$\ $\mathbf{and}$\  $\mathtt{v.label} \neq a$}{
			
			\Return \texttt{FAIL}\tcc*[r]{Cannot be Wheeler graph}	
			
		}
		
		$\mathtt{v.label} \leftarrow a$\;
	}
	
	\BlankLine
	
	$\mathtt{IN} \leftarrow \mathtt{new\_dyn\_sequence(\Sigma^*)}$\tcc*[r]{Sequence of strings}
	$\mathtt{LEX} \leftarrow \mathtt{new\_dyn\_sequence(V)}$\tcc*[r]{Sequence of nodes}
	$\mathtt{OUT} \leftarrow \mathtt{new\_dyn\_sequence(\Sigma^*)}$\tcc*[r]{Sequence of strings}\label{line:init OUT}
	
	\BlankLine
	
	$S \leftarrow \{s\}$\tcc*[r]{Set of nodes with no incoming edges}\label{line: init S}
	
	\BlankLine	
	
	\While{$S\neq \emptyset$}{
		
		\BlankLine
		$\mathtt{u \leftarrow S.pop()}$\tcc*[r]{Extract any $u\in S$}
		$\mathtt{update(u)}$\tcc*[r]{Call to Algorithm \ref{alg:step}. If this fails, return \texttt{FAIL}.}
		
		\BlankLine
		\For{$(u,v,a) \in E$}{
			\BlankLine	
			
			$v.in \leftarrow v.in-1$\;
			
			\If{$v.in = 0$}{
				
				$S \leftarrow S \cup \{v\}$\;	
				
			}
			
			\BlankLine
		}
		
	}
	
	\BlankLine
	
	\If{$\exists\ v\in V\ :\ v.in>0$}{
		
		\Return $\mathtt{FAIL}$\tcc*[r]{cycle found!}\label{line:cycle}
		
	}
	
	\BlankLine
	
	\Return $\mathtt{LEX}$\;
	
\end{algorithm}

\begin{algorithm}[th!]
	\caption{\texttt{update(u)}}
	\label{alg:step}
	
	\SetKwInOut{Input}{input}
	\SetKwInOut{Output}{behavior}
	\SetSideCommentLeft
	\LinesNumbered
	
	\Input{Node $u$}
	\Output{Inserts $u$ at the right place in the co-lexicographic ordering $\mathtt{LEX}$ of the nodes already processed, or returns $\mathtt{FAIL}$ if a conflict is detected.}
	\BlankLine
	\BlankLine
	
	$\mathtt{v_{min} \leftarrow min\_pred(u)}$\tcc*[r]{co-lexicographically-smallest predecessor}
	$\mathtt{v_{max} \leftarrow max\_pred(u)}$\tcc*[r]{co-lexicographically-largest predecessor}
	
	\BlankLine

	\eIf{$u \neq s$}{
		
		\BlankLine
		
		\eIf{$\mathtt{OUT[v_{min},\dots,v_{max}].contains(u.label)}$}{
			
			\BlankLine	
			\Return \texttt{FAIL}\tcc*[r]{Inconsistency of type 1}
			\BlankLine
			
		}{
			
			\BlankLine
			\For{$\mathtt{(v,u,a)\in E}$}{
				\BlankLine
				\eIf{$\mathtt{OUT.is\_reserved(v,u.label)}$}{
					
					\Return \texttt{FAIL}\tcc*[r]{Inconsistency of type 2}
					
				}{
					
					$\mathtt{OUT[v].append(u.label)}$\;
					
				}
				\BlankLine
				
			}
			
			\BlankLine
			$\mathtt{OUT.reserve(v_{min},v_{max},u.label)}$\tcc*[r]{Reserve $\mathtt{[v_{min},v_{max}]}$ with $\mathtt{u.label}$}
			\BlankLine
			
		}
		
		\BlankLine
		$i \leftarrow \mathtt{IN.start(u.label) + OUT.rank(u.label,v_{min})}$\;
		$\mathtt{LEX.insert(u,i)}$\;
		$p \leftarrow \mathtt{|pred(u)|}$\tcc*[r]{Number of predecessors of $u$}
		\BlankLine
		
	}{
		\BlankLine
		$\mathtt{LEX.insert(u,1)}$\;
		$p \leftarrow 1$\tcc*[r]{Number of predecessors of $u$}
		\BlankLine
		
	}
	
	\BlankLine
	
	$\mathtt{IN.insert(u.label}^p,u)$\tcc*[r]{Insert $p$ times $\mathtt{u.label}$}
	$\mathtt{OUT.insert(\epsilon,u)}$\tcc*[r]{$u$ does not have successors yet}

\end{algorithm}

\begin{figure}
	\begin{center}
		\includegraphics[scale=0.41]{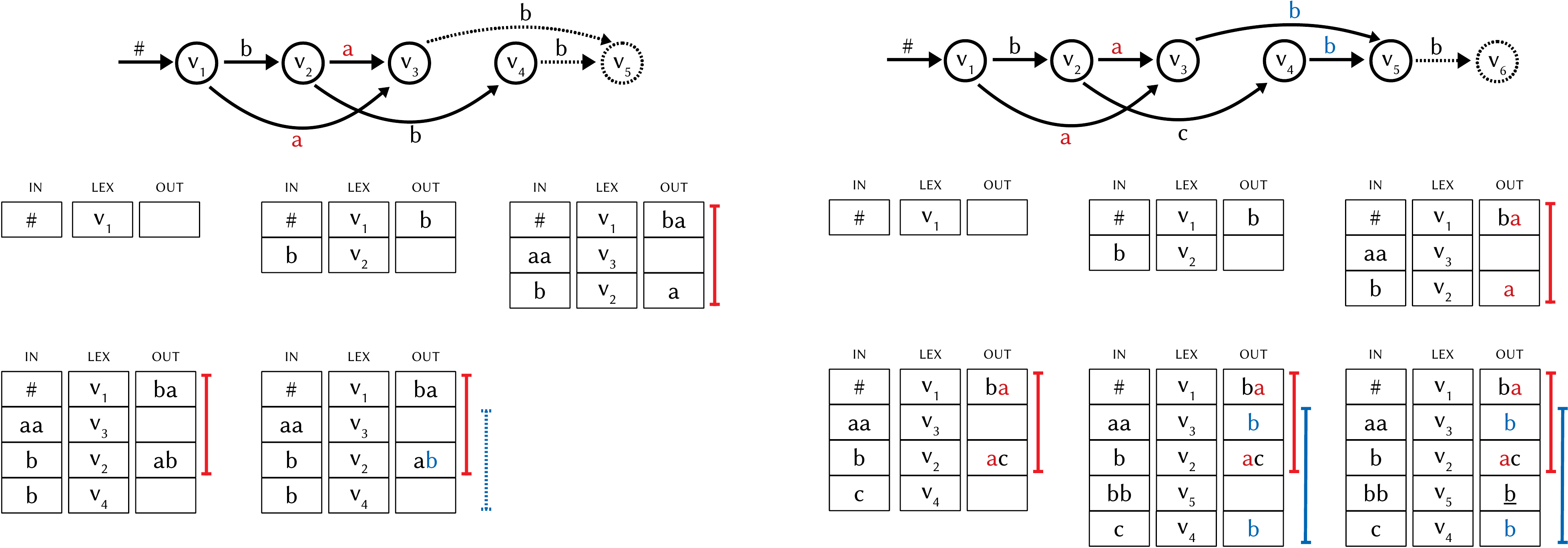}
	\end{center}
	\caption{\textbf{Left}: Inconsistency of type 1. The five tables show how arrays $\mathtt{IN}$, $\mathtt{LEX}$, and $\mathtt{OUT}$ evolve during insertions of nodes $v_1, \dots, v_5$ in topological order. Up to node $v_4$, the Directed Acyclic Graph (DAG) is Wheeler. 
		When inserting node $v_3$, we successfully reserve the interval $[v_1,v_2]$ with label 'a' (shown in red). From this point, no 'a's can be inserted inside the reserved interval.
		When inserting node $v_5$ (with incoming label 'b'), the co-lexicographically smallest and largest predecessors of $v_5$ are $v_3$ and $v_4$, respectively. This means we have to reserve the interval $[v_3,v_4]$ with label 'b' (shown in blue dashed line); however, this is not possible since there already is a 'b' (highlighted in blue) in $\mathtt{OUT}[v_3,\dots,v_4]$.
		\textbf{Right}: Inconsistency of type 2. Up to node $v_5$, the DAG is Wheeler. 
		Note that we successfully reserve two intervals: $[v_1,v_2]$ (with letter 'a', red interval), and $[v_3,v_4]$ (with letter 'b', blue interval).
		When inserting node $v_4$ we do not need to reserve any interval since the node has only one predecessor. 
		The confict arises when inserting node $v_6$ (with incoming label 'b'). Since $v_5$ is a predecessor of $v_6$, we need to append 'b' in $\mathtt{OUT[v_5]}$. However, this 'b' (underlined in the picture) falls inside a reserved interval for 'b' (in blue).
}\label{fig:inconsistencies}
\end{figure}

\subsection{Data Structure Details}\label{sec:data structures}

In this section we show how to implement operations {\bf 1}-{\bf 10} used by Algorithms \ref{alg:sort} and \ref{alg:step} using state-of-the-art data structures. At the core of $\mathtt{LEX}$ and $\mathtt{OUT}$ stands the dynamic sequence representation of Navarro and Nekrich~\cite{navarro2014optimal}. This structure supports insertions, access, rank, and select in $O(\log n)$ worst-case time, $n$ being the sequence's length. The space usage is bounded by $nH_0 + o(n\log\sigma) + O(\sigma\log n)$ bits, where $H_0$ is the zero-th order entropy of the sequence. Sequence $\mathtt{IN}$ will instead be represented using a dynamic partial sum data structure, e.g. a balanced binary tree or a Fenwick tree~\cite{fenwick1994new}, and a dynamic bitvector. All details follow. 

Sequence $\mathtt{LEX}$ is stored with Navarro and Nekrich's dynamic sequence representation~\cite{navarro2014optimal}. Operations {\bf 1}-{\bf 3} are directly supported on the representation. Operation {\bf 2} is simply a $select_{v}(1)$ (i.e. the position of the first $v$).

Sequence $\mathtt{OUT}$ is stored using a dynamic sequence $\mathtt{out}$ and a bitvector, both represented with Navarro and Nekrich's dynamic sequence. The idea is to store all the strings  $\mathtt{OUT}[1], \dots, \mathtt{OUT}[|\mathtt{OUT}|]$ concatenated in a single sequence $\mathtt{out}$, and mark the beginning of those strings with a bit set in a dynamic bitvector $\mathtt{B_{out}[1..n]}$, were $n=|\mathtt{out}|$.
Clearly, operations {\bf 4}-{\bf 7} on $\mathtt{OUT}$ can be simulated with a constant number of operations (\emph{insert, access, rank, select}) on $\mathtt{out}$ and $\mathtt{B_{out}}$. 

Operations {\bf 8}-{\bf 9} require an additional dynamic sequence of parentheses $\mathtt{PAR}[1..n]$ on alphabet $\{ \mathtt{(}_c\ :\ c\in\Sigma\} \cup \{ \mathtt{)}_c\ :\ c\in\Sigma\} \cup \{ \square \}$. Every time a new character is inserted at position $i$ in $\mathtt{out}$, we also insert $\square$ at position $i$ in $\mathtt{PAR}$. When $\mathtt{OUT.reserve(u,v,c)}$ is called (i.e. operation {\bf 8}), let $i_v$ and $i_u$ be the positions in $\mathtt{out}$ corresponding to the two occurrences of character $c$ in $\mathtt{OUT[u]}$ and $\mathtt{OUT[u]}$, respectively (remember that the automaton is deterministic, so these positions are unique). These positions can easily be computed in $O(\log n)$ time using \emph{select} and \emph{rank} operations on $\mathtt{out}$ and $\mathtt{B_{out}}$. Then, we replace $\mathtt{PAR}[i_u]$ and $\mathtt{PAR}[i_v]$ with characters $\mathtt{(_c}$ and $\mathtt{)_c}$, respectively (replacing a character requires a deletion followed by an insertion). 
Note that reserved intervals for a fixed character do not overlap, so this parentheses representation permits to unambiguously reconstruct the structure of the intervals. 
At this point, operation {\bf 9} is implemented as follows. Let $i_v$ be the position in $\mathtt{out}$ corresponding to the first character in $\mathtt{OUT}[v]$. 
This position can be computed in $O(\log n)$ time with two \emph{select} operations on $\mathtt{LEX}$ and $\mathtt{B_{out}}$.
Then, $\mathtt{OUT.is\_reserved(v,c)}$ returns true if and only if $\mathtt{PAR.rank_{(_c}(i_v)}>\mathtt{PAR.rank_{)_c}(i_v)}$, i.e. if we did not close all opening parentheses $\mathtt{{(_c}}$ before position $i_v$ (note that does not make any difference if $i_v$ is the first or last position in $\mathtt{OUT}[v]$, since when we call this operation $\mathtt{OUT}[v]$ does not contain  characters equal to $c$).

To conclude, $\mathtt{IN}$ is represented with a dynamic bitvector $\mathtt{B_{IN}[1..n]}$ and a partial sum $\mathtt{PS[1..\sigma+1]}$ supporting the following operations in $O(\log\sigma)$ time:

\begin{itemize}
	\item \emph{partial sum}: $\mathtt{PS.ps(i)} = \sum_{j=1}^i\mathtt{PS[j]}$.
	\item \emph{update}: $\mathtt{PS[i]} \leftarrow \mathtt{PS[i]}+\delta$.
\end{itemize}

Fenwick trees~\cite{fenwick1994new} support the above operations within this time bound. 
Bitvector $\mathtt{B_{IN}[1..n]}$ contains the bit sequence $110^{t_{v_2}-1} 10^{t_{v3}-1} \dots 10^{t_{v_k}-1}$, where $t_{v_i}$ is the number of predecessors of $v_i$ in the current sequence $\mathtt{LEX} = v_1, \dots, v_k$ of sorted nodes (note that $v_1$ is always the source $s$). 
Assume, for simplicity, that $\Sigma = [1,\sigma+1]$, where $\#$ corresponds to $1$ (this is not restrictive, as we can map the alphabet to this range at the beginning of the computation).
At the beginning, the partial sum is initialized as $\mathtt{IN[c]=0}$ for all $c$. Operation {\bf 10}, $\mathtt{IN.start(c)}$, with $c\neq \#$, is implemented as $\mathtt{PS.ps(c-1)+1}$. If $c=\#$, the operation returns $1$. Operation {\bf 4}, $\mathtt{IN.insert(c^p,u)}$, is implemented as $\mathtt{PS[c]} \leftarrow \mathtt{PS[c]}+p$ followed by  $\mathtt{B_{IN}.insert(10^{p-1},i_u)}$ (i.e. $p$ calls to \emph{insert} on the dynamic bitvector at position $i_u$), where $i_u$ is the position of the $j$-th bit set in $\mathtt{B_{IN}}$ (a \emph{select} operation) and $j = \mathtt{LEX^{-1}[u]}$, or $i_u=n+1$ if $\mathtt{B_{IN}}$ has $j-1$ bits set (note that, when we call $\mathtt{IN.insert(c^p,u)}$, node $u$ has already been inserted in $\mathtt{LEX}$).

\subsection{Proof of Theorem \ref{thm:n log n}}\label{proof thm n log n}

In Appendix \ref{sec:data structures} we show that all operations on the dynamic sequences can be implemented in logarithmic time. Correctness follows from the fact that we always check that the Wheeler properties are maintained true. 
To prove completeness, note that at each step we place $u$ between two nodes $v_1$ and $v_2$ in array $\mathtt{LEX}$ only if the smallest $u$'s predecessor is larger than the largest $v_1$'s predecessor, and if the largest $u$'s predecessor is smaller than the smallest $v_2$'s predecessor. This is the only possible choice we can make in order to satisfy $w_{v_1} \prec w_{u} \prec w_{v_2}$ for all strings labeling paths $s \rightsquigarrow v_1$, $s \rightsquigarrow u$, and $s \rightsquigarrow v_2$ and to obtain, by Corollary \ref{lem:clusters}, the only possible correct ordering of the nodes. It follows that, if the new node $v$ does not falsify the Wheeler properties, then we are computing its co-lexicographic rank correctly.

\begin{proof} (of Lemma \ref{lem: check range consistency}) \label{proof lemma check range consistency}

First, we sort edges by label, with ties broken by origin, and further ties broken by destination. This can be achieved in time $O(|E| + |V|)$ by radix sorting the edges represented as triples $(a, u, v)$, where $a$ is the label, and $u$ and $v$ respectively are the ranks of the source and destination nodes in the given order $<$. 

Let $L$ denote the sorted list of edges. We claim that the given order $<$ satisfies the Wheeler properties (Definition \ref{def_WG}) if and only if for all pairs of consecutive edges $(a_i, u_i, v_i), (a_{i+1}, u_{i+1}, v_{i+1})$ in $L$, we have $(a_i = a_{i+1}) \rightarrow v_i \leq v_{i+1}$ and $(a_i \neq a_{i+1}) \rightarrow v_i < v_{i+1}$. Clearly this can be checked in time $O(|E|)$ with one scan over $L$. We now argue the correctness of this algorithm.

Wheeler property (ii) is equivalent to the condition that when all edges labeled by some character $a \in \Sigma$ are sorted by source with ties broken by destination, the sequence of destinations is monotonically increasing, which is expressed by the condition $(a_i = a_{i+1}) \rightarrow v_i \leq v_{i+1}$.

Wheeler property (i) is equivalent to the condition that for all pairs of characters $a,b \in \Sigma$ such that $b$ is a successor of $a$ in the order of $\Sigma$, denoting by $v_a$ the largest node with an incoming $a$-edge, and by $v_b$ the smallest node with an incoming $b$-edge, we have $v_a < v_b$. If Wheeler property (ii) holds, then destinations $v_a$ and $v_b$ are consecutive in $L$ because the list is sorted primarily by label and destinations are monotonically increasing for each label. Hence checking for $(a_i \neq a_{i+1}) \rightarrow v_i < v_{i+1}$ verifies Wheeler property (i) given that Wheeler property (ii) holds. \qed

\end{proof}

\begin{proof} (of Theorem \ref{thm:DFA linear}) \label{proof thm: DFA linear}

	In $O(|V|+|E|)$ time we build a directed spanning tree $\mathcal T$ of $\mathcal A$ with root $s$ (e.g. its directed shortest-path tree with root $s$). Note that this is always possible since we assume that all states are reachable from $s$. 

	By Corollary \ref{lem:clusters}, if $\mathcal A$ is a Wheeler graph then we can use the strings that label \emph{any} two paths $s\rightsquigarrow u$ and $s\rightsquigarrow v$ to decide the order of any two nodes $u$ and $v$. We can therefore sort $V$ according to the paths spelled by $\mathcal T$; by Corollary \ref{lem:clusters}, if $\mathcal A$ is Wheeler then we obtain the correct (unique) ordering. To prefix-sort $\mathcal T$, we compute its XBW transform~\footnote{note: this requires mapping the labels of $\mathcal T$ to alphabet $\Sigma' \subseteq [1,|V|]$ while preserving their lexicographic ordering. Since we assume that the original alphabet's size does not exceed $|E|^{O(1)} = |V|^{O(1)}$, this step can be performed in linear time by radix-sorting the labels.} in $O(|V|)$ time~\cite[Thm 2]{ferragina2009compressing}. The array containing the lexicographically-sorted nodes (i.e. the prefix array of $\mathcal T$) can easily be obtained from the XBW transform using, e.g. the partial rank counters defined in the proof of Lemma \ref{lem: check range consistency} to navigate the tree (this is analogous to repeatedly applying function LF on the BWT in order to obtain the suffix array). At this point, we check that the resulting node order satisfies the Wheeler properties using Lemma \ref{lem: check range consistency}. If this is this case, then the above-computed prefix array contains the prefix-sorted nodes of $\mathcal A$. \qed
\end{proof}

\section{Proofs of Section \ref{sec:minimization} }

\begin{proof} (of Theorem \ref{thm: min DFA})\label{proof thm: min DFA}

	Let $\mathcal A=(V,E,F,s,\Sigma)$. Consider the (possibly infinite) deterministic automaton $\mathcal T$ that is a tree and that is equivalent to $\mathcal A$ in the following sense: $\mathcal T$ is the (unique) tree obtained by ``unraveling'' $\mathcal A$, i.e. the tree containing all words in $\mathcal L(\mathcal A)$ such that each path labeled with such a word leads to an accepting state. 	 
	Clearly, $\mathcal T$ is a (possibly infinite) deterministic automaton recognizing $\mathcal L(\mathcal A)$: a string $\alpha$ leads to a final state in $\mathcal A$ if and only if it does in $\mathcal T$. 
	
	Let $L^u = \{u^1, u^2, \dots, u^{k_u}\}$ be the (possibly infinite) set of nodes of $\mathcal T$ reached by following, from its root, all the paths labeled $\alpha$ for each $\alpha$ labeling a path $s \rightsquigarrow u$ connecting $s$ with $u$ in $\mathcal A$. 
	Note that each state $u$ of $\mathcal A$ can be identified by the set $L^u$ of states of $\mathcal T$; this allows us to extend $\equiv_w$ to the states of $\mathcal T$ as follows: $u^i \equiv_w u^j$ for all $u^i,u^j\in L^u$, $u\in V$, and $u^i \equiv_w v^j$ for  $u^i\in L^u$, $v^j\in L^v$ if and only if $u \equiv_w v$.
	
	Consider now the process of minimizing $\mathcal T$ by collapsing states in equivalence classes in such a way that
	(i) the quotient automaton is finite, (ii) the accepting language of the quotient DFA is the same as that of $\mathcal T$ and  (iii) the quotient DFA is Wheeler. 
	By the existence of $\mathcal A$, there exists such a partition (not necessarily the coarsest): the one putting $u^i$ and $u^j$ in the same equivalence class if and only if $u^i, u^j \in L^u$, for some $u\in V$ (in this case, $\mathcal A$ itself is the resulting quotient automaton).
	Call $\equiv$ the relation among states of  $\mathcal T$ yielding the \emph{smallest} such WDFA $\mathcal A/_\equiv$. By definition, $\mathcal A/_\equiv$ is the smallest WDFA  recognizing $\mathcal L(\mathcal A)$. Our claim is that $\equiv\ =\ \equiv_w$, i.e. that Algorithm \ref{alg:minimize} returns this automaton.

	We observe that: 
	
	\begin{enumerate}
		\item  $u^i \approx u^j$ for any $u^i,u^j\in L^u$ and all $u\in V$. Otherwise, assume for a contradiction that there exists a string $\alpha$ leading to an accepting state from $u^i$ but not from $u^j$. By construction of $\mathcal T$, $u^i$ and $u^j$ are $\approx$-equivalent to $u$: this leads to a contradiction, since the state reached from $u$ with label $\alpha$ cannot be both accepting and not accepting.
		\item Since $\mathcal A$ is a Wheeler DFA, Corollary \ref{lem:clusters} applied to $\mathcal A$ tells us that, for any two nodes $u<v\in V$, all strings labeling paths from the root of $\mathcal T$ to nodes in $L^u$ are co-lexicographically smaller than those labeling paths from the root of $\mathcal T$ to nodes in  $L^v$. We express this fact using the notation $L^u < L^v$.
		\item Since $\mathcal A$ is Wheeler, then each $u\in V$ has only one distinct incoming label and $\lambda(u^j) = \lambda(u)$ for all $u^j\in L^u$.
	\end{enumerate}
	
	By the above properties, $u^i \equiv u^j$ for all $u^i,u^j\in L^u$, $u\in V$. To see this, note that, by property {\bf 1}, those states are all equivalent by relation $\approx$. Moreover, properties {\bf 2-3} combined with Corollary \ref{lem:clusters} imply that, by grouping states in each  $L^u$, we cannot break any Wheeler property. It follows that $\equiv$ must group those states, being the coarsest partition finer than $\approx$ with these two properties. Let us indicate with $L^u \equiv L^v$ the fact that $u^i \equiv v^j$ for all $u^i\in L^u,\ v^j\in L^v$.
	
	Suppose now, for a contradiction, that there exist $L^u < L^v < L^w$ with $L^u \equiv L^w \not\equiv L^v$. Then, by Corollary \ref{lem:clusters},  $L^u < L^v$ implies that, in the quotient automaton, states $[L^w]_\equiv = [L^u]_\equiv$ and $[L^v]_\equiv$ are reachable from the source by two paths $\alpha$ and $\beta$, respectively, with $\alpha \prec \beta$. Conversely, $L^v < L^w$ implies that states $[L^v]_\equiv$ and $[L^w]_\equiv$ are reachable from the source by two paths $\alpha'$ and $\beta'$, respectively, with $\alpha' \prec \beta'$. Then, by Corollary \ref{lem:clusters} we cannot define a total order on $\mathcal A/_\equiv$'s states, i.e. $\mathcal A/_\equiv$ is not Wheeler. 
	
	By all the above observations, we conclude that $\equiv$ must (i) group only equivalent states by $\approx$, (ii) group only states with the same incoming label, (iii) group all states inside each $L^u$, and (iv) group only states in \emph{adjacent} sets $L^u$, $L^v$ in the co-lexicographic order. By its definition, the relation $\equiv_w$ induces the coarsest partition that satisfies (i)-(iv), therefore we conclude that $\equiv\ =\ \equiv_w$.	\qed
\end{proof}

\subsection{Converting DFAs to minimum WDFAs}\label{app:DFA->WDFA}

We describe an online step of our algorithm. Assume we successfully built $\mathcal A_i$, with $i<t$, and we are about to process $v_{i+1}$ in order to build $\mathcal A_{i+1}$. Let $\{c_1, \dots, c_k\}$ be the labels of incoming $v_{i+1}$'s edges. We first replace (split) $v_{i+1}$ by $k$ equivalent states $v_{i+1}^{c_1} \approx \dots \approx v_{i+1}^{c_k}$: each $v_{i+1}^{c_k}$ (i) is accepting if and only if $v_{i+1}$ is accepting, (ii) keeps only the incoming edges of $v_{i+1}$ labeled $c_i$, and (iii) it duplicates all its outgoing edges: we replace each $(v_{i+1},u,c)$ with the edges $(v_{i+1}^{c_1},u,c), \dots, (v_{i+1}^{c_k},u,c)$. Note that all the newly-created edges must be present in the final automaton $\mathcal A_t$ since the states $v_{i+1}^{c_1}, \dots, v_{i+1}^{c_k}$ cannot be collapsed back by $\equiv_w$ (as they have different incoming labels); it follows that in this step we are not creating more edges than necessary.

We now insert separately $v_{i+1}^{c_1}, \dots, v_{i+1}^{c_k}$ in $\mathtt{LEX}_i$ in any order as follows. The procedure is the same for all those vertices, therefore we may simply assume we are about to process a node $v$ with all incoming edges labeled with the same character $a$. Let $u_1 < \dots < u_k$ be the predecessors of $v$ in the graph; note that those nodes must belong to $\mathtt{LEX}_i$ (since we are processing states in topological order), therefore their order $<$ is well-defined. We now must detect and solve inconsistencies of type 1 and 2 as defined in the proof of Theorem \ref{thm:n log n} (see also Figure \ref{fig:inconsistencies}). 

We start with inconsistencies of type 1: there already are nodes $w_i \notin \{u_1, \dots, u_k\}$ with outgoing edges labeled $a$ inside the range $[u_1, u_k]$.
This breaks the sequence $u_1 < \dots < u_k$ into $q$ sub-intervals $[u_{i_j}, u'_{i_j}]$, $j=1, \dots, q$, that do not contain nodes with outgoing label $a$ different than those in $\{u_1, \dots, u_k\}$. 
The range has therefore the following form, where we denote with $w_i$ and $w'_i$ all nodes not in $\{u_1, \dots, u_k\}$ with outgoing edges labeled $a$ and we highlight in bold the runs $[u_{i_j}, u'_{i_j}]$: 
$$w_1 < \mathbf{u_{i_1} \leq \dots \leq u'_{i_1}} < w_2 \leq \dots \leq w'_2 < \mathbf{u_{i_2} \leq \dots \leq u'_{i_2}} < \dots < \mathbf{u_{i_q} \leq \dots \leq u'_{i_q}} <  w_{q+1} \;,$$
where $u_{i_1} = u_1$, $u'_{i_{q}} = u_k$, and $w_1<u_1,\ w_{q+1}>u_k$ are the rightmost and leftmost states with an outgoing edge labeled $a$, respectively (if they exist).
The top part of Figure \ref{fig:make WDAG} depicts this situation, where $k=4$ and $u_1, \dots, u_4$ are clustered in $q=3$ runs: $w_1 < \mathbf{u_1} < w_2 < w_3 < \mathbf{u_2 < u_3}  < w_4 < \mathbf{u_4} < w_5$.
We solve the inconsistencies of type 1 by splitting $v$ in (i.e. replacing it with) $q$ equivalent nodes: $v_1 \approx \dots \approx v_q$. Each $v_j$ is final if and only if $v$ is final, duplicates all $v$'s outgoing edges (as seen above), and keeps only incoming edges from $v$'s predecessors inside the corresponding run $[u_{i_j}, u'_{i_j}]$. This is depicted in the bottom part of Figure \ref{fig:make WDAG}: $v$ has been split into the three equivalent nodes $v_1 \approx v_2 \approx v_3$.

Inconsistencies of type 2 are solved similarly by splitting $a$-successors of $w_1, \dots, w_{q+1}$ that belong to $\mathtt{LEX}_i$ when necessary. Let $\mathtt{LEX}_i \cap \{succ_a(w_1), \dots, succ_a(w_{q+1})\} = \{z_1 < \dots < z_{q'}\}$ be the $a$-successors of $w_1, \dots, w_{q+1}$ in $\mathtt{LEX}_i$. Note that it might be the case that $q'<q+1$. 
Note also that some of the nodes $z_i$ might belong to $\{u_1, \dots, u_k\} \cup \{w_1, \dots, w_{q+1}\}$.
We have an inconsistency of type 2 (among nodes in $\mathtt{LEX}_i$) whenever $succ_a(w_i) = succ_a(w_{i+1}) = z_e$, for some $1\leq e \leq q'$, and there exist some $u_j$ such that $w_i < u_j < w_{i+1}$. In this case, we split $z_e = succ_a(w_i) =  succ_a(w_{i+1})$ in two equivalent nodes $z_e' \approx z_e''$ ordered as $z_e' < succ_a(u_j) < z_e''$. 
This cannot contradict the Wheeler properties (even if $z_i\in \{u_1, \dots, u_k\} \cup \{w_1, \dots, w_{q+1}\}$), since $succ_a(u_j)$ is one of the copies of $v$ (or $v$ itself if $v$ has not been splitted in the previous step) and has therefore no successors in the current automaton.
The process of fixing inconsistencies of type 2 is shown in Figure \ref{fig:make WDAG}: nodes $w_3$ and $w_4$ are separated by $u_2, u_3$ as $w_3 < u_2 < u_3 < w_4$. In this case, $succ_a(w_3) = succ_a(w_4) = z_3$, and we split $z_3$ in the two equivalent nodes $z'_3$ and $z''_3$. Note also that we only need to check those $w_i$ that immediately precede or follow a predecessor of $v$ (i.e. $w_1, w_2, w_2', \dots , w_{q+1}$): those nodes are at most $O(k)$, where $k$ is the number of $v$'s predecessors. 

As shown in Figure \ref{fig:make WDAG} (bottom), after solving the inconsistencies of type 1 and 2 the nodes in $\mathtt{LEX}_{i+1}$ are again range-consistent: the $a$-successors of any (sorted) range of nodes form themselves a (sorted) range. Moreover, the splitting process defines unambiguously a total ordering of the new nodes among those already in $\mathtt{LEX}_i$, which can be therefore updated to $\mathtt{LEX}_{i+1}$ by inserting those nodes at the right place: to insert a node $v'$ in $\mathtt{LEX}_i$, let $u'$ be its $a$-predecessor: $succ_a(u') = v'$. Let moreover $u'' < u'$ be the rightmost node preceding $u'$ (in $\mathtt{LEX}_i$) having an outgoing edge labeled $a$, and let $v''$ be its $a$-successor: $succ_a(u'') = v''$. By range-consistency, node $v'$ has to be inserted immediately after $v''$ in $\mathtt{LEX}_i$. If such a node $u''$ does not exist (i.e. $u'$ is the leftmost node in $\mathtt{LEX}_i$ having an outgoing edge labeled $a$), then $v'$ has to be inserted in $\mathtt{LEX}_i$ so that it becomes the first node with incoming edges labeled $a$ (i.e. in the position immediately following the rightmost node $v''$ with incoming label $a'$, where $a'$ is the lexicographically-largest character such that $a'\prec a$, or at the first position in $\mathtt{LEX}_i$ if such a character $a'$ does not exist).
This shows that invariant {\bf 2} is maintained: the Wheeler properties are kept true among nodes in  $\mathtt{LEX}_{i+1}$. It is also clear that we do not insert $\approx$-equivalent adjacent states with the same incoming label (see Figure \ref{fig:make WDAG}: by construction, the newly-inserted nodes $v_1, z'_3, v_2, z''_3, v_3$ are non-equivalent to their neighbors), i.e. invariant {\bf 3} is maintained. 
Finally, the accepted language does not change since the splitting process generates $\approx$-equivalent nodes: also invariant {\bf 1} stays true.

Note that the minimization process on the original acyclic DFA $\mathcal A$ takes linear time. After that, we only insert edges/nodes in the minimum output WDFA: never delete. 
It follows that the number of performed operations is equal to the output's size. 
The final automaton could be either smaller or exponentially-larger than $\mathcal A$.
We note that all the discussed operations can be easily implemented in logarithmic time using the data structures discussed in Section \ref{sec:data structures}: finding the $q$ runs of states $[u_{i_j}, u'_{i_j}]$, as well as finding the $O(k)$ states $w_i$, requires executing a constant number of \emph{rank} operations on sequence $\mathtt{OUT}$ and \emph{start} operations on $\mathtt{IN}$ for each predecessor of $v$.
Nodes can be inserted at the right position in sequence $\mathtt{LEX}$ exactly as done in Algorithm \ref{alg:step} (by also updating $\mathtt{IN}$ and $\mathtt{OUT}$). Finally, the graph can be dynamically updated (i.e. splitting nodes) and queried (i.e. navigation) by keeping it as a dynamic adjacency list: since we can spend logarithmic time per edge, we can store the graph as a self-balancing tree associating nodes to their predecessors and successors (also kept as self-balancing trees). This structure supports all updates and queries on the graph in logarithmic time. It follows that the overall procedure terminates in $O(n+m\log m)$ time, $n$ and $m$ being the input and output's sizes, respectively.

\begin{figure}
	\begin{center}
		\includegraphics[scale=0.5]{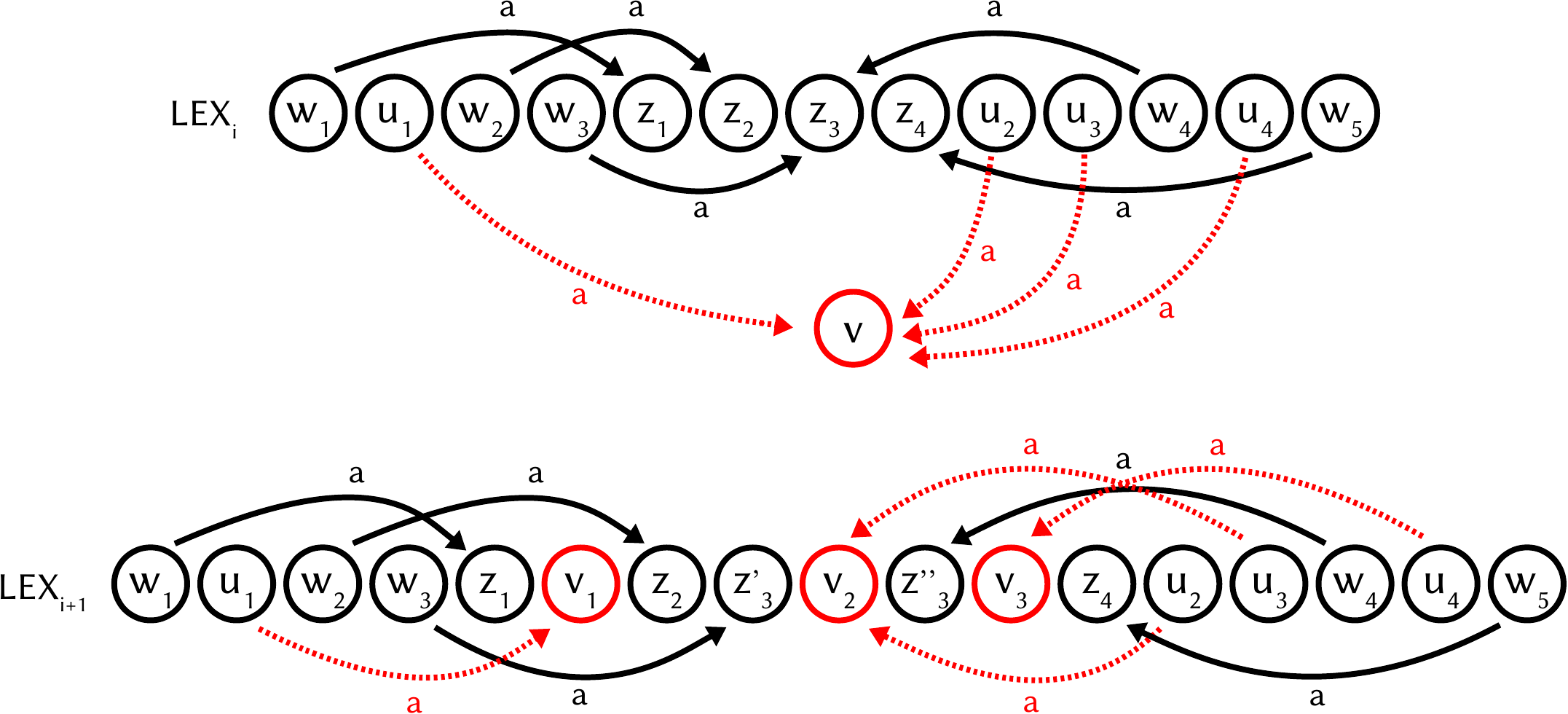}
	\end{center}
	\caption{Inconsistency resolution. 
		Nodes are ordered left-to-right by the total ordering $<$ (except $v$ in the top part of the figure).
		Top: we are trying to insert $v$ in $\mathtt{LEX}_i$, but this violates the Wheeler properties (edges' destinations are not ordered as the sources, no matter where we insert $v$). Bottom: we solve the inconsistencies by splitting $v$ in three equivalent nodes $v_1 \approx v_2 \approx v_3$ and $z_3$ in two equivalent nodes $z_3' \approx z_3''$. Note that (i) the splitting procedure induces naturally an ordering of the nodes that satisfies the Wheeler properties, and (ii) after splitting, no two adjacent states with the same incoming label are equivalent by $\approx$. By Theorem \ref{thm:characterization minimum}, this is the minimum way of splitting nodes. For simplicity, in the figure nodes $z_1,\dots, z_4$ do not coincide with any node $w_1, \dots, w_5$ or $u_1, \dots, u_4$. This may not necessarily be the case. In our full proof in Appendix \ref{app:DFA->WDFA} we show that our procedure is correct even when this happens.}\label{fig:make WDAG}
\end{figure}

\section{Worst case blowup from an acyclic DFA to the minimum equivalent WDFA}\label{sec:worst case blow up}

In this section we show that the running time of the conflict resolution algorithm in Section $\ref{sec:ADFA->WADFA}$ is exponential in the worst case, i.e. there exists a family of regular languages where the size of the smallest WDFA is exponential in the size of the smallest DFA. We now show that one such family the sequence of languages $L_1,L_2,\ldots$, where $L_m = \{ c \alpha e \; | \; \alpha \in \{a,b\}^m \} \cup \{ d \alpha f \; | \; \alpha \in \{a,b\}^m \}$.

For an example, Figure \ref{fig:dfa_wdfa_worst_case2} shows a DFA and the smallest WDFA for the language $L_3$. In general, we can build a DFA for $L_m$ by generalizing the construction in the figure: the source node has outgoing edges labeled with $c$ and $d$, followed by simple linear size "universal gadgets" capable of generating all binary strings of length $m$, with one gadget followed by an $e$ and the other by an $f$. The two sink states are the only accepting states.

The smallest WDFA for $L_m$ is an unraveling of the described DFA, such that all paths up to (but not including) the sinks end up in distinct nodes, i.e. the universal gadgets are replaced by full binary trees (see Figure \ref{fig:dfa_wdfa_worst_case2}). It is easy to see that the automaton is Wheeler as the only nodes that have multiple incoming paths are the sinks, and the sinks have unique labels.

To prove that this is the minimal WDFA, we need to check the condition of Theorem \ref{thm:characterization minimum}, i.e. that all colexicographically consecutive pairs of nodes with the same incoming label are Myhill-Nerode inequivalent. As labels $c,d,e$ and $f$ occur only once, it is enough to focus on nodes that have label $a$ or $b$. Let $B_1, B_2, B_{2^{m+1}-1}$ be the colexicographically sorted sequence of all possible binary strings with lengths $1 \leq |B_i| \leq m$ from the alphabet $\{a,b\}$. Observe that the nodes with incoming label $a$ and $b$ correspond to path labels of the form $c B_i$ and $dB_i$ for all $1 \leq i \leq 2^{m+1}-1$. The colexicographically sorted order of these path labels is:
$$c B_1 < d B_1 < c B_2 < d B_2 < \ldots < c B_{2^{m+1}-1} < d B_{2^{m+1}-1} $$
Here we can see that all consecutive pairs have a different first character, and therefore they lead to a different sink in the construction, and hence they are not Myhill-Nerode equivalent. Therefore the automaton is the minimum WDFA.

The DFA has $n = 4m + 5$ states and the WDFA has $1 + 2^{m+2} = 1 + 2^{(n-5)/4 + 2}$ states, so we obtain the following result:

\begin{theorem}
The minimal WDFA equivalent to an acyclic DFA with $n$ states has $\Omega(2^{n/4})$ states in the worst case.
\end{theorem}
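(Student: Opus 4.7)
The plan is to exhibit a family of regular languages $\{L_m\}_{m\ge 1}$ such that each $L_m$ admits a DFA of size linear in $m$ but every equivalent WDFA has size exponential in $m$. Following the construction sketched in the excerpt, define
\[
L_m \;=\; \{\,c\alpha e \;:\; \alpha\in\{a,b\}^m\,\} \;\cup\; \{\,d\alpha f \;:\; \alpha\in\{a,b\}^m\,\}.
\]
First I would describe an explicit DFA $\mathcal A_m$ recognizing $L_m$: from the source, two outgoing edges labeled $c$ and $d$ enter two parallel ``universal gadgets'', each a linear chain of $m$ states where each state branches to the next one on both $a$ and $b$. The $c$-chain terminates with an $e$-edge into an accepting sink, the $d$-chain with an $f$-edge into a second accepting sink. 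A direct count gives $|\mathcal A_m| = 4m+5$ states, and by inspection $\mathcal L(\mathcal A_m) = L_m$.

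Next I would define the candidate minimum WDFA $\mathcal W_m$ by unraveling $\mathcal A_m$ so that each of the two branches becomes a complete binary tree of depth $m$, with the $2^m$ leaves of the $c$-tree all joined to a single $e$-sink, and the $2^m$ leaves of the $d$-tree all joined to a single $f$-sink. The total number of states is $1 + 2(2^{m+1}-1) + 2 = 2^{m+2}+1$. I would then verify that $\mathcal W_m$ is Wheeler by giving an explicit ordering: first the source, then all nodes with incoming label $a$ interleaved with those with incoming label $b$ in co-lexicographic order of the unique string labeling each $s\rightsquigarrow v$ path, then the $e$-sink, then the $f$-sink. Since every non-sink node is reached by a unique path, the Wheeler properties (i) and (ii) reduce to the obvious observation that the induced order on these unique paths is co-lexicographic.

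The main obstacle will be showing that $\mathcal W_m$ is actually minimum, which I would do via Theorem \ref{thm:characterization minimum}: for every co-lexicographically consecutive pair of nodes $v_i, v_{i+1}$ with $\lambda(v_i)=\lambda(v_{i+1})$ it must hold that $v_i \not\approx v_{i+1}$. Let $B_1 \prec B_2 \prec \dots \prec B_{2^{m+1}-1}$ enumerate the non-empty binary strings of length at most $m$ in co-lex order. The non-sink, non-source nodes of $\mathcal W_m$ are in bijection with path labels of the form $cB_i$ and $dB_i$, and a straightforward case analysis shows that the co-lex sorted list of these labels is
\[
cB_1 \prec dB_1 \prec cB_2 \prec dB_2 \prec \dots \prec cB_{2^{m+1}-1} \prec dB_{2^{m+1}-1}.
\]
Thus any two consecutive nodes with the same incoming label (which must both end in $a$ or both end in $b$) have path labels starting with different characters in $\{c,d\}$; since $c$-paths can only reach acceptance via $e$ while $d$-paths can only reach acceptance via $f$, these consecutive nodes are Myhill-Nerode inequivalent. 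By Theorem \ref{thm:characterization minimum}, $\mathcal W_m$ is the minimum WDFA for $L_m$.

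Finally, combining the bounds gives $n = 4m+5$ states for the input DFA and at least $2^{m+2}+1 = \Omega(2^{n/4})$ states for the minimum equivalent WDFA, which is the claimed lower bound. \qed
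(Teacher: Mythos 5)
Your proof is correct and follows essentially the same route as the paper's: the same family $L_m$, the same linear-size DFA and tree-unraveled WDFA with $2^{m+2}+1$ states, and the same appeal to Theorem \ref{thm:characterization minimum} via the interleaved co-lexicographic order $cB_1 \prec dB_1 \prec cB_2 \prec dB_2 \prec \cdots$ of the path labels. (One shared minor imprecision: the enumeration $B_1,\dots,B_{2^{m+1}-1}$ must be taken in the co-lex order of the strings $cB_i$ rather than of the $B_i$ themselves, since e.g.\ $a \prec aa$ but $caa \prec ca$; the interleaving, and hence the argument, still goes through because the relative order of $\{cB,dB\}$ versus $\{cB',dB'\}$ is uniform for $B\neq B'$.)
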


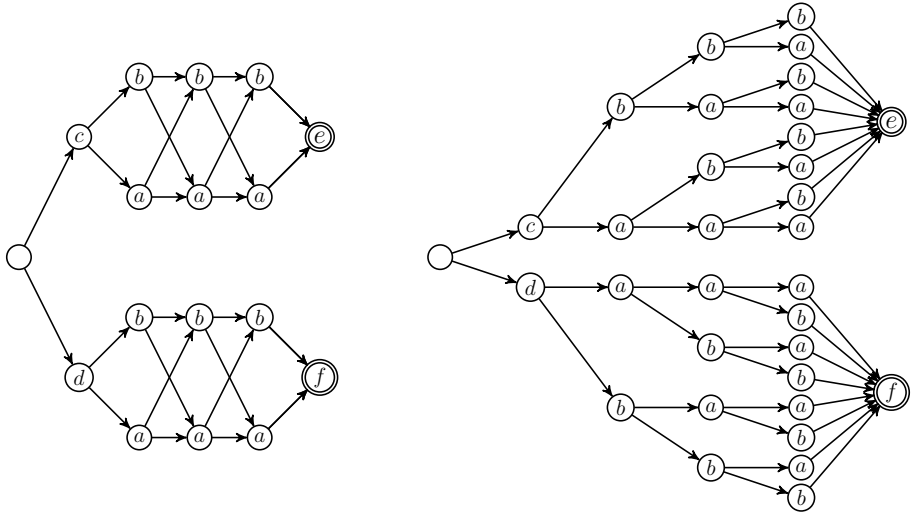
\begin{figure}

\centering

\begin{tikzpicture}[->,>=stealth', semithick, auto, scale=0.8]
  \tikzstyle{every state}=[scale=0.4]


 \node[state, accepting, label=above:{}] (T21)    at (-4,0+2)		    {\Huge $e$};
 
 
 \node[state, label=above:{}] (T18) at (-5,-1+2) 		{\huge $a$};
 \node[state, label=above:{}] (T17) at (-5,1+2)   		{\huge $b$};  
 
 \node[state, label=above:{}] (T16) at (-6,-1+2) 		{\huge $a$};
 \node[state, label=above:{}] (T15) at (-6,1+2)   		{\huge $b$};  

 \node[state, label=above:{}] (T14) at (-7,-1+2) 		{\huge $a$};
 \node[state, label=above:{}] (T13) at (-7,1+2)   		{\huge $b$};  
 
 \node[state, label=above:{}] (T12) at (-8,-0+2)   		{\huge $c$};

 
 \node[state, accepting, label=above:{}] (T10)    at (-4,-4+2)	    {\huge $f$};
 
 
 \node[state, label=above:{}] (T7) at (-5,-5+2) 		{\huge $a$};
 \node[state, label=above:{}] (T6) at (-5,-3+2)   		{\huge $b$};  
 
 \node[state, label=above:{}] (T5) at (-6,-5+2) 		{\huge $a$};
 \node[state, label=above:{}] (T4) at (-6,-3+2)   		{\huge $b$};  

 \node[state, label=above:{}] (T3) at (-7,-5+2)    		{\huge $a$};
 \node[state, label=above:{}] (T2) at (-7,-3+2)   		{\huge $b$};  

 \node[state, label=above:{}] (T1) at (-8,-4+2)   		{\huge $d$};

 \node[state, label=above:{}] (T0) at (-9,-2+2)   		{};  
 
 \draw (T0) edge node {} (T1);
 
 \draw (T1) edge node {} (T2);
 \draw (T1) edge node {} (T3);
 
 \draw (T2) edge node {} (T4);
 \draw (T2) edge node {} (T5);
 \draw (T3) edge node {} (T4);
 \draw (T3) edge node {} (T5); 

 \draw (T4) edge node {} (T6);
 \draw (T4) edge node {} (T7);
 \draw (T5) edge node {} (T6);
 \draw (T5) edge node {} (T7);  

 \draw (T6) edge node {} (T10);
 \draw (T6) edge node {} (T10);
 \draw (T7) edge node {} (T10);
 \draw (T7) edge node {} (T10);  
 


 \draw (T0) edge node {} (T12);
 
 \draw (T12) edge node {} (T13);
 \draw (T12) edge node {} (T14);
 
 \draw (T13) edge node {} (T15);
 \draw (T13) edge node {} (T16);
 \draw (T14) edge node {} (T15);
 \draw (T14) edge node {} (T16); 

 \draw (T15) edge node {} (T17);
 \draw (T15) edge node {} (T18);
 \draw (T16) edge node {} (T17);
 \draw (T16) edge node {} (T18);  

 \draw (T17) edge node {} (T21);
 \draw (T17) edge node {} (T21);
 \draw (T18) edge node {} (T21);
 \draw (T18) edge node {} (T21);  
 

 
 \node[state, label=above:{}] (S0) at (1-3,0)   		{};  
 
 \node[state, label=above:{}] (S11) at (2.5-3,1/2)   		{\huge $c$};
   \node[state, label=above:{}] (S21) at (4-3,1/2)   		{\huge $a$};
   \node[state, label=above:{}] (S22) at (4-3,5/2)   		{\huge $b$};
     \node[state, label=above:{}] (S31) at (5.5-3,1/2)   		{\huge $a$};
     \node[state, label=above:{}] (S32) at (5.5-3,3/2)   		{\huge $b$};
     \node[state, label=above:{}] (S33) at (5.5-3,5/2)   		{\huge $a$};
     \node[state, label=above:{}] (S34) at (5.5-3,7/2)   		{\huge $b$};
       \node[state, label=above:{}] (S41) at (7.0-3,1/2)   		{\huge $a$};
       \node[state, label=above:{}] (S42) at (7.0-3,2/2)   		{\huge $b$}; 
       \node[state, label=above:{}] (S43) at (7.0-3,3/2)   		{\huge $a$}; 
       \node[state, label=above:{}] (S44) at (7.0-3,4/2)   		{\huge $b$}; 
       \node[state, label=above:{}] (S45) at (7.0-3,5/2)   		{\huge $a$}; 
       \node[state, label=above:{}] (S46) at (7.0-3,6/2)   		{\huge $b$}; 
       \node[state, label=above:{}] (S47) at (7.0-3,7/2)   		{\huge $a$}; 
       \node[state, label=above:{}] (S48) at (7.0-3,8/2)   		{\huge $b$};
         \node[state, accepting, label=above:{}] (Se) at (8.5-3,4.5/2)   		{\Huge $e$};
 
  \node[state, label=above:{}] (D11) at (2.5-3,-1/2)   		{\huge $d$};
   \node[state, label=above:{}] (D21) at (4-3,-1/2)   		{\huge $a$};
   \node[state, label=above:{}] (D22) at (4-3,-5/2)   		{\huge $b$};
     \node[state, label=above:{}] (D31) at (5.5-3,-1/2)   		{\huge $a$};
     \node[state, label=above:{}] (D32) at (5.5-3,-3/2)   		{\huge $b$};
     \node[state, label=above:{}] (D33) at (5.5-3,-5/2)   		{\huge $a$};
     \node[state, label=above:{}] (D34) at (5.5-3,-7/2)   		{\huge $b$};
       \node[state, label=above:{}] (D41) at (7.0-3,-1/2)   		{\huge $a$};
       \node[state, label=above:{}] (D42) at (7.0-3,-2/2)   		{\huge $b$}; 
       \node[state, label=above:{}] (D43) at (7.0-3,-3/2)   		{\huge $a$}; 
       \node[state, label=above:{}] (D44) at (7.0-3,-4/2)   		{\huge $b$}; 
       \node[state, label=above:{}] (D45) at (7.0-3,-5/2)   		{\huge $a$}; 
       \node[state, label=above:{}] (D46) at (7.0-3,-6/2)   		{\huge $b$}; 
       \node[state, label=above:{}] (D47) at (7.0-3,-7/2)   		{\huge $a$}; 
       \node[state, label=above:{}] (D48) at (7.0-3,-8/2)   		{\huge $b$};
         \node[state, accepting, label=above:{}] (Df) at (8.5-3,-4.5/2)   		{\huge $f$};

\draw (S0) edge node {} (S11);
  \draw (S11) edge node {} (S21);
  \draw (S11) edge node {} (S22);
    \draw (S21) edge node {} (S31);
    \draw (S21) edge node {} (S32);
    \draw (S22) edge node {} (S33);
    \draw (S22) edge node {} (S34);
      \draw (S31) edge node {} (S41);
      \draw (S31) edge node {} (S42);
      \draw (S32) edge node {} (S43);
      \draw (S32) edge node {} (S44);
      \draw (S33) edge node {} (S45);
      \draw (S33) edge node {} (S46);
      \draw (S34) edge node {} (S47);
      \draw (S34) edge node {} (S48);
        \draw (S41) edge node {} (Se);
        \draw (S42) edge node {} (Se);
        \draw (S43) edge node {} (Se);
        \draw (S44) edge node {} (Se);
        \draw (S45) edge node {} (Se);
        \draw (S46) edge node {} (Se);
        \draw (S47) edge node {} (Se);
        \draw (S48) edge node {} (Se);
        
\draw (S0) edge node {} (D11);
  \draw (D11) edge node {} (D21);
  \draw (D11) edge node {} (D22);
    \draw (D21) edge node {} (D31);
    \draw (D21) edge node {} (D32);
    \draw (D22) edge node {} (D33);
    \draw (D22) edge node {} (D34);
      \draw (D31) edge node {} (D41);
      \draw (D31) edge node {} (D42);
      \draw (D32) edge node {} (D43);
      \draw (D32) edge node {} (D44);
      \draw (D33) edge node {} (D45);
      \draw (D33) edge node {} (D46);
      \draw (D34) edge node {} (D47);
      \draw (D34) edge node {} (D48);
        \draw (D41) edge node {} (Df);
        \draw (D42) edge node {} (Df);
        \draw (D43) edge node {} (Df);
        \draw (D44) edge node {} (Df);
        \draw (D45) edge node {} (Df);
        \draw (D46) edge node {} (Df);
        \draw (D47) edge node {} (Df);
        \draw (D48) edge node {} (Df);

\end{tikzpicture}

\caption{Left: a DFA recognizing $L_3$. Right: the minimal WDFA recognizing $L_3$. For clarity the labels are drawn on the nodes: the label of an edge is the label of the destination node.}
\label{fig:dfa_wdfa_worst_case2}

\end{figure}

\bibliographystyle{plainurl}
\bibliography{recognition}

\end{document}